\documentclass[11pt]{amsart}

\usepackage{amsmath, amsthm, amssymb, amsfonts, enumerate,verbatim}
\usepackage[colorlinks=true,linkcolor=blue,urlcolor=blue]{hyperref}
\usepackage{dsfont}
\usepackage{color}
\usepackage{geometry}
\usepackage{todonotes}
\usepackage{bbm}

\geometry{hmargin=3.0cm, vmargin=2.5cm}

\newtheorem{theorem}{Theorem}[section]
\newtheorem{assumption}[theorem]{Assumption}
\newtheorem{lemma}[theorem]{Lemma}
\newtheorem{proposition}[theorem]{Proposition}
\newtheorem{corollary}[theorem]{Corollary}

\theoremstyle{definition}
\newtheorem{remark}[theorem]{Remark}

\title[Optimal Dividends with Capital Injections]{An Optimal Dividend Problem with Capital Injections\\ over a Finite Horizon}
\author[Ferrari, Schuhmann]{Giorgio Ferrari, Patrick Schuhmann}
\keywords{}
\address{G.~Ferrari: Center for Mathematical Economics (IMW), Bielefeld University, Universit\"atsstrasse 25, 33615, Bielefeld, Germany}
\email{\href{mailto:giorgio.ferrari@uni-bielefeld.de}{giorgio.ferrari@uni-bielefeld.de}}
\address{P.~Schuhmann: Center for Mathematical Economics (IMW), Bielefeld University, Universit\"atsstrasse 25, 33615, Bielefeld, Germany}
\email{\href{mailto:patrick.schuhmann@uni-bielefeld.de}{patrick.schuhmann@uni-bielefeld.de}}
\date{\today}

\numberwithin{equation}{section}

\begin{document}

\begin{abstract} 
In this paper we propose and solve an optimal dividend problem with capital injections over a finite time horizon. The surplus dynamics obeys a linearly controlled drifted Brownian motion that is reflected at the origin, dividends give rise to time-dependent instantaneous marginal profits, whereas capital injections are subject to time-dependent instantaneous marginal costs. The aim is to maximize the sum of a liquidation value at terminal time and of the total expected profits from dividends, net of the total expected costs for capital injections. Inspired by the study of El Karoui and Karatzas \cite{EKK1989} on reflected follower problems, we relate the optimal dividend problem with capital injections to an optimal stopping problem for a drifted Brownian motion that is absorbed at the origin. We show that whenever the optimal stopping rule is triggered by a time-dependent boundary, the value function of the optimal stopping problem gives the derivative of the value function of the optimal dividend problem. Moreover, the optimal dividend strategy is also triggered by the moving boundary of the associated stopping problem. The properties of this boundary are then investigated in a case study in which instantaneous marginal profits and costs from dividends and capital injections are constants discounted at a constant rate.
\end{abstract}

\maketitle

\smallskip

{\textbf{Keywords}}: optimal dividend problem; capital injections; singular stochastic control; optimal stopping; free boundary.

\smallskip

{\textbf{MSC2010 subject classification}}: 93E20, 60G40, 62P05, 91G10, 60J65

\section{Introduction}
\label{sec:intro}

The literature on optimal dividend problems started in 1957 with the work of de Finetti \cite{deFinetti}, where, for the first time, it was proposed to measure an insurance portfolio by the discounted value of its future dividends' payments. Since then, the literature in Mathematics and Actuarial Mathematics experienced many scientific contributions on the optimal dividend problem, which has been typically modeled as a stochastic control problem subject to different specifications of the control processes and of the surplus dynamics (see, among many others, the early work by Jeanblanc-Piqu\'e and Shiryaev \cite{JS}, the more recent works by Akyildirim et al.\ \cite{Soneretal}, De Angelis and Ekstr\"om \cite{DeAngelis} and Jiang and Pistorius \cite{Pistorius}, the review by Avanzi \cite{Avanzi}, and the book by Schmidli \cite{schmidli2008}).

Starting from the observation that ruin occurs almost surely when the fund's manager pays dividends by following the optimal strategy of de Finetti's problem, Dickson and Waters proposed in \cite{DiksonWaters} several modifications to the original formulation of the optimal dividend problem. In particular, in \cite{DiksonWaters} a model has been suggested in which the shareholders are obliged to inject capital in order to avoid bankruptcy. This is the so-called \emph{optimal dividend problem with capital injections}.

The literature on the optimal dividend problem with capital injections is not as rich as that on the classical de Finetti's problem. In Kulenko and Schmidli \cite{KulenkoSchmidli}, the authors study an optimal dividend problem with capital injections in which the surplus process is reflected at the origin, and on $(0,\infty)$ evolves according to a classical Cram\'er-Lundberg risk model. In Schmidli \cite{Schmidli2016}, an optimal dividend problem with capital injections and taxes in a diffusive setting is formulated and solved. In Lokka and Zervos \cite{LOKKA2008954} the shareholders can choose the capital injections' policy and, in absence of any interventions, the surplus process follows a Brownian motion with drift. Other works in which the surplus process evolves as a general one-dimensional diffusion are the ones by Ferrari \cite{Ferrari17}, Zhu and Yang \cite{Yang}, and Shreve et al.\ \cite{Shreveetal}. Optimal dividends and capital injections in a jump-diffusion setting are determined by Avanzi et al.\ in \cite{Avanzi2007}. In all those papers the optimal dividend problem with capital injections is formulated as a singular stochastic control problem over an infinite time horizon. Given the stationarity of the setting, in those works it is shown that (apart a possible initial lump sum payment) it is usually optimal to pay just enough dividends in order to keep the surplus process in the interval $[0,b]$, for some constant $b>0$ endogenously determined.

In this paper we propose and solve, for the first time in the literature, an optimal dividend problem with capital injections over a finite time horizon $T\in(0,\infty)$. This horizon might be seen as a pre-specified future date at which the fund is liquidated. 

As is common in the literature (see \cite{Soneretal}, \cite{DeAngelis}, and \cite{LOKKA2008954}, among many others), in absence of any interventions, the surplus process evolves as a Brownian motion with drift $\mu$ and volatility $\sigma$. This dynamics for the fund's value can be obtained as a suitable (weak) limit of a classical dynamics \`a la Cram\'er-Lundberg (see Appendix D.3 in \cite{schmidli2008} for details). We also assume that, after time-dependent transaction costs/taxes have been paid, shareholders receive a time-dependent instantaneous net proportion of leakages $f$ from the surplus. Moreover, shareholders are \emph{forced} to inject capital whenever the surplus attempts to become negative, and injecting capital they incur a time-dependent marginal administration cost $m$. Finally, a surplus-dependent liquidation reward $g$ is obtained at liquidation time $T$. Notice, that, under suitable requirements on $f$, $m$ and $g$ (see Remark \ref{remark on dividends 1}), injecting capital at the origin turns out to be optimal within the class of dividends/capital injections that keep the surplus nonnegative for any time with probability one (see also \cite{KulenkoSchmidli}, \cite{ScheerSchmidli2011}, and \cite{Schmidli2016}).

Within this setting, the fund's manager takes the point of view of the shareholders and thus aims at solving 
\begin{equation}
\label{introduction control problem}
V(t,x) := \sup_{D} \mathbb{E}\bigg[\int_0^{T-t} f(t+s) ~dD_s - \int_0^{T-t} m(t+s) ~dI^{D}_s +  g(T,X^{D}_{T-t}(x)) \bigg],
\end{equation} 
for any initial time $t \in [0,T]$ and any initial value of the fund $x \in \mathbb{R}_+$. In \eqref{introduction control problem} the fund's value evolves as
$$X^{D}_s(x)= x + \mu s + \sigma W_s - D_s + I^{D}_s,\quad s \geq 0,$$
and the optimization is performed over a suitable class of nondecreasing processes $D$. In fact, the quantity $D_s$ represents the cumulative amount of dividends paid to shareholders up to time $s$, whereas $I^{D}_s$ is the cumulative amount of capital injected by the shareholders up to time $s$. We take $I^{D}$ as the minimal nondecreasing process which ensures that $X^{D}$ stays nonnegative, and it is flat off $\{t\geq 0:\, X_t^D=0\}$.

If we attempt to tackle problem \eqref{introduction control problem} via a dynamic programming approach, we will find that the dynamic programming equation for $V$ takes the form of a parabolic partial differential equation (PDE) with gradient constraint (i.e.\ a variational inequality), and with a Neumann boundary condition at $x=0$ (the latter is due to the fact that the state process $X$ is reflected at the origin through the capital injections process). Proving that a solution to this PDE problem has enough regularity to characterize an optimal control is far from being trivial. 

Starting from the observation that the optimal dividend problem with capital injections \eqref{introduction control problem} is actually a reflected follower problem (see, e.g., Baldursson \cite{Baldursson}, El Karoui and Karatzas \cite{Karoui1988}, and Karatzas and Shreve \cite{karatzasshrevereflected}) with costly reflection at the origin, and inspired by the results of El Karoui and Karatzas in \cite{EKK1989}, here we solve \eqref{introduction control problem} without relying on PDE methods, but relating \eqref{introduction control problem} to a (still complex but) more tractable optimization problem; i.e., to an optimal stopping problem with absorption at the origin and with value function $u$ (cf.\ \eqref{Stopping problem} below). In this auxiliary optimal stopping problem, the functions $f$, $m$, and $g_x$ give the payoff of immediate stopping, the payoff from absorption at the origin, and the final reward, respectively.

Then, if the optimal stopping time for that problem is given in terms of a continuous and strictly positive time-dependent boundary $b(\,\cdot\,)$ (cf.\ the structural Assumption \ref{Assumption on regions and stopping time} below), one has that $V_x=u$, and the optimal dividends' payments strategy $D^{\star}$ is triggered by $b$ (see Theorem \ref{main theorem} below). In fact, if the optimization starts at time $t \in [0,T]$, the couple $(D^{\star}, I^{D^{\star}})$ keeps at any instant in time $s\in [0,T-t]$ the optimally controlled fund's value $X^{D^{\star}}_s$ nonnegative and below the time-dependent critical level $b(s+t)$.

This result is obtained via an almost exclusively probabilistic study in which we suitably integrate in the space variable two different representations of the value function $u$ of the auxiliary optimal stopping problem. It is worth noticing that although we borrow arguments from the study in \cite{EKK1989} on the connection between reflected follower problems and questions of optimal stopping (see also \cite{karatzasshrevereflected}), differently to \cite{EKK1989}, in our performance criterion \eqref{introduction control problem} we also have a cost of reflection and this requires a careful and not immediate adaptation of the ideas and results of \cite{EKK1989}.

We then show that the structural Assumption \ref{Assumption on regions and stopping time}, needed to establish the relation between \eqref{introduction control problem} and the optimal stopping problem, does indeed hold in a canonical formulation of the optimal dividend problem with capital injections in which marginal benefits and costs are constants discounted at a constant rate, and the liquidation value at time $T$ is proportional to the terminal value of the fund. In particular, we show that the optimal dividend strategy is given in terms of an optimal boundary $b$ that is decreasing, continuous, bounded, and null at terminal time. To the best of our knowledge, also this result appears here for the first time.

The rest of the paper is organized as follows. In Section \ref{sec:problem} we set up the problem, and in Section \ref{sec:mainresult} we state the connection between \eqref{introduction control problem} and the optimal stopping problem with absorption. Its proof is then performed in Section \ref{sec:proofmain}. In Section \ref{sec:casestudy} we consider the case study with (discounted) constant marginal benefits and costs, whereas in the Appendices we collect the proofs of some results needed in the paper.


\section{Problem Formulation}
\label{sec:problem}

In this section we introduce the optimal dividend problem that is the object of our study. 
Let $\left(\Omega, \mathcal{F}, \mathbb{P}\right)$ be a complete probability space with a filtration $\mathbb{F}:=(\mathcal{F}_s)_{s \geq 0}$ which satisfies the usual conditions.
We assume that the fund's value is described by the one-dimensional process 
\begin{equation}
\label{state space variable}
X^{D}_s(x)= x + \mu s + \sigma W_s - D_s + I^{D}_s,\quad  s \geq 0, 
\end{equation}
where $x\geq 0$ is the initial value of the fund, $\mu \in \mathbb{R}$, $\sigma>0$, and $W$ is an $\mathbb{F}$-standard Brownian motion. For any $s \geq 0$, $D_s$ represents the cumulative amount of dividends paid to shareholders up to time $s$, whereas $I^{D}_s$ is the cumulative amount of capital injected by the shareholders up to time $s$ in order to avoid bankruptcy of the fund. 

Define the (nonempty) set 
\begin{align*}
\mathcal{A}=\bigg\{ &\nu: \Omega \times \mathbb{R}_+ \to \mathbb{R}_+, \mathbb{F}-\text{adapted s.t. $s \mapsto \nu_s(\omega)$ is a.s.} \\
&\text{nondecreasing and left-continuous, and $\nu_0=0$ a.s.} \bigg\}. 
\end{align*}
For fixed $x \geq 0$, we assume that the fund's manager can pick a dividends' distribution strategy among the processes $D \in \mathcal{A}$ and such that a.s.
\begin{equation}\label{D not jumps under the origin}
D_{s+}-D_s \leq X_s^D(x) \quad \text{ for all } s\geq 0;
\end{equation}
that is, bankruptcy cannot be obtained with a single lump sum dividend's payment.
For any such dividend policy $D$, the capital injections process $I^D$ is given as the minimal cumulative amount of capital needed to ensure that $X^D(x)$ stays nonnegative, and which is flat off $\{t\geq 0: X_t^D(x)=0\}$. In particular, for $x \geq 0$, we take the couple $(X^D(x),I^D)$ as the unique solution to the (discontinuous) Skorokhod reflection problem (see, e.g., Chaleyat-Maurel et al.\ \cite{ElKarouiChaleyat} and Ma \cite{Ma93}):
\begin{equation} \label{Find X and I}
\text{ Find } (X^D(x),I^D) \text{ s.t. } 
\begin{cases}
\displaystyle{I^D \in ~\mathcal{A}, \quad X^{D}_s(x) = x + \mu s + \sigma W_s -D_s+I^{D}_s, ~~ s \geq 0, } \vspace{0.1cm}\\
\displaystyle{X^{D}_s(x) \geq 0 \quad \text{a.s. for any} ~ s\geq 0, } \vspace{0.1cm} \\
\displaystyle{\int_0^{\infty} X^D_s(x) d(I^{D}_s)^c =0 \quad \text{a.s.},} \vspace{0.1cm} \\
\displaystyle{ \Delta  I^{D}_s:=I^{D}_{s+} -I^{D}_s=2X^D_{s+}(x) \quad \forall s \in \{s\geq 0: \Delta I^{D}_s >0\}.}
\end{cases}
\end{equation}
Here, $(I^{D})^c$ denotes the continuous part of $I^{D}$. Notice that, given \eqref{D not jumps under the origin}, the process 
\begin{equation}\label{def of I without dividends}
I_t^D := 0 \vee \sup_{0 \leq s \leq t} \left( D_s-(x+ \mu s+ \sigma W_s)\right), \quad t\geq0, \quad I_0^D=0,
\end{equation}
uniquely solves \eqref{Find X and I} and $t \mapsto I_t^D$ is continuous (see, e.g., Propositions 2 and 3 in \cite{ElKarouiChaleyat}, or Theorem 3.1 and Corollary 3.2 in \cite{karatzasshrevereflected}). As a consequence, the last condition in \eqref{Find X and I} is not binding, since $\Delta I_t^D=0$ a.s.\ for all $t \geq 0$.

Given a time horizon $T \in (0,\infty)$ representing, e.g., a finite liquidation time, the fund's manager takes the point of view of the shareholders, and is faced with the problem of choosing a dividends' distribution strategy $D$ maximizing the performance criterion
\begin{equation} \label{functional of the optimal control problem}
\mathcal{J}(D;t,x)=\mathbb{E}\left[\int_0^{T-t} f(t+s) ~dD_s - \int_0^{T-t} m(t+s) ~dI^{D}_s +  g(T,X^{D}_{T-t}(x)) \right],
\end{equation}
for $(t,x)\in [0,T]\times \mathbb{R}_+$ given and fixed. That is, the fund's manager aims at solving
\begin{equation}\label{value function of ocp}
V(t,x):=\sup_{D\in \mathcal{D}(t,x)} \mathcal{J}(D;t,x), \quad (t,x) \in [0,T] \times \mathbb{R}_+.
\end{equation}
Here, for any $(t,x)\in [0,T]\times \mathbb{R}_+$, $\mathcal{D}(t,x)$ denotes the class of dividend payments belonging to $\mathcal{A}$ and satisfying \eqref{D not jumps under the origin}, when the surplus process $X^D$ starts from level $x$ and the optimization runs up to time $T-t$. In the following, any $D \in \mathcal{D}(t,x)$ will be called \emph{admissible} for $(t,x) \in [0,T] \times \mathbb{R}_+$.

In the reward functional \eqref{functional of the optimal control problem} the term $\mathbb{E}[ \int_0^{T-t} f(t+s)~dD_s ]$ is the total expected cash-flow from dividends. The function $f$ might be seen as a time-dependent instantaneous net proportion of leakages from the surplus received by the shareholders after time-dependent transaction costs/taxes have been paid.
The term $\mathbb{E}[ \int_0^{T-t} m(t+s)~dI^D_s ]$ gives the total expected costs of capital injections, and $m$ is a time-dependent marginal administration cost for capital injections. Finally, $\mathbb{E}\left[ g(T,X^D_{T-t}(x)) \right]$ is a liquidation value.

The functions $f$, $m$, and $g$ satisfy the following conditions.
\begin{assumption}
\label{basic assumptions}
$f:[0,T] \to \mathbb{R}_+, m:[0,T] \to \mathbb{R}_+, g:[0,T] \times \mathbb{R}_+ \to \mathbb{R}_+$ are continuous, $f$ and $m$ are continuously differentiable with respect to $t$, and $g$ is continuously differentiable with respect to $x$. Moreover,
\begin{itemize}
\item[(i)] $g_x(T,x)\geq f(T) \quad \mbox{for any}\,\,x \in (0,\infty)$,
\item[(ii)] $m(t) > f(t) \quad \mbox{for any}\,\,t \in [0,T]$.
\end{itemize}
\end{assumption}

\begin{remark}
Requirement $(i)$ ensures that the marginal liquidation value is at least as high as the marginal profits from dividends. This will ensure that the value function of the optimal stopping problem considered below is not discontinuous at terminal time.

Condition $(ii)$ means that the marginal costs for capital injections are bigger than the marginal profits from dividends. Notice that in the case in which $m<f$ the value function might be infinite, as it shown in the next example.
Take $f(s)=\eta,~m(s)=\kappa$ for all $s \in [0,T]$, and $\eta > \kappa$. For arbitrary $\beta > 0$ consider the admissible strategy $\widehat{D}_s:=\beta s$, and notice that $\widehat{I}_s^D=\sup_{0 \leq u \leq s} (-x-\mu u - \sigma B_u +\beta u)\vee 0$. Then $\widehat{I}_s^D\leq \beta s + Y_s$, with $Y_s:=\sup_{0 \leq u \leq s} (-x-\mu u - \sigma B_u)\vee 0$, and using that $g\geq0$ we obtain for the sub-optimal strategy $\widehat{D}$
\begin{align}
V(t,x)
&\geq \beta \eta (T-t)- \beta \kappa (T-t)-\kappa \mathbb{E}[Y_{T-t}] \nonumber \\
&=\beta (T-t) (\eta-\kappa)-\kappa \mathbb{E}[Y_{T-t}]. \nonumber
\end{align}
However, the latter expression can be made arbitrarily large by increasing $\beta$ if $\eta>\kappa$. 

On the other hand, by taking $m(t)=f(t)=e^{-rt}$, is has been recently shown in Ferrari \cite{Ferrari17} for a problem with $T=+\infty$ (see Theorem 3.8 therein) that an optimal control may not exist, but only an $\varepsilon$-optimal control does exist.

In order to avoid pathological situations as the ones described above, here we assume Assumption \ref{basic assumptions}-(ii).
\end{remark}

\begin{remark}
\label{rem:discountedform}
Notice that our formulation is general enough to accommodate also a problem in which profits and costs are discounted at a deterministic time-dependent discount rate $(r_s)_{s\geq 0}$. Indeed, if we consider the optimal dividend problem with capital injections
\begin{eqnarray}
\label{rem:eqrate}
\widehat{V}(t,x)
&\hspace{-0.25cm} := \hspace{-0.25cm} &\sup_{D \in \mathcal{D}(t,x)} \mathbb{E}\bigg[ \int_0^{T-t} e^{-\int_{t}^{t+s}r_{\alpha} d\alpha}\, \widehat{f}(t+s)~dD_s - \int_0^{T-t} e^{-\int_{t}^{t+s}r_{\alpha} d\alpha}\, \widehat{m}(t+s)~dI^{D}_s \nonumber \\
&& \hspace{2cm} +\, e^{-\int_t^T r_{\alpha} d\alpha}\widehat{g}(T,X_{T-t}^D(x)) \bigg], \nonumber 
\end{eqnarray}
then, for any $(t,x) \in [0,T] \times \mathbb{R}_+$ we can set 
$$f(t):= e^{-\int_{0}^{t}r_{\alpha} d\alpha}\, \widehat{f}(t), \quad m(t):= e^{-\int_{0}^{t}r_{\alpha} d\alpha}\, \widehat{m}(t), \quad g(t,x):=e^{-\int_0^t r_{\alpha} d\alpha}\widehat{g}(t,x),$$
and $V(t,x):=e^{-\int_{0}^{t}r_{\alpha} d\alpha} \,\widehat{V}(t,x)$ is of the form \eqref{value function of ocp}.

In Section \ref{sec:casestudy} we will consider a problem with constant marginal profits and costs discounted at a constant rate $r>0$ (see \eqref{ex:1}, \eqref{control problem in the example} and \eqref{ex:data} in Section \ref{sec:casestudy}).
\end{remark}

\begin{remark}
\label{remark on dividends 1}
Notice that in our model shareholders are \emph{forced} to inject capital whenever the surplus process attempts to become negative; that is, the capital injection process is not a control variable of their, and shareholders do not choose when and how to invest in the company. 

Injecting capital at the origin, under the condition that bankruptcy is not allowed, can be shown to be optimal in the canonical formulation of the optimal dividend problem of Section \ref{sec:casestudy} in which marginal costs and profits are constants discounted at a constant interest rate. Indeed, in such a case, due to discounting, shareholders will inject capital as late as possible in order to minimize the total costs of capital injections. See also Kulenko and Schmidli \cite{KulenkoSchmidli} and Schmidli \cite{Schmidli2016} for a similar result in stationary problems. More in general, the policy ``inject capital at the origin'' is optimal when $m$ is decreasing and $\min_{t \in [0,T]}m(t)>g_x(T,x)$ for all $x\in \mathbb{R}_+$. Under these conditions, shareholders postpone injection of capital, and inject only as much capital as necessary since any additional capital injection cannot be compensated by the reward at terminal time.
\end{remark}

The dynamic programming equation for $V$ takes the form of a parabolic partial differential equation (PDE) with gradient constraint, and with a Neumann boundary condition at $x=0$ (the latter is due to the fact that the state process $X$ is reflected at the origin through the capital injections process). Indeed, it reads
$$\max\Big\{\partial_t V + \frac{1}{2}\sigma^2 \partial_{xx}V + \mu \partial_x V, f - \partial_{x}V \Big\} =0, \quad \mbox{on}\quad [0,T) \times (0,\infty),$$
with boundary conditions $\partial_{x}V(0,t)=m(t)$ for all $t\in [0,T]$, and $V(T,x) = g(T,x)$ for any $x \in (0,\infty)$.
Proving that such a PDE problem admits a solution that has enough regularity to characterize an optimal control is far from being trivial. 

In order to solve the optimal dividend problem \eqref{value function of ocp} we then follow a different approach, and we relate \eqref{value function of ocp} to an optimal stopping problem with absorbing condition at $x=0$. This is obtained by borrowing arguments from the study of El Karoui and Karatzas in \cite{EKK1989} on the connection between reflected follower problems and questions of optimal stopping (see also Baldursson \cite{Baldursson} and Karatzas and Shreve \cite{karatzasshrevereflected}). However, differently to \cite{EKK1989}, in our performance criterion \eqref{functional of the optimal control problem} we also have a cost of reflection which requires a careful and not immediate adaptation of the ideas and results of \cite{EKK1989}. 

In particular, introducing a problem of optimal stopping with absorption at the origin, we show that a proper integration of the value function of the latter leads to the value function of the optimal control problem \eqref{value function of ocp}. This result is stated in the next section, and then proved in Section \ref{sec:proofmain}.


\section{The Main Result}
\label{sec:mainresult}

Let $S(x):=\inf\{s\geq 0: x + \mu s + \sigma W_s = 0\}$, $x\geq 0$, and for any $s\geq 0$, introduce the absorbed drifted Brownian motion 
\begin{equation} 
\label{absorbed process}
A_s(x):=
\begin{cases}
x + \mu s +\sigma W_s, &s<S(x),\\
\Delta, &s \geq S(x),
\end{cases}
\end{equation}
where $\Delta$ is a cemetery state isolated from $\mathbb{R}_+$ (i.e.\ $\Delta < 0$). 

Introducing the convention $g_x(T,\Delta):=0$, for $(t,x)\in [0,T]\times \mathbb{R}_+$, consider the optimal stopping problem 
\begin{equation}
\begin{aligned}
\label{Stopping problem}
u(t,x)
&:= \sup_{\tau \in [0,T-t]} \mathbb{E}\Big[f(t+\tau) \mathbbm{1}_{\left\{\tau < (T-t) \wedge S(x) \right\}}+m(t+S(x))  \mathbbm{1}_{\left\{\tau \geq S(x)\right\}} \\
&\hspace{1.5cm} +  g_x\big(T,x+\mu (T-t)+\sigma W_{T-t}\big) \mathbbm{1}_{\left\{\tau=T-t < S(x)\right\}} \Big]  \\
&=\sup_{\tau \in \Lambda(T-t)} \mathbb{E}\Big[f(t+\tau) \mathbbm{1}_{\left\{A_{\tau}(x) >0 \right\}} \mathbbm{1}_{\left\{\tau < T-t \right\}} +m(t+S(x))  \mathbbm{1}_{\left\{A_{\tau}(x)\leq 0\right\}}  \\
&\hspace{1.5cm} +  g_x\big(T,A_{T-t}(x)\big) \mathbbm{1}_{\left\{\tau=T-t\right\}}  \Big],  
\end{aligned}
\end{equation} 
where $\Lambda(T-t)$ denotes the set of all $\mathbb{F}$-stopping times with values in $[0,T-t]$ a.s. 
Problem \eqref{Stopping problem} is an optimal stopping problem for the absorbed process $A$.  

To establish the relation between \eqref{value function of ocp} and \eqref{Stopping problem} we need the following \emph{structural assumption}, which will be standing in this section and in Section \ref{sec:proofmain}. Its validity has to be verified on a case by case basis. In particular, it holds in the optimal dividend problem considered in Section \ref{sec:casestudy}. 

\begin{assumption} 
\label{Assumption on regions and stopping time}
Assume that the continuation region of the stopping problem \eqref{Stopping problem} is given by
\begin{equation}
\mathcal{C}:=\left\{(t,x)\in [0,T)\times (0,\infty): u(t,x)>f(t) \right\}=\left\{(t,x)\in [0,T) \times (0,\infty) : x<b(t)\right\}, \label{continuation region}
\end{equation}
and that its stopping region by
\begin{align}
\mathcal{S}
&:=\left\{(t,x)\in [0,T)\times (0,\infty): u(t,x)\leq f(t) \right\} \cup \big(\{T\} \times (0,\infty)\big) \nonumber \\
&=\left\{(t,x)\in [0,T)\times (0,\infty) : x\geq b(t)\right\}\cup \big(\{T\} \times (0,\infty)\big), \label{stopping region}
\end{align}
for a continuous function $b: [0,T) \to (0,\infty)$. We refer to the function $b$ as to the optimal stopping boundary of problem \eqref{Stopping problem}. Further, assume that the stopping time

\begin{equation}
\tau^{\star}(t,x):=\inf\{s \in [0,T-t): A_s(x) \geq b(t+s)\}\wedge (T-t) \label{optimal stopping time}
\end{equation}
(with the usual convention $\inf \emptyset = + \infty$) is optimal; that is,
\begin{align}
u(t,x)
&=  \mathbb{E}\Big[f(t+\tau^{\star}(t,x)) \mathbbm{1}_{\left\{\tau^{\star}(t,x) < (T-t) \wedge S(x) \right\}}+m(t+S(x))  \mathbbm{1}_{\left\{\tau^{\star}(t,x) \geq S(x)\right\}} \nonumber \\
&+  g_x(T,x+\mu (T-t)+\sigma W_{T-t}) \mathbbm{1}_{\left\{\tau^{\star}(t,x)=T-t < S(x)\right\}} \Big]. 
\label{OST with optimal stopping time}
\end{align}
\end{assumption}

For any $(t,x) \in [0,T] \times \mathbb{R}_+$, and with $b$ the optimal stopping boundary of problem \eqref{Stopping problem} (cf.\ Assumption \ref{Assumption on regions and stopping time}), we define the processes $I^{\star}(t,x)$ and $D^{\star}(t,x)$ through the system 
\begin{align}
\begin{cases}
\displaystyle D^{\star}_s(t,x):=\max\left\{0,\max_{0 \leq \theta \leq s}\big(x+\mu \theta + \sigma W_{\theta}+I^{\star}_{\theta}(t,x)-b(t+\theta)\big)\right\},  \vspace{0.17cm}
\\
\displaystyle I^{\star}_s(t,x):=\max\left\{0,\max_{0 \leq \theta \leq s}\big(-x-\mu \theta - \sigma W_{\theta}+D^{\star}_{\theta}(t,x)\big)\right\}, \label{system for I and D}
\end{cases}
\end{align}
for any $s \in [0,T-t]$, and with initial values $D^{\star}_0(t,x)=I^{\star}_0(t,x)=0$ a.s. The existence and uniqueness of the solution to system $\eqref{system for I and D}$ can be proved by an application of Tarski's fixed point theorem following arguments as those employed by Karatzas in the proof of Proposition 7 in Section 8 of \cite{karatzas83}. It can be easily shown from \eqref{system for I and D} and the positivity of $b$ that $D^{\star}$ satisfies \eqref{D not jumps under the origin}, and, consequently, that $I^{\star}$ has continuous paths. The latter property of $I^{\star}$ implies that $t \mapsto D_t^{\star}$ is continuous apart for a possible initial jump at time zero of amplitude $(x-b(t))^+$. We can now state the following result.

\begin{theorem}
\label{main theorem}
Let Assumption \ref{Assumption on regions and stopping time} hold. Then, the process $D^{\star}$ defined through \eqref{system for I and D} provides the optimal dividends' distribution policy, and the value function $V$ of \eqref{value function of ocp} is such that
\begin{equation}
\label{eq:main}
V(t,x)=V(t,b(t))-\int_x^{b(t)} u(t,y)~dy, \quad (t,x) \in [0,T] \times \mathbb{R}_+.
\end{equation}
Assume further that $\lim_{t \uparrow T}b(t)=:b(T) < \infty$. Then
\begin{align}
\label{eq:V such that one can compute it}
V(t,b(t)) &=-\mu \int_0^{T-t}f^{\prime}(t+s)s ~ds + \int_0^{T-t}f^{\prime}(t+s)b(t+s) ~ds  \nonumber \\
&+g(T,b(T))+f(T)\mu(T-t)+ f(t)b(t) -f(T)b(T).
\end{align}
\end{theorem}

Consistently with the result of El Karoui and Karatzas in \cite{EKK1989} (see also Karatzas and Shreve \cite{karatzasshrevereflected}), we find that also in our problem with costly reflection at the origin the value of an optimal stopping problem (namely, problem \eqref{Stopping problem}) gives the marginal value of the value function \eqref{value function of ocp}. The optimal stopping boundary $b$ thus triggers the timing at which it is optimal to pay an additional unit of dividends. Moreover, once the optimal stopping value function $u$ and its corresponding free boundary $b$ are known, \eqref{eq:main} and \eqref{eq:V such that one can compute it} provide a complete characterization of the optimal dividend problem's value function $V$. Notice that the condition $b(T)<\infty$ is satisfied in the case study of Section \ref{sec:casestudy}, where we actually prove that $b(T)=0$. The proof of Theorem \ref{main theorem} is quite lengthy and technical, and it is relegated to Section \ref{sec:proofmain}.


\section{On the Proof of Theorem \ref{main theorem}}
\label{sec:proofmain}

This section is entirely devoted to the proof of Theorem \ref{main theorem}. This is done through a series of intermediate results which are proved by employing mostly probabilistic arguments. Assumption \ref{Assumption on regions and stopping time} will be standing throughout this section.

\subsection{On a Representation of the Optimal Stopping Value Function}

Here we derive an alternative representation for the value function of the optimal stopping problem \eqref{Stopping problem}, by borrowing ideas from El Karoui and Karatzas \cite{EKK1989}, Section 3. In the following we set $g_x(T,\Delta)=0$. 

The idea that we adopt here is to rewrite the optimal stopping problem \eqref{Stopping problem} in terms of the function $b$ of Assumption \ref{Assumption on regions and stopping time}.
To accomplish that, for given $(t,x) \in [0,T] \times \mathbb{R}_+$, define the payoff associated to the admissible stopping rule ``never stop" as
\begin{equation}
G(t,x):=\mathbb{E}\left[m(t+S(x)) \mathbbm{1}_{\{S(x) \leq T-t\}} + g_x(T,A_{T-t}(x)) \right], \label{definition of G}
\end{equation}
where we have used that $g_x(T,A_{T-t}(x))\mathbbm{1}_{\{T-t < S(x)\}}=g_x(T,A_{T-t}(x))$ because of \eqref{absorbed process} and the fact that $g_x(T,\Delta)=0$.

Also, introduce the function $\tilde{g}: [0,T] \times \mathbb{R}_+ \times \mathbb{R}_+ \to \mathbb{R}$ (depending parametrically on $t$) as 
\begin{equation} 
\label{definition of g tilde}
\tilde{g}(\alpha,q,y;t):=
\begin{cases}
g_x (T,y), & \alpha < q, \\
m(t+q), & \alpha \geq q,
\end{cases}
\end{equation}
and notice that $v:=u-G$ admits the representation
\begin{equation}
v(t,x)=\sup_{\tau \in \Lambda(T-t)} \mathbb{E} \left[ \left( f(t+\tau)-\tilde{g}(T-t, S(x), A_{T-t}(x);t) \right) \mathbbm{1}_{ \left\{\tau < S(x) \wedge T-t\}\right\}}\right]. \label{first representation of v}
\end{equation}

Clearly, the stopping time $\tau^{\star}$ defined by \eqref{optimal stopping time} is also optimal for $v$ since $G$ is independent of $\tau \in \Lambda(T-t)$. Therefore, we can expect that $v$ can be expressed in terms of the optimal stopping boundary $b$. Following \cite{EKK1989}, we obtain such a representation for $v$ by means of the theory of dual previsible projections (``balay\'{e}e pr\'{e}visible"), as it is shown in the following.
From now on, $(t,x) \in [0,T] \times \mathbb{R}_{+}$ will be given and fixed.

We define the process $(C_\alpha)_{\alpha \in [0,T]}$ such that for any $\alpha \in [0,T-t]$
\begin{align}
C_\alpha(t,x)
&:=-\int_0^{\alpha \wedge S(x) \wedge T-t}f^{\prime}(t+\theta) d\theta \nonumber \\
&+\big[f(T \wedge (t+S(x))) -\tilde{g}(T-t, S(x), A_{T-t}(x);t) \big]\mathbbm{1}_{ \{ 0 < T-t \wedge S(x) \leq \alpha\}}, \label{definition of C}
\end{align}
as well as the stopping time 
\begin{equation}
\sigma_\alpha(t,x):=\inf\left\{\theta \in [\alpha,T-t): A_{\theta}(x)\geq b(t+\theta)\right\} \wedge (T-t), \label{definition of stopping time sigma}
\end{equation}
with the convention $\inf \emptyset = + \infty$.
The process $C_{\cdot}(t,x)$ is absolutely continuous on $[0,T-t) \wedge S(x)$ with a possible jump at $(T-t) \wedge S(x)$, and $\alpha \mapsto \sigma_{\alpha}(t,x)$ is a.s.\ nondecreasing and right-continuous.

Since the stopping time $\sigma_0(t,x)$ is optimal for $u(t,x)$ by Assumption \ref{Assumption on regions and stopping time}, and therefore also for $v(t,x)=(u-G)(t,x)$, by using \eqref{definition of C} we can write from \eqref{first representation of v}
\begin{equation}\label{v in terms of C}
v(t,x)=\mathbb{E}\left[C_{T-t}(t,x)-C_{\sigma_0(t,x)}(t,x)\right]=\mathbb{E}\Big[\widetilde{C}_{T-t}(t,x)\Big],
\end{equation}
where we have introduced 
\begin{equation}
\widetilde{C}_{\alpha}(t,x):=C_{\sigma_\alpha(t,x)}(t,x)-C_{\sigma_0(t,x)}(t,x), \quad \alpha \in [0,T-t]. \label{definition of C tilde}
\end{equation}

The process $\widetilde{C}_{\cdot}(t,x)$ is of bounded variation, since it is the composition of the process of bounded variation $C_{\cdot}(t,x)$ and of the nondecreasing process $\sigma_{\cdot}(t,x)$, but it is not $\mathbb{F}$-adapted. However, being $v$ an excessive function, it is also the potential of an adapted, nondecreasing process $\Theta_{\cdot}(t,x)$ (cf.\ Section IV.4 in the book of Blumenthal and Getoor \cite{BlumenthalGetoor}), which is the dual predictable (or previsible) projection of $\widetilde{C}_{\cdot}(t,x)$ (see, e.g., Theorem 21.1 in Chapter VI of the book by Rogers and Williams \cite{Rogers2000} for further details on the dual predictable projection). In the following we provide the explicit representation of $\Theta_{\cdot}(t,x)$. This is obtained by employing the methodology of El Karoui and Karatzas in \cite{EKK1991}, Section 7.

\begin{theorem}
\label{theorem dual predictable projection}
The dual predictable projection $\Theta(t,x)$ of $\widetilde{C}(t,x)$ exists, is nondecreasing and it is given by
\begin{align}
\Theta_\alpha(t,x)
&=\int_0^{\alpha} -f^{\prime}(t+\theta)\mathbbm{1}_{\{A_{\theta}(x) > b(t+\theta)\}}~d\theta   \nonumber \\
&+  \Big[f(T \wedge (t+S(x)))-\tilde{g}(T-t,S(x),A_{T-t}(x);t)\Big] \mathbbm{1}_{\{ A_{T-t}(x) > b(T) \}} \mathbbm{1}_{\{ 0 < T-t \wedge S(x) \leq \alpha \}} \label{definition of D} \\
&=\int_0^{\alpha \wedge S(x)} -f^{\prime}(t+\theta)\mathbbm{1}_{\{x + \mu \theta + \sigma W_{\theta} > b(t+\theta)\}}~d\theta   \nonumber \\
&+  \Big[f(T \wedge (t+S(x)))-\tilde{g}(T-t,S(x),A_{T-t}(x);t)\Big] \mathbbm{1}_{\{ A_{T-t}(x) > b(T) \}} \mathbbm{1}_{\{ 0 < T-t \wedge S(x) \leq \alpha \}} \nonumber
\end{align}
for any $\alpha \in [0,T-t]$.
\end{theorem}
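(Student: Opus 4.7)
The strategy is to verify that the right-hand side of \eqref{definition of D} coincides with the dual predictable projection of $\widetilde{C}(t,x)$. By uniqueness of such a projection, it suffices to check that the candidate process $\Theta(t,x)$ is predictable, nondecreasing, and that for every bounded $\mathbb{F}$-predictable $H$,
$$\mathbb{E}\Big[\int_0^{T-t} H_\alpha\, d\widetilde{C}_\alpha(t,x)\Big] = \mathbb{E}\Big[\int_0^{T-t} H_\alpha\, d\Theta_\alpha(t,x)\Big].$$
Existence of the dual predictable projection is standard since $\widetilde{C}(t,x)$ has integrable variation on $[0,T-t]$ under Assumption \ref{basic assumptions}, and the whole argument will adapt the methodology of \cite{EKK1991}, Section 7, to the absorbing barrier at zero built into the process $A$.

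The pathwise core is a decomposition of $d\widetilde{C}_\alpha(t,x)=dC_{\sigma_\alpha}(t,x)$ exploiting the two components of $C$ (absolutely continuous part with density $-f'(t+\cdot)$ on $[0,\tilde{\tau})$ with $\tilde{\tau}:=(T-t)\wedge S(x)$, plus a single jump of amplitude $[f(T\wedge(t+S(x))) - \tilde{g}(T-t,S(x),A_{T-t}(x);t,x)]$ at $\alpha=\tilde{\tau}$) and the structure of $\alpha \mapsto \sigma_\alpha$ (nondecreasing, right-continuous, equal to $\alpha$ on the closed random set $\{A_\alpha(x)\geq b(t+\alpha)\}$ and constant on each open excursion of $A(x)$ strictly below $b(t+\cdot)$). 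This yields the absolutely continuous contribution $-f'(t+\alpha)\mathbbm{1}_{\{A_\alpha(x)>b(t+\alpha)\}}\,d\alpha$ on $\{\alpha<S(x)\}$, after discarding the Lebesgue-null level set $\{A_\alpha(x)=b(t+\alpha)\}$. The jump of $C$ at $\tilde{\tau}$ is transferred to $\widetilde{C}$ only when $\sigma$ reaches $\tilde{\tau}$: on $\{\tilde{\tau}=S(x)<T-t\}$ one has $A_{\tilde{\tau}}(x)=0<b(t+\tilde{\tau})$, so $\sigma_\alpha>\tilde{\tau}$ for every $\alpha\leq\tilde{\tau}$ and the jump is never picked up; on $\{\tilde{\tau}=T-t\}\cap\{A_{T-t}(x)>b(T)\}$ the jump is reached continuously at $\alpha=T-t$ by $\sigma_\cdot$ approaching through $\{A_\alpha>b(t+\alpha)\}$, producing precisely the second summand in \eqref{definition of D}.

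The main technical obstacle is the complementary event $\{\tilde{\tau}=T-t\}\cap\{A_{T-t}(x)\leq b(T)\}$, where $\sigma_\alpha$ jumps from some $\alpha^\ast<T-t$ to $T-t$, creating a pathwise jump of $\widetilde{C}$ of size $C_{T-t}-C_{\alpha^\ast}$ that is absent from the claimed $\Theta$. One has to show that, upon taking the dual predictable projection, these interior non-predictable jumps are completely absorbed into the absolutely continuous term; this is the balayage argument of \cite{EKK1991}, Section 7, adapted here using the continuity of $C$ on $(0,\tilde{\tau})$. Once this is done, predictability of $\Theta$ is automatic (the continuous integrand is adapted and left-continuous, and the jump has $\mathcal{F}_{(T-t)^-}$-measurable size and sits at the deterministic time $T-t$); the duality identity is verified for elementary predictable $H_\alpha=K\mathbbm{1}_{(\beta_1,\beta_2]}(\alpha)$, $K\in\mathcal{F}_{\beta_1}$ bounded, by conditioning on $\mathcal{F}_{\beta_1}$ and using the strong Markov property of the driving Brownian motion, and is then extended by a monotone class argument. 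Monotonicity of $\Theta$ follows from the fact that $v=u-G$ is an excessive function of the absorbed process $A$, hence the potential of a nondecreasing predictable process (cf.\ Section IV.4 in \cite{BlumenthalGetoor}), which the explicit representation merely identifies.
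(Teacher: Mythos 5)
Your overall strategy---characterizing $\Theta$ through the duality identity for bounded predictable integrands, adapting the balayage (``balay\'ee pr\'evisible'') computation of \cite{EKK1991}, Section 7, and deducing monotonicity from the excessivity of $v$---is exactly the route the paper takes; indeed the paper offers little more than that citation by way of proof. However, your pathwise analysis of where the terminal jump of $C$ enters $\widetilde{C}$ contains a concrete error. On $\{\tilde{\tau}=S(x)<T-t\}$ you claim that $\sigma_\alpha>\tilde{\tau}$ for every $\alpha\leq\tilde{\tau}$, so that the jump of $C$ at $\tilde{\tau}$ ``is never picked up''. That premise fails on every path that visits the stopping region before absorption: for such paths $\sigma_\alpha<S(x)$ for small $\alpha$. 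On those paths, if $\alpha^{\ast}$ denotes the start of the final excursion of $A(x)$ strictly below $b(t+\cdot)$ (the one ending in absorption), then $\sigma_\alpha$ jumps from $\alpha^{\ast}$ to $T-t$, and since $C_{T-t}$ contains the (negative) jump $f(t+S(x))-m(t+S(x))$ while $C_{\alpha^{\ast}}$ does not, the process $\widetilde{C}$ does pick this term up, inside the totally inaccessible jump $C_{T-t}-C_{\alpha^{\ast}}$. Its absence from $\Theta$ is an effect of the projection, not a pathwise cancellation.

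This points to the larger gap: you locate ``the main technical obstacle'' only on $\{\tilde{\tau}=T-t\}\cap\{A_{T-t}(x)\leq b(T)\}$, but the same phenomenon occurs at the start of \emph{every} excursion of $A(x)$ strictly below $b(t+\cdot)$: interior excursions $(g,d)$ produce non-predictable jumps of $\widetilde{C}$ of size $-\int_{g}^{d}f^{\prime}(t+\theta)\,d\theta$, and the final excursion may in addition carry the terminal jump of $C$. Compensating all of these is the entire content of the balayage computation, and it is not pathwise: already the equality of total masses $\mathbb{E}[\Theta_{T-t}(t,x)]=\mathbb{E}[\widetilde{C}_{T-t}(t,x)]$ encodes the optimality of $\sigma_0(t,x)$. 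The ingredient your sketch asserts but never supplies is the identity $\mathbb{E}\big[C_{T-t}(t,x)-C_{\sigma_\beta(t,x)}(t,x)\,\big|\,\mathcal{F}_\beta\big]=v(t+\beta,A_\beta(x))$, coming from the strong Markov property and the optimality of $\sigma_\beta(t,x)$ for the problem restarted at $(t+\beta,A_\beta(x))$; it is precisely this potential identity that closes the duality test $\mathbb{E}[K(\widetilde{C}_{\beta_2}-\widetilde{C}_{\beta_1})]=\mathbb{E}[K(\Theta_{\beta_2}-\Theta_{\beta_1})]$ and shows that the excursion jumps are absorbed into the $\mathbbm{1}_{\{A_\theta(x)>b(t+\theta)\}}$-restricted integral. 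Without it, the statement that these jumps are ``completely absorbed into the absolutely continuous term'' is exactly the claim to be proved.
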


Theorem \ref{theorem dual predictable projection} can be proved by carefully adapting to our case the techniques presented in Section 7 of \cite{EKK1991} (see also, Section 3 of \cite{EKK1989}). In particular, differently to Section 7 of \cite{EKK1991}, here we deal with an absorbed drifted Brownian motion as a state variable of the optimal stopping problem \eqref{Stopping problem} (instead of a Brownian motion). However, all the arguments and proofs of Section 7 of \cite{EKK1991} carry over also to our setting with random time horizon $(T-t)\wedge S(x)$ (up to which the process $A$ is in fact a drifted Brownian motion) upon using representation \eqref{first representation of v} of $v$ (in which the function $\tilde{g}$ takes care of the random time horizon $(T-t)\wedge S(x)$) together with \eqref{definition of stopping time sigma} and \eqref{definition of C tilde}.

A consequence of Theorem \ref{theorem dual predictable projection} is the next result.
\begin{corollary}
\label{corollary for D}
It holds that
\begin{itemize}
\item[(i)]
$\big[f(T \wedge (t+S(x)))-\tilde{g}(T-t,S(x),A_{T-t}(x);t)\big] \mathbbm{1}_{\{ A_{T-t}(x) > b(T) \}} = 0$ a.s. 
\vspace{0.15cm}

\item[(ii)] $\{t \in [0,T): f^{\prime}(t) \leq 0\} \supseteq \mathcal{S}$;
\end{itemize}
\end{corollary}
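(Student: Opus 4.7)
My plan is to deduce both (i) and (ii) directly from the monotonicity of $\Theta(t,x)$ in $\alpha$ asserted in Theorem \ref{theorem dual predictable projection}, by decomposing $\Theta_\alpha(t,x)$ into its absolutely continuous part $\int_0^\alpha -f'(t+\theta)\mathbbm{1}_{\{A_\theta(x)>b(t+\theta)\}}\,d\theta$ and the single jump at $\alpha=(T-t)\wedge S(x)$ (whenever this time is positive) of magnitude
\begin{equation*}
J(t,x):=\bigl[f(T\wedge(t+S(x)))-\tilde g(T-t,S(x),A_{T-t}(x);t,x)\bigr]\,\mathbbm{1}_{\{A_{T-t}(x)>b(T)\}}.
\end{equation*}
Because $\Theta(t,x)$ is c\`adl\`ag and nondecreasing, all of its jumps are a.s.\ nonnegative, so $J(t,x)\ge 0$ a.s.

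For (i), I would observe that on the event $\{A_{T-t}(x)>b(T)\}$ the absorbed path has not reached zero, so $S(x)>T-t$. This immediately forces $T\wedge(t+S(x))=T$ and places $\tilde g$ into the branch $\alpha<q$ of \eqref{definition of g tilde}, giving $\tilde g(T-t,S(x),A_{T-t}(x);t,x)=g_x(T,A_{T-t}(x))$. Hence on this event
\begin{equation*}
J(t,x)=\bigl[f(T)-g_x(T,A_{T-t}(x))\bigr]\,\mathbbm{1}_{\{A_{T-t}(x)>b(T)\}},
\end{equation*}
which is $\le 0$ a.s.\ by Assumption \ref{basic assumptions}(i) (since the event also forces $A_{T-t}(x)\in(0,\infty)$). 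Combined with $J(t,x)\ge 0$ a.s., this yields $J(t,x)=0$ a.s., which is exactly claim (i).

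For (ii), having established in (i) that the jump term in $\Theta$ vanishes, the process $\Theta_\alpha(t,x)=\int_0^\alpha -f'(t+\theta)\mathbbm{1}_{\{A_\theta(x)>b(t+\theta)\}}\,d\theta$ is a.s.\ absolutely continuous and nondecreasing in $\alpha\in[0,T-t]$, which forces its density to be nonnegative, that is
\begin{equation*}
-f'(t+\theta)\,\mathbbm{1}_{\{A_\theta(x)>b(t+\theta)\}}\ge 0 \quad \text{for a.e.\ } \theta\in[0,T-t],\ \mathbb P\text{-a.s.}
\end{equation*}
Now fix $(s_0,x_0)\in\mathcal{S}$ with $s_0<T$. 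By the structural description of $\mathcal{S}$ in Assumption \ref{Assumption on regions and stopping time}, every $x>b(s_0)$ satisfies $(s_0,x)\in\mathcal{S}$; pick any such $x$ and set $t=s_0$, so that $A_0(x)=x>b(s_0)=b(t)$. By the continuity of the paths of $A_\cdot(x)$ and of $b$, there is a random $\eta(\omega)>0$ with $A_\theta(x)>b(t+\theta)$ for all $\theta\in[0,\eta]$, so the indicator equals $1$ on $[0,\eta]$ and the display above gives $-f'(s_0+\theta)\ge 0$ for a.e.\ $\theta\in[0,\eta]$. Since $\eta>0$ a.s.\ and $f'$ is continuous, letting $\theta\downarrow 0$ yields $f'(s_0)\le 0$, which is the content of (ii).

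The main (very mild) obstacle is bookkeeping around the jump: one must verify that the single discontinuity of $\Theta$ is exactly the one written in \eqref{definition of D}, so that nonnegativity of jumps and of the a.c.\ density can be treated separately. Once this is in hand, the corollary is a direct consequence of the monotonicity of $\Theta$, with Assumption \ref{basic assumptions}(i) closing the sign argument in (i) and the structural Assumption \ref{Assumption on regions and stopping time} (through $b$) closing the pointwise argument in (ii).
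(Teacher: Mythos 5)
Your proposal is correct and follows essentially the same route as the paper: part (i) is obtained by combining the nonnegativity of the jump of $\Theta$ (from its monotonicity) with the reverse inequality from Assumption \ref{basic assumptions}(i), and part (ii) by reading off the nonnegativity of the absolutely continuous density and upgrading the a.e.\ statement via continuity of $f'$, $A_\cdot(x)$ and $b$. Your localization argument near $\theta=0$ in (ii) is just a more explicit version of the paper's final step.
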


\begin{proof}
(i) On the set $\{ A_{T-t}(x) > b(T) \}$ we obtain by the definition of $\tilde{g}$ (see \eqref{definition of g tilde}) that
\begin{align}
\label{equation f must be bigger then g}
f(T \wedge (t+S(x)))-\tilde{g}(T-t,S(x),A_{T-t}(x);t)
&=f(T)-g_x(T,A_{T-t}(x)).
\end{align}
Since $\Theta_{\cdot}(t,x)$ is nondecreasing, the last term in \eqref{equation f must be bigger then g} has to be positive, thus implying $f(T)-g_x(T,A_{T-t}(x)) \geq 0$ on $\{A_{T-t}(x) > b(T)\}$. However, by Assumption \ref{basic assumptions}-(i) one has $f(T)\leq g_x(T,x)$ for all $x \in (0,\infty)$. Hence the claim follows.
\vspace{0.15cm}

(ii) Since $\alpha \mapsto \Theta_{\alpha}(t,x)$ is a.s.\ nondecreasing, it follows from (i) above and \eqref{definition of D} that $f^{\prime}(t+\theta)\mathbbm{1}_{\{A_{\theta}(x) > b(t+\theta)\}} \leq 0$ a.s.\ for a.e. $\theta \in [0,T-t]$. But $f^{\prime}(\cdot)$, $A_{\cdot}(x)$ and $b(t+\cdot)$ are continuous up to $(T-t) \wedge S(x)$, and therefore the latter actually holds a.s.\ for all $\theta \in [0,T-t]$.  Hence, $\{t \in [0,T): f^{\prime}(t) \leq 0\} \supseteq \mathcal{S}$. 
\end{proof}

\begin{remark}
\label{rem:gxfT}
As a byproduct of Corollary \ref{corollary for D}-(i) (see in particular \eqref{equation f must be bigger then g}), Assumption \ref{basic assumptions}-(i), and of the fact that $A_{T-t}(x)$ has a transition probability that is absolutely continuous with respect to the Lebesgue measure on $\mathbb{R}_+$ (cf.\ \eqref{density of the absorbed process}), one has $\big(f(T) - g_x(T,y)\big)\mathbbm{1}_{\{y > b(T)\}}=0$ for $y \geq 0$.
\end{remark}

We can now obtain an alternative representation of the value function $u$ of problem \eqref{Stopping problem}.

\begin{theorem}
For any $(t,x) \in [0,T] \times \mathbb{R}_+$ one has
\begin{align}
u(t,x)
&=\mathbb{E}\bigg[ \int_0^{(T-t)\wedge S(x)} -f^{\prime}(t+\theta)\mathbbm{1}_{\{x + \mu \theta + \sigma W_{\theta} \geq b(t+\theta)\}}~d\theta  \nonumber \\
&+ m(t+S(x))\mathbbm{1}_{\{S(x) \leq T-t\}}+g_x(T,A_{T-t}(x)) \bigg]. \label{second representation of u}
\end{align}
\end{theorem}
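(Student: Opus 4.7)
The plan is to combine the decomposition $u = v + G$ with the identification of $v$ via the dual predictable projection $\Theta$ from Theorem \ref{theorem dual predictable projection}, eliminating the boundary term via Corollary \ref{corollary for D}.

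First, starting from the definition $v := u - G$ and formula \eqref{v in terms of C}, I would recall that by the very definition of the dual predictable projection (see \cite{Rogers2000}, Chapter VI, Theorem 21.1) one has $\mathbb{E}[\widetilde{C}_{T-t}(t,x)] = \mathbb{E}[\Theta_{T-t}(t,x)]$. Combining this with \eqref{v in terms of C} gives
\[
v(t,x) \;=\; \mathbb{E}\bigl[\Theta_{T-t}(t,x)\bigr].
\]
Next I would substitute the explicit expression \eqref{definition of D} for $\Theta_{T-t}(t,x)$ provided by Theorem \ref{theorem dual predictable projection}. The boundary contribution $[f(T \wedge (t+S(x)))-\tilde g(T-t,S(x),A_{T-t}(x);t,x)]\mathbbm{1}_{\{A_{T-t}(x) > b(T)\}}$ is identically zero almost surely by Corollary \ref{corollary for D}-(i), so only the integral term survives. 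This yields
\[
v(t,x) \;=\; \mathbb{E}\biggl[\int_0^{(T-t)\wedge S(x)} -f'(t+\theta)\,\mathbbm{1}_{\{x+\mu\theta+\sigma W_\theta > b(t+\theta)\}}\,d\theta\biggr].
\]

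To pass from the strict inequality $>$ to the weak inequality $\geq$ appearing in the statement, I would invoke the fact that the occupation time at a fixed continuous curve by a nondegenerate drifted Brownian motion is almost surely zero (the one-dimensional marginal of $x+\mu\theta+\sigma W_\theta$ is absolutely continuous, hence by Fubini the Lebesgue measure of $\{\theta : x+\mu\theta+\sigma W_\theta = b(t+\theta)\}$ vanishes a.s.). Thus replacing $>$ by $\geq$ in the indicator does not change the expectation.

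Finally, I would add back the term $G(t,x)$ as defined in \eqref{definition of G} to obtain $u = v + G$, which produces precisely \eqref{second representation of u}. I do not anticipate a serious obstacle here: the real content has already been packaged in Theorem \ref{theorem dual predictable projection} and Corollary \ref{corollary for D}; this final statement is essentially a bookkeeping assembly. The only mildly delicate point is the justification that the weak-versus-strict inequality in the indicator is immaterial, which is handled by the occupation-time argument above.
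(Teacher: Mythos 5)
Your proposal is correct and follows essentially the same route as the paper: identify $v=\mathbb{E}[\Theta_{T-t}(t,x)]$ via the dual predictable projection, kill the terminal term with Corollary \ref{corollary for D}-(i), pass from strict to weak inequality in the indicator by a null-set argument, and add back $G$. The only cosmetic difference is that you justify the indicator replacement by a Fubini/occupation-time argument while the paper cites absolute continuity of the joint law of $S(x)$ and the drifted Brownian motion; both are valid and equivalent in substance.
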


\begin{proof}
Since by Theorem \ref{theorem dual predictable projection} $\Theta(t,x)$ is the dual predictable projection of $\widetilde{C}(t,x)$, from \eqref{v in terms of C} we can write for any $(t,x) \in [0,T] \times \mathbb{R}_+$
\begin{equation}
\label{v:prevproj}
v(t,x)=\mathbb{E}\left[ \widetilde{C}_{T-t}(t,x) \right]=\mathbb{E}\left[ \Theta_{T-t}(t,x) \right]. 
\end{equation}

Due to \eqref{definition of D} and Corollary \ref{corollary for D}-(i), \eqref{v:prevproj} gives
\begin{align}
v(t,x)=\mathbb{E}\left[ \int_0^{(T-t)\wedge S(x)} -f^{\prime}(t+\theta)\mathbbm{1}_{\{x + \mu \theta + \sigma W_{\theta} \geq b(t+\theta)\}}~d\theta \right]. 
\label{second representation of v}
\end{align}
Here we have also used that the joint law of $S(x)$ and of the drifted Brownian motion is absolutely continuous with respect to the Lebesgue measure in $\mathbb{R}^2$ (cf.\ \eqref{first density for u}) to replace $\mathbbm{1}_{\{x + \mu \theta + \sigma W_{\theta} > b(t+\theta)\}}$ with $\mathbbm{1}_{\{x + \mu \theta + \sigma W_{\theta} \geq b(t+\theta)\}}$ inside the expectation in \eqref{definition of D}.

However, since by definition $v=u-G$, we obtain from \eqref{second representation of v} and \eqref{definition of G} the alternative representation
\begin{align}
u(t,x)
&=v(t,x)+G(t,x) \nonumber =\mathbb{E}\bigg[ \int_0^{(T-t)\wedge S(x)} -f^{\prime}(t+\theta)\mathbbm{1}_{\{x + \mu \theta + \sigma W_{\theta} \geq b(t+\theta)\}}~d\theta  \nonumber \\
&+ m(t+S(x))\mathbbm{1}_{\{S(x) \leq T-t\}}+g_x(T,A_{T-t}(x)) \bigg]. \nonumber
\end{align}
\end{proof}

\begin{remark}
Notice that representation \eqref{second representation of u} coincides with that one might obtain by an application of It\^{o}'s formula if $u$ were $C^{1,2}([0,T)\times (0,\infty)) \cap C([0,T] \times \mathbb{R}_+)$, and satisfies (as it is customary in optimal stopping problems) the free-boundary problem
\begin{equation}
\label{FBP}
\begin{cases}
\partial_t u + \frac{1}{2}\sigma^2 \partial^2_{xx} u +\mu \partial_x u =0,\quad &0<x<b(t),\,\, t \in [0,T) \\
u=f, \quad &x \geq b(t),\,\, t \in [0,T) \\
u(T,x)=g_x(T,x), \quad &x>0 \\
u(t,0)=m(t), \quad &t \in [0,T].
\end{cases}
\end{equation}
Indeed, in such a case an application of Dynkin's formula gives 
\[\mathbb{E}\left[ u(t+(T-t)\wedge S(x),Z_{(T-t)\wedge S(x)}(x))\right] = u(t,x) + \mathbb{E}\left[\int_0^{(T-t)\wedge S(x)} f^{\prime}(t+\theta) \mathbbm{1}_{\{ Z_{\theta}(x) \geq b(t+\theta) \}}~d\theta\right],
\]
where we have set $Z_s(x):=x + \mu s + \sigma W_s$, $s\geq 0$, to simplify exposition. Hence, using \eqref{FBP} we have from the latter
\begin{align}
&u(t,x)
= \mathbb{E}\bigg[ m(t+S(x)) \mathbbm{1}_{\{ S(x) \leq T-t \}} +g_x(T, x + \mu (T-t) + \sigma W_{T-t}) \mathbbm{1}_{\{ S(x) > T-t \}}  \nonumber \\
&- \int_0^{(T-t)\wedge S(x)} f^{\prime}(t+\theta) \mathbbm{1}_{\{ Z_{\theta}(x) \geq b(t+\theta) \}}~d\theta \bigg] \nonumber  = \mathbb{E}\bigg[ m(t+S(x)) \mathbbm{1}_{\{ S(x) \leq T-t \}} \nonumber \\
&+g_x(T, A_{T-t}(x)) \mathbbm{1}_{\{ S(x) > T-t \}} - \int_0^{(T-t)\wedge S(x)} f^{\prime}(t+\theta) \mathbbm{1}_{\{ Z_{\theta}(x) \geq b(t+\theta) \}}~d\theta \bigg] \nonumber \\
&= \mathbb{E}\bigg[ m(t+S(x)) \mathbbm{1}_{\{S(x) \leq T-t\}}
+g_x(T, A_{T-t}(x)) - \int_0^{(T-t)\wedge S(x)} f^{\prime}(t+\theta) \mathbbm{1}_{\{ Z_{\theta}(x) \geq b(t+\theta) \}}~d\theta \bigg], \nonumber  
\end{align}
where in the last step we have used that $g_x(T,A_{T-t}(x))\mathbbm{1}_{\{S(x)>T-t\}}=g_x(T,A_{T-t}(x))$ because of \eqref{absorbed process} and the fact that $g_x(T,\Delta)=0$.
\end{remark}

\begin{remark}
\label{rem:inteq}
Notice that the representation \eqref{second representation of u} immediately gives an integral equation for the optimal stopping boundary $b$. Indeed, since \eqref{second representation of u} holds for any $(t,x) \in [0,T]\times \mathbb{R}_+$, by taking $x=b(t)$, $t \leq T$, on both sides of \eqref{second representation of u}, and by recalling that $u(t,b(t))=f(t)$, we find that $b$ solves
\begin{align}
\label{integralequation}
f(t) &=\mathbb{E}\bigg[ \int_0^{(T-t)\wedge S(b(t))} -f^{\prime}(t+\theta)\mathbbm{1}_{\{b(t) + \mu \theta + \sigma W_{\theta} \geq b(t+\theta)\}}~d\theta  \nonumber \\
&+ m(t+S(b(t)))\mathbbm{1}_{\{S(b(t)) \leq T-t\}}+g_x(T,A_{T-t}(b(t))) \bigg]. 
\end{align}
By adapting arguments as those in Section 25 of Peskir and Shiryaev \cite{Peskir2006}, based on the superharmonic characterization of $u$, one might then prove that $b$ is the unique solution to \eqref{integralequation} among a suitable class of continuous and positive functions. 
\end{remark}

The next result follows from \eqref{second representation of u} by expressing the expected value as an integral with respect to the probability densities of the involved processes and random variables. Its proof can be found in the Appendix for the sake of completeness.

\begin{corollary}\label{smooth fit for u}
The function $u(t,\cdot)$ is continuously differentiable on $(0,\infty)$ for all $t \in [0,T)$.
\end{corollary}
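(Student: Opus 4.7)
The plan is to take representation \eqref{second representation of u} and make the $x$-dependence completely explicit by writing each expectation as an integral against the corresponding probability density. Once that is done, continuous differentiability in $x$ follows from smoothness in $x$ of the densities together with a dominated convergence argument applied to the difference quotients. Concretely, by the reflection principle, the substochastic transition density of the drifted Brownian motion $Z_\theta(x) = x + \mu\theta + \sigma W_\theta$ killed at $0$ is
\begin{equation*}
p_\theta(x,y) = \frac{1}{\sigma\sqrt{2\pi\theta}}\left[\exp\!\left(-\tfrac{(y-x-\mu\theta)^2}{2\sigma^2\theta}\right) - \exp\!\left(-\tfrac{2\mu x}{\sigma^2}\right)\exp\!\left(-\tfrac{(y+x-\mu\theta)^2}{2\sigma^2\theta}\right)\right],
\end{equation*}
for $x,y>0$, and the density of $S(x)$ is the inverse-Gaussian type function
\begin{equation*}
\varphi(\theta;x) = \frac{x}{\sigma\sqrt{2\pi\theta^3}}\exp\!\left(-\tfrac{(x+\mu\theta)^2}{2\sigma^2\theta}\right), \qquad \theta>0.
\end{equation*}

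Using Fubini's theorem in \eqref{second representation of u}, I would rewrite
\begin{equation*}
u(t,x) = -\!\!\int_0^{T-t}\!\!\!\! f'(t+\theta)\!\!\int_{b(t+\theta)}^\infty\!\! p_\theta(x,y)\,dy\,d\theta + \!\!\int_0^{T-t}\!\! m(t+\theta)\varphi(\theta;x)\,d\theta + \!\!\int_0^\infty\!\! g_x(T,y)\,p_{T-t}(x,y)\,dy,
\end{equation*}
where the last integral already incorporates the convention $g_x(T,\Delta)=0$ so that the contribution from $\{S(x)\le T-t\}$ drops out automatically. Each of these three terms is an integral of a function that is $C^\infty$ in the parameter $x$, and their $x$-derivatives can be computed explicitly from the formulas above (differentiating $p_\theta(x,y)$ and $\varphi(\theta;x)$ under the integral signs).

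The main task is then to justify interchanging derivative and integral. For this I would proceed term by term: in a neighbourhood of a fixed $x_0 \in (0,\infty)$ I would produce an $L^1$ dominating function for each $\partial_x$-integrand. For $p_\theta(x,y)$ and $\partial_x p_\theta(x,y)$, the Gaussian tails in $y$ give uniform $L^1(dy)$ bounds on compact $x$-intervals and on compact $\theta$-intervals, while continuity of $f'$, $m$ and $g_x$ controls the external factors. The only delicate points are the singularities at $\theta\downarrow 0$ (for the first two integrals) and, for the hitting density, the behaviour of $\varphi(\theta;x)$ near $\theta=0$; in both cases the factor $x$ (respectively the difference structure of $p_\theta$) yields the integrability needed to dominate $\partial_x\varphi$ uniformly on $x$-compacts away from $0$. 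Once differentiation under the integral is justified, continuity of $x\mapsto u_x(t,x)$ follows from continuity of the $x$-derivatives of $p_\theta$ and $\varphi$ and one more application of dominated convergence, thus yielding the claim. The main obstacle, as indicated, is producing the uniform dominating bounds near the time endpoints; everything else is a straightforward differentiation under the integral.
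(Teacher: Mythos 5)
Your proposal is correct and follows essentially the same route as the paper's own proof: the paper likewise rewrites representation \eqref{second representation of u} as explicit integrals against the Gaussian law of the drifted Brownian motion killed at zero, the inverse-Gaussian density of $S(x)$, and the transition density of the absorbed process, and then differentiates in $x$ under the integral signs via dominated convergence. Your density formulas agree with \eqref{first density for u}, \eqref{density of S(x)} and \eqref{density of the absorbed process}, and you in fact spell out the domination near $\theta\downarrow 0$ more carefully than the paper does.
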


In the next section we will suitably integrate the two alternative representations of $u$ \eqref{OST with optimal stopping time} and \eqref{second representation of u} with respect to the space variable, and we will show that such integrations give the value function \eqref{value function of ocp} of the optimal dividend problem. As a byproduct, we will also obtain the optimal dividend strategy $D^{\star}$.


\subsection{Integrating the Optimal Stopping Value Function}

In the next two propositions we integrate with respect to the space variable the two representations of $u$ given by \eqref{OST with optimal stopping time} and \eqref{second representation of u}. The proofs will employ pathwise arguments. However, in order to simplify exposition, we will not stress the $\omega$-dependence of the involved random variables and processes.

\begin{proposition}
Let $b$ the optimal stopping boundary of problem \eqref{Stopping problem}, recall 
\[I^0_s(x)=\max_{0 \leq \theta \leq s}\{ -x-\mu \theta - \sigma W_{\theta}\} \vee 0, \quad s \geq 0,\]
and define 
\begin{align}
R_s(x):= x + \mu s +\sigma W_s + I^0_s(x), \quad s \geq 0.  \label{RBM without boundary}
\end{align}

Then for any $(t,x) \in [0,T]\times \mathbb{R}_+$ one has
\begin{equation}\label{integration of u with N}
\int_x^{b(t)} u(t,y)~dy=N(t,b(t))-N(t,x),
\end{equation}
where 
\begin{align}
N(t,x)
:=\mathbb{E}\bigg[&-\int_0^{T-t} \big(R_s(x)-b(t+s)\big)^+ f^{\prime}(t+s)~ds - \int_0^{T-t}m(t+s) ~dI^0_s(x) \nonumber \\
&+  g(T,R_{T-t}(x))\bigg]. \label{definition of N}
\end{align}
\end{proposition}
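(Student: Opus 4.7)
The plan is to start from the representation \eqref{second representation of u} of $u(t,y)$, integrate it in $y$ over $[x,b(t)]$, invoke Fubini, and match the three resulting pieces term-by-term against $N(t,b(t))-N(t,x)$. The pathwise bridge between the absorbed process $A$ (with hitting time $S$) appearing in the representation of $u$ and the reflected process $R$ appearing in $N$ is the elementary identity $I^0_s(y)=(M_s-y)^+$, where
$M_s:=\max_{0\leq\theta\leq s}(-\mu\theta-\sigma W_\theta)$, which also gives $\{s<S(y)\}=\{y>M_s\}$ modulo a null set.

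For the dividend-type term, a case-split on $\{y>M_s\}$ versus $\{y\leq M_s\}$ yields, with the shorthand $B_s:=b(t+s)-\mu s-\sigma W_s$,
\begin{equation*}
(R_s(y)-b(t+s))^+ = (y-B_s)^+\mathbbm{1}_{\{y>M_s\}} + (M_s-B_s)^+\mathbbm{1}_{\{y\leq M_s\}},
\end{equation*}
whose $y$-derivative is $\mathbbm{1}_{\{y>M_s,\,y>B_s\}}$ a.e. This agrees Lebesgue-a.e.\ with $\mathbbm{1}_{\{Z_s(y)\geq b(t+s)\}}\mathbbm{1}_{\{s<S(y)\}}$, where $Z_s(y):=y+\mu s+\sigma W_s$, so integrating in $y$ on $[x,b(t)]$ and applying Fubini recovers the first term of $N(t,b(t))-N(t,x)$. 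For the liquidation term, the identity $\partial_y R_{T-t}(y)=\mathbbm{1}_{\{y>M_{T-t}\}}=\mathbbm{1}_{\{T-t<S(y)\}}$ combined with the convention $g_x(T,\Delta)=0$ gives $g_x(T,A_{T-t}(y))=\partial_y g(T,R_{T-t}(y))$ a.e., hence
$\int_x^{b(t)} g_x(T,A_{T-t}(y))\,dy = g(T,R_{T-t}(b(t)))-g(T,R_{T-t}(x))$. For the capital-injection term, the change-of-variables $\int_0^{T-t} h(M_s)\,dM_s=\int_0^{M_{T-t}} h(y)\,dy$, valid for continuous nondecreasing $M$, applied with $h(y)=m(t+S(y))\mathbbm{1}_{[x,b(t)]}(y)$, together with the a.s.\ Stieltjes identity $dI^0_s(y)=\mathbbm{1}_{\{M_s>y\}}\,dM_s$, produces the pathwise equality
\begin{equation*}
\int_x^{b(t)} m(t+S(y))\mathbbm{1}_{\{S(y)\leq T-t\}}\,dy = \int_0^{T-t} m(t+s)\,[dI^0_s(x)-dI^0_s(b(t))].
\end{equation*}

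Taking expectations and summing the three matched pieces yields \eqref{integration of u with N}. Fubini is justified by the continuity of $f',m,g_x$ on compact sets together with standard $L^1$-bounds on $\sup_{s\leq T}(|W_s|+M_s)$, which also control $R_s(y)$ and $I^0_s(y)$ uniformly for $y\in[0,b(t)]$. The subtlest step is the change-of-variables argument in the capital-injection term: one must check that $\{M_s=y\}$ carries no $dM_s$-mass for a.e.\ $y$ (a consequence of continuity of $M$, so no atom at the unique time $S(y)$ where $M$ hits level $y$) and verify that the contributions at the two endpoints $y=x$ and $y=b(t)$ assemble into the signed difference $dI^0_s(x)-dI^0_s(b(t))$, using $\mathbbm{1}_{\{x\leq M_s\leq b(t)\}}\,dM_s=dI^0_s(x)-dI^0_s(b(t))$ a.s.
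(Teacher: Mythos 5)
Your proposal is correct and follows essentially the same route as the paper: both start from representation \eqref{second representation of u}, apply Fubini, and match the three terms via the identities $I^0_s(y)=(M_s-y)^+$ and $\{s<S(y)\}=\{M_s<y\}$, with the same occupation-time/change-of-variables formula for continuous nondecreasing processes handling the capital-injection term. The only (harmless) presentational differences are that you obtain the $f'$- and $g$-terms by differentiating in $y$ and invoking absolute continuity rather than computing the $y$-integrals directly, and you fold the cutoff $\mathbbm{1}_{[x,b(t)]}$ into the test function so as to avoid the paper's case split at the threshold level $z$.
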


\begin{proof}
To prove \eqref{integration of u with N} we use representation \eqref{second representation of u} of the value function of the optimal stopping problem \eqref{Stopping problem}. 
Using Fubini-Tonelli's Theorem we obtain
\begin{align}
\int_x^{b(t)} u(t,y)~dy
&=\int_x^{b(t)} \mathbb{E}\bigg[ \int_0^{(T-t)\wedge S(y)} -f^{\prime}(t+s)\mathbbm{1}_{\{y + \mu s + \sigma W_{s} \geq b(t+s)\}}~ds  \nonumber \\
&+ m(t+S(y))\mathbbm{1}_{\{S(y) \leq T-t\}}+g_x(T,A_{T-t}(y)) \bigg] ~dy \nonumber \\
&= \mathbb{E}\bigg[ -\int_0^{(T-t)} f^{\prime}(t+s) \bigg( \int_x^{b(t)}\mathbbm{1}_{\{y + \mu s + \sigma W_{s} \geq b(t+s)\}}\mathbbm{1}_{\{ s \leq S(y) \}}~dy \bigg)~ds  \label{integral of u with 3 summands} \\
&+ \int_x^{b(t)} m(t+S(y))\mathbbm{1}_{\{S(y) \leq T-t\}}~dy+\int_x^{b(t)}g_x(T,A_{T-t}(y))~dy \bigg]. \nonumber
\end{align}
In the following we investigate separately the three summands of the last term on the right-hand side of \eqref{integral of u with 3 summands}.

Recalling $S(x)=\inf \{ u \geq 0 : x +\mu u + \sigma W_u =0 \}$ it is clear that
\begin{equation} \label{S and M inverse 1}
S(y)\geq s \Leftrightarrow M_s \leq y
\end{equation}
for any $(s,y) \in \mathbb{R}_+ \times (0,\infty)$, where we have defined 
\begin{equation}
M_s:=\max_{0\leq \theta \leq s}(-\mu \theta - \sigma W_{\theta}), \quad s \geq 0. \label{def of maximum of RBM}
\end{equation}
We can then rewrite \eqref{RBM without boundary} in terms of \eqref{def of maximum of RBM} and obtain
\begin{equation}
R_s(x)=(x \vee M_s) + \mu s + \sigma W_s, \quad s \geq 0. \label{RBM 2}
\end{equation}
By using \eqref{S and M inverse 1} we find

\begin{align}
&\int_x^{b(t)} \mathbbm{1}_{\{y + \mu s + \sigma W_s \geq b(t+s)\}}\mathbbm{1}_{\{S(y)\geq s\}} ~dy =\int_{x \vee \big[b(t+s)-\mu s - \sigma W_s\big]}^{b(t) \vee \big[b(t+s)-\mu s - \sigma W_s\big]} \mathbbm{1}_{\{S(y)\geq s\}} ~dy \nonumber \\
&= \int_{x \vee \big[b(t+s)-\mu s - \sigma W_s\big]}^{b(t) \vee \big[b(t+s)-\mu s - \sigma W_s\big]} \mathbbm{1}_{\{M_s \leq y\}} ~dy \nonumber \\
&=\big[(b(t) \vee (b(t+s)-\mu s - \sigma W_s) \vee M_s)-(x\vee (b(t+s)-\mu s - \sigma W_s) \vee M_s)\big] \nonumber \\
&=\big[(b(t) \vee M_s) \vee (b(t+s)-\mu s - \sigma W_s) -(x \vee M_s) \vee (b(t+s)-\mu s - \sigma W_s)\big] \label{integration of f for N} \\
&=\big[\big([(b(t) \vee M_s)+\mu s + \sigma W_s ]\vee b(t+s)\big)-\big([(x\vee M_s)+\mu s + \sigma W_s ]\vee b(t+s)\big)\big] \nonumber \\
&=\big[\big(R_s(b(t)) \vee b(t+s)\big)-\big(R_s(x) \vee b(t+s)\big)\big] \nonumber \\
&=\big[\big(R_s(b(t)) - b(t+s)\big)^+ -\big(R_s(x) - b(t+s)\big)^+\big]. \nonumber
\end{align}

For the third summand of the last term of the right-hand side of \eqref{integral of u with 3 summands} we have, due to the fact that $g_x(T,\Delta)=0$,
\begin{align}
\int_x^{b(t)} g_x(T,A_{T-t}(y)) dy 
&=\int_x^{b(t)} g_x(T,y+ \mu (T-t) + \sigma W_{T-t}) \mathbbm{1}_{\{S(y)>T-t\}} dy \nonumber \\
&=\int_x^{b(t)} g_x(T,y+ \mu (T-t) + \sigma W_{T-t}) \mathbbm{1}_{\{M_{T-t} < y\}} dy \label{integration of g for N} \\
&=\int_{x\vee M_{T-t}}^{b(t) \vee M_{T-t}} g_x(T,y+ \mu (T-t) + \sigma W_{T-t})  dy \nonumber \\
&=g(T,R_{T-t}(b(t)))-g(T,R_{T-t}(x)), \nonumber
\end{align}
where in the last step we use \eqref{RBM 2}.
To prove that 
\begin{align}
\int_x^{b(t)} m(t+S(y))\mathbbm{1}_{\{S(y) \leq T-t\}}dy=  \int_0^{T-t} m(t+s) dI^0_s(x) - \int_0^{T-t} m(t+s) dI^0_s(b(t)) \label{integration of m without boundary}
\end{align}
we have to distinguish two cases. In the following we let $(t,x) \in [0,T] \times \mathbb{R}_+$ be given and fixed, and we prove \eqref{integration of m without boundary} by taking $x < b(t)$. The arguments are exactly the same if $b(t) < x$ by reversing the roles of $x$ and $b(t)$.
\vspace{0.25cm}

\textbf{Case 1.}\, Here we take $x \in \{ y \in \mathbb{R}_+: S(y) \geq T-t \}$; that is, the initial point $x>0$ is such that the drifted Brownian motion is not reaching $0$ before the time horizon. This implies that $R_s(x)$ in \eqref{RBM without boundary} equals $x + \mu s + \sigma W_s$ and so $I^0_s(x)=0$ for all $s \in [0,T-t]$. Hence, we can write
\begin{align}
\int_x^{b(t)} m(t+S(y))\mathbbm{1}_{\{S(y)\leq T-t\}} dy =0=  \int_0^{T-t} m(t+s) dI^0_s(x) - \int_0^{T-t} m(t+s) dI^0_s(b(t)),
\label{integration of m without boundary case 1}
\end{align}
where we have used that $S(y)> S(x) \geq T-t$ for any $y > x$ and $\{x\}$ has zero Lebesgue measure to obtain the first equality, and the fact that $0=I_s^0(x)\geq I_s^0(b(t))\geq 0$ since $x<b(t)$.
\vspace{0.25cm}

\textbf{Case 2.}\, Here we take $x \in \{ y \in \mathbb{R}_+: S(y) < T-t \}$; i.e., the drifted Brownian motion reaches $0$ before the time horizon.
Define 
\begin{equation}
\label{def:zCase2}
z:= \inf\{ y \in \mathbb{R}_+:S(y) \geq T-t \},
\end{equation}
with the usual convention $\inf \emptyset = + \infty$. In the sequel we assume that $z<+\infty$, since otherwise there is no need for the following analysis to be performed.
Note that, by continuity in time and in the initial datum of the paths of the drifted Brownian motion, we have $S(z)\leq T-t$. Furthermore, it holds for all $y \in [x,z]$ that (cf.\ \eqref{def of maximum of RBM})
\begin{equation}
y+I_s^0(y)=M_s, \quad \forall s \geq S(y), \label{property 1 in case 1}
\end{equation}
\begin{equation}
I^0_s(y)=0, \quad \forall s < S(y). \label{property 2 in case 1}
\end{equation}

Using \eqref{property 1 in case 1}, \eqref{property 2 in case 1}, \eqref{S and M inverse 1}, and the change of variable formula in Section 4 of Chapter 0 of the book by Revuz and Yor \cite{revuz2013} (see also equation (4.7) in Baldursson and Karatzas \cite{Baldursson1996}) we obtain

\begin{align}
&\int_x^{z\wedge b(t)} m(t+S(y))\mathbbm{1}_{\{ S(y) \leq T-t \}} dy 
=\int_x^{z\wedge b(t)} m(t+S(y)) dy \nonumber \\
&=\int_{S(x)}^{S({z\wedge b(t)})} m(t+s) dM_s =\int_{S(x)}^{S({z\wedge b(t)})} m(t+s) \left(dI_s^0(x)-dI_s^0({z\wedge b(t)})\right)) \label{integration of m without boundary case 2} \\
&=\int_{0}^{T-t} m(t+s) \left(dI_s^0(x)-dI_s^0({z\wedge b(t)})\right) \nonumber \\
&=\int_{0}^{T-t} m(t+s) dI_s^0(x) - \int_{0}^{T-t} m(t+s)dI_s^0({z\wedge b(t)}).\nonumber
\end{align}
For the integral $\int_{z \wedge b(t)}^{b(t)} m(t+S(y))\mathbbm{1}_{\{S(y) \leq T-t\}}~dy$ we can use the result of Case 1 due to the definition of $z$ \eqref{def:zCase2}. Then, combining \eqref{integration of m without boundary case 1} and \eqref{integration of m without boundary case 2} leads to \eqref{integration of m without boundary}. \\Ù
\newline
By \eqref{integration of f for N}, \eqref{integration of g for N} and \eqref{integration of m without boundary}, and recalling \eqref{definition of N} and \eqref{integral of u with 3 summands} we obtain \eqref{integration of u with N}.
\end{proof}

\begin{proposition}
Let $(D^{\star},I^{\star})$ be the solution to system \eqref{system for I and D}. Then, for any $(t,x) \in [0,T]\times \mathbb{R}_+$ one has

\begin{equation}\label{integration of u with M}
\int_x^{b(t)} u(t,y)~dy=M(t,b(t))-M(t,x),
\end{equation}
where $b$ is the optimal stopping boundary of problem \eqref{Stopping problem} and
\begin{equation}\label{def of M}
M(t,x):=\mathbb{E}\left[\int_0^{T-t} f(t+s) ~dD^{\star}_s(t,x)-\int_0^{T-t}m(t+s)~dI^{\star}_s(t,x)+ g(T,X^{D^{\star}}_{T-t}(x))\right].
\end{equation}
\end{proposition}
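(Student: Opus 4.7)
The plan is to mirror the pathwise approach used for the previous proposition but starting from representation \eqref{OST with optimal stopping time} of $u$ (rather than from \eqref{second representation of u}), with the doubly reflected process $X^{D^{\star}}(y)$ in the strip $[0,b(t+\cdot)]$ playing the role that the singly reflected process $R(y)$ played there. The first step is to substitute \eqref{OST with optimal stopping time} into the left-hand side of \eqref{integration of u with M} and apply Fubini--Tonelli. This produces, pathwise, three integrals over $y\in[x,b(t)]$ corresponding to the three indicators in \eqref{OST with optimal stopping time}: an $f$-integral on the ``upper-exit'' event $\{\tau^{\star}(t,y)<(T-t)\wedge S(y)\}$, an $m$-integral on the ``lower-exit'' event $\{\tau^{\star}(t,y)\geq S(y)\}$, and a $g_x$-integral on the ``interior'' event $\{\tau^{\star}(t,y)=T-t<S(y)\}$. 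I would then match each of these, respectively, with the three components of $M(t,b(t))-M(t,x)$ coming from the definition \eqref{def of M}.

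For the $g_x$-term: on the interior event the free process $Z_s(y):=y+\mu s+\sigma W_s$ stays strictly inside $(0,b(t+\cdot))$ on $[0,T-t]$, so $X^{D^{\star}}_{T-t}(y)=Z_{T-t}(y)$ with pathwise derivative $\partial_y X^{D^{\star}}_{T-t}(y)=1$, while outside this event the two-sided reflection absorbs the $y$-dependence so that $\partial_y X^{D^{\star}}_{T-t}(y)=0$. The fundamental theorem of calculus applied to $y\mapsto g(T,X^{D^{\star}}_{T-t}(y))$ then converts the $g_x$-integral into $g(T,X^{D^{\star}}_{T-t}(b(t)))-g(T,X^{D^{\star}}_{T-t}(x))$. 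For the $f$-term: on the upper-exit event, the relation $Z_{\tau^{\star}(t,y)}(y)=b(t+\tau^{\star}(t,y))$ permits a change of variable $y\mapsto s=\tau^{\star}(t,y)\in[0,T-t)$; combined with the Skorokhod representation \eqref{system for I and D} and a change-of-variables formula \`a la Section~4, Chapter~0 of \cite{revuz2013} (as used in the previous proposition for the $f'$-term in $N$), this yields $\int_0^{T-t} f(t+s)\,d[D^{\star}_s(t,b(t))-D^{\star}_s(t,x)]$. For the $m$-term: the analogous change of variable $y\mapsto s=S(y)$ on the lower-exit event produces $-\int_0^{T-t} m(t+s)\,d[I^{\star}_s(t,b(t))-I^{\star}_s(t,x)]$, the sign reflecting that $y\mapsto I^{\star}_s(t,y)$ is nonincreasing. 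Summing the three contributions reproduces \eqref{def of M} evaluated at $y=b(t)$ minus at $y=x$, giving \eqref{integration of u with M}.

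The main obstacle is carrying out these pathwise change-of-variable arguments rigorously in the presence of the \emph{two-sided} reflection. In the previous proposition only the lower reflection was present, so the three regimes of $Z(y)$ could be disentangled directly via the explicit formulas for $R(y)$ and $I^0(y)$. Here the coupled system \eqref{system for I and D} intertwines $D^{\star}(y)$ and $I^{\star}(y)$, and one must show that the subset of $y\in[x,b(t)]$ on which $Z(y)$ first exits the strip through $b$ is carried, via $y\mapsto\tau^{\star}(t,y)$, onto the support of $d[D^{\star}(t,b(t))-D^{\star}(t,x)]$, and that the subset on which $Z(y)$ first hits $0$ is carried, via $y\mapsto S(y)$, onto the support of $d[I^{\star}(t,x)-I^{\star}(t,b(t))]$. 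The monotonicity in $y$ of $S(y)$ (increasing) and of $\tau_b(y):=\inf\{s\geq 0:Z_s(y)\geq b(t+s)\}$ (decreasing), together with the explicit Tarski-fixed-point representation in \eqref{system for I and D}, should deliver these Skorokhod-type jump-counting identities and, with them, the desired pathwise matches.
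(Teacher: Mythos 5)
Your plan follows the paper's proof in all essentials: start from representation \eqref{OST with optimal stopping time}, apply Fubini--Tonelli, and match the $f$-, $m$- and $g_x$-integrals over $y\in[x,b(t)]$ term by term with the three components of $M(t,b(t))-M(t,x)$, using the change-of-variables formula for inverses of increasing processes (Revuz--Yor, Ch.~0, Sec.~4; Baldursson eq.~(4.7)) with $L_s=\max_{0\le\theta\le s}(\mu\theta+\sigma W_\theta-b(\theta))$ as the inverse of $y\mapsto\tau^{\star}$ and $M_s=\max_{0\le\theta\le s}(-\mu\theta-\sigma W_\theta)$ as the inverse of $y\mapsto S(y)$. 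The $g$-term is likewise handled exactly as in the paper, by integrating $g_x$ over the range of starting points for which the path survives to $T-t$ inside the strip.

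The one place where your write-up stops short is precisely the step you flag as "the main obstacle," and this is where the paper's proof does its actual work. The coupling of $D^{\star}$ and $I^{\star}$ in \eqref{system for I and D} is not resolved by monotonicity of the hitting times alone; the paper's mechanism is a \emph{spatial splitting} of $[x,b(t)]$ at the crossover point $z:=\inf\{y:\tau^{\star}(0,y)<S(y)\}$. For $y\in(z,b(t)]$ one has $\tau^{\star}(0,y)<S(y)$, hence $I^{\star}(0,y)\equiv 0$ up to $\tau^{\star}(0,y)$ and $D^{\star}_s(0,y)=y+L_s$ explicitly (one-sided reflection at $b$ only); for $y\in[x,z)$ one has $\tau^{\star}(0,y)>S(y)$, hence $D^{\star}(0,y)\equiv 0$ up to $\tau^{\star}(0,y)$ and $I^{\star}_s(0,y)=(M_s-y)^+$ explicitly (one-sided reflection at $0$ only). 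The second ingredient is pathwise \emph{coalescence}: on the upper piece $X^{D^{\star}}_s(y)=X^{D^{\star}}_s(x)$ for all $s>\tau^{\star}(0,x)$, and on the lower piece $X^{D^{\star}}_s(y)=X^{D^{\star}}_s(z)$ for all $s\ge S(z)$, so the difference measures $d[D^{\star}(0,\cdot)]$ and $d[I^{\star}(0,\cdot)]$ between the two endpoints of each subinterval are supported entirely on the initial stretch where the one-sided formulas hold. This reduces every term to the single-barrier computations you describe and is what makes the change-of-variables identities "Skorokhod-type jump-counting" rather than a hope. (A minor boundary case, when the path started at $z$ touches but does not cross $b$, also needs the separate treatment the paper gives it in Steps 2 and 3.) With that splitting and coalescence made explicit, your argument coincides with the paper's.
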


\begin{proof}
For this proof we use instead the representation of $u$ (cf.\ \eqref{OST with optimal stopping time})
\begin{align}
u(t,x)
&=\mathbb{E}\Big[f(t+\tau^{\star}(t,x)) \mathbbm{1}_{\left\{\tau^{\star}(t,x) < T-t \wedge S(x) \right\}} +m(t+S(x))  \mathbbm{1}_{\left\{\tau^{\star}(t,x)\geq S(x) \right\}} \nonumber \\
&+  g_x(T,A_{T-t}(x)) \mathbbm{1}_{\left\{\tau^{\star}(t,x)=T-t < S(x)\right\}} \Big].
\end{align}

The proof is quite long and technical and it is organized in four steps. Moreover, in order to simplify exposition from now we set $t=0$. Indeed, all the following arguments remain valid if $t \in (0,T]$ by obvious modifications. \vspace{0.25cm} \\ 
If $x \geq b(0)$, then \eqref{integration of u with M} clearly holds. Indeed, $\int_x^{b(0)} u(0,y) ~dy=-(x-b(0))f(0)$ since $\tau^{\star}(0,y)=0$ for any $y \geq b(0)$. Also, from \eqref{def of M} $M(0,b(0))-M(0,x)=M(0,b(0))-\big[(x-b(0))f(0)+M(0,b(0))\big]$, since $D^{\star}(0,x)$ has an initial jump of size $(x-b(0))$ which is such that $X_{0+}^{D^{\star}}(x)=b(0)$. Hence, in the following we prove \eqref{integration of u with M} assuming that $x < b(0)$.

\vspace{0.25cm}

\textbf{Step 1.}\,
Here we take $x \in \{y\in \mathbb{R}_+: \tau^{\star}(0,y) < S(y) \}$; that is, the initial point $x>0$ is such that either the drifted Brownian motion reaches the boundary before hitting the origin, or the time horizon arises before hitting the origin.
Define the process $(L_s)_{s\geq 0}$ such that
\begin{equation}
L_s:=\max_{0\leq \theta \leq s}\{\mu \theta + \sigma W_{\theta} - b(\theta)\},\qquad 0 \leq s \leq T. \label{pocess L}
\end{equation}
Then we have that for all $y \in [x,b(0)]$
\begin{equation}
\{ \tau^{\star}(0,y) \leq s \} = \{L_s\geq -y\}, \label{L inverse of optimal stopping time}
\end{equation}
\begin{equation}
\{ \tau^{\star}(0,y) =T \} = \{L_T \leq -y\}, \label{L inverse of optimal stopping time 2}
\end{equation}

\begin{equation}\label{Dstar in case 1}
D^{\star}_s(0,y)=\begin{cases}
0, & 0\leq s \leq \tau^{\star}(0,y), \\
y+L_s, & \tau^{\star}(0,y)\leq s \leq S(y),
\end{cases}
\end{equation}
and 
\begin{equation}
X^{D^{\star}}_s(y)=\begin{cases}
y+\mu s+ \sigma W_s, & 0\leq s \leq \tau^{\star}(0,y), \\
\mu s+ \sigma W_s-L_s, & \tau^{\star}(0,y)\leq s \leq S(y),
\end{cases}
\end{equation}
and in particular (cf.\ \eqref{system for I and D}) $I^{\star}_s(0,y)=I^{\star}_s(0,b(0))=0$ for any $s \in [0,\tau^{\star}(0,y)]$.

Moreover, it follows by definition of $\tau^{\star}(0,x)$, $S(x)$ and $X^{D^{\star}}(x)$ that for all $y \in [x,b(0)]$ we have 
\begin{equation}\label{1a}
0=\tau^{\star}(0,b(0))\leq \tau^{\star}(0,y)\leq \tau^{\star}(0,x),
\end{equation}


\begin{equation}\label{1c}
\tau^{\star}(0,y)<\tau^{\star}(0,x)<S(x) \leq S(y),
\end{equation}
and
\begin{equation}\label{1d}
\text{ on $\left\{\tau^{\star}(0,x) <T\right\}$:} \quad X^{D^{\star}}_s(y)=X^{D^{\star}}_s(x),  \quad \forall s > \tau^{\star}(0,x).
\end{equation}
With these results at hand, we now show that for all $x \in [0,b(0)]$ such that $\tau^{\star}(0,x)<S(x)$ it holds that
\begin{equation}
\int_x^{b(0)}f(\tau^{\star}(0,y))1_{\{\tau^{\star}(0,y)< S(y)\}}dy=\int_0^{T}f(s)~d D^{\star}_s(0,b(0)) -\int_0^{T}f(s)~d D^{\star}_s(0,x),\label{integration optimal risk case 1 for f}
\end{equation}
\begin{equation}
\int_x^{b(0)} g_x (T,y+\mu T +\sigma W_T)\mathbbm{1}_{\left\{\tau^{\star}(0,y)=T< S(y)\right\}}~dy=g(T,X^{D^{\star}}_T(b(0)))-g(T,X^{D^{\star}}_T(x))
\label{integration optimal risk case 1 for g}
\end{equation}
and
\begin{equation}
\int_x^{b(0)}m(S(y))1_{\{\tau^{\star}(0,y)\geq S(y)\}}dy=\int_0^{T}m(s)~d I^{\star}_s(0,x) -\int_0^{T}m(s)~d I^{\star}_s(0,b(0)).\label{integration optimal risk case 1 for m}
\end{equation}
We start with \eqref{integration optimal risk case 1 for f}. By \eqref{1d} we have that $dD^{\star}_s(0,x)=dD^{\star}_s(0,b(0))$ for all $\tau^{\star}(0,x) < s \leq T$.
By \eqref{Dstar in case 1},
and since $\tau^{\star}(0,b(0))=0$ one also has
\begin{equation}
D^{\star}_s(0,b(0))=b(0)+L_s, ~\forall s \in [0,S(b(0))]. \label{Dstar if starting at boundary}
\end{equation}
Hence the right-hand side of \eqref{integration optimal risk case 1 for f} rewrites as
\begin{equation}
\begin{aligned}
&\int_0^{T}f(s)~d D^{\star}_s(0,b(0)) -\int_0^{T}f(s)~d D^{\star}_s(0,x)
=\int_0^{\tau^{\star}(0,x)}f(s)~dD^{\star}_s(0,b(0)) \\
&- \int_0^{\tau^{\star}(0,x)}f(s)~dD^{\star}_s(0,x)
=\int_0^{\tau^{\star}(0,x)}f(s)~dD^{\star}_s(0,b(0))=\int_0^{\tau^{\star}(0,x)}f(s)~dL_s, \label{Step 1 Dividenden}
\end{aligned}
\end{equation}
where we have used that $dD^{\star}_s(0,x)=0$ for all $s \in [0,\tau^{\star}(0,x)]$ by \eqref{Dstar in case 1}.
However, by using a change of variable formula as in Baldursson and Karatzas \cite{Baldursson1996}, equation (4.7), we obtain
\begin{equation}
\begin{aligned}
\int_x^{b(0)}f(\tau^{\star}(0,y))1_{\{\tau^{\star}(0,y)< S(y)\}}dy
=\int_x^{b(0)}f(\tau^{\star}(0,y))dy=\int_0^{\tau^{\star}(0,x)}f(s)~dL_s, 
\end{aligned} \label{Step 1 Dividenden 2}
\end{equation}
where we have used \eqref{1c} in the first step, and the fact that $L_{ \cdot }$ is the left-continuous inverse of $\tau^{\star}(0,y)$ (cf.\ \eqref{L inverse of optimal stopping time}) in the last equality. Combining \eqref{Step 1 Dividenden} and \eqref{Step 1 Dividenden 2} equation \eqref{integration optimal risk case 1 for f} holds.

Next we show \eqref{integration optimal risk case 1 for g}. Using \eqref{Dstar if starting at boundary} and again \eqref{1d} we obtain for the right-hand side of \eqref{integration optimal risk case 1 for g} that
\begin{equation*}
g(T,X^{D^{\star}}_T(b(0)))-g(T,X^{D^{\star}}_T(x))=\left[g(T,\mu T +\sigma W_T-L_T)-g(T,x+\mu T+\sigma W_T)\right]\mathbbm{1}_{\{\tau^{\star}(0,x)=T\}}.
\end{equation*}
Also, \eqref{L inverse of optimal stopping time 2} and \eqref{1c} yields 
\begin{align}
&\int_x^{b(0)} g_x (T,y+\mu T +\sigma W_T)\mathbbm{1}_{\left\{\tau^{\star}(0,y)=T\right\}}~dy 
=\int_x^{b(0)} g_x (T,y+\mu T +\sigma W_T)\mathbbm{1}_{\left\{y\leq -L_T\right\}}~dy \nonumber \\
&=\left[g(T,\mu T +\sigma W_T-L_T)-g(T,x+\mu T+\sigma W_T)\right]\mathbbm{1}_{\{\tau^{\star}(0,x)=T\}}. \nonumber
\end{align}
Hence, we obtain \eqref{integration optimal risk case 1 for g}.

Finally, for \eqref{integration optimal risk case 1 for m}
there is nothing to show. In fact, the left-hand side is equal $0$ by \eqref{1c}, while the right-hand side is zero since the processes $I^{\star}(0,x)=I^{\star}(0,b(0))$ coincide (cf.\ \eqref{1d}).

\vspace{0.25cm}

\textbf{Step 2.}\,
Here we take $x \in \left\{y \in \mathbb{R}_+:\tau^{\star}(0,y)>S(y),\ \tau^{\star}(0,q)<S(q) ~\forall q \in (y,b(0))\right\}$. For a realization like that, such an $x$ is such that the drifted Brownian motion touches the origin before hitting the boundary, but it does not cross the origin. This in particular implies that $I_s^{\star}(0,x)=0$ for all $s \leq \tau^{\star}(0,x)$. Hence the same arguments employed in Step 1 hold true, and \eqref{integration optimal risk case 1 for f} -- \eqref{integration optimal risk case 1 for m} follow. 

\vspace{0.25cm}

\textbf{Step 3.}\,
Here we take $x \in \{ y \in \mathbb{R}_+: \tau^{\star}(0,y) > S(y) \}$; that is, the drifted Brownian motion hits the origin before reaching the boundary.

Define
\begin{equation}
z:=\inf\left\{y \in [0,b(0)]: \tau^{\star}(0,y)<S(y)\right\}\label{z in Case 3}
\end{equation}
which exists finite since $y \mapsto \tau^{\star}(0,y)-S(y)$ is decreasing and $\tau^{\star}(0,b(0))=0$ and $S(0)=0$ a.s.  We want to prove that

\begin{equation}
\int_x^z m(S(y)) \mathbbm{1}_{\{\tau^{\star}(0,y)\geq S(y)\}}~dy=\int_0^{T}m(s)~dI^{\star}_s(0,x)-\int_0^{T}m(s)~dI^{\star}_s(0,z),\label{integration optimal risk case 3 for m}
\end{equation}
\begin{equation}
\int_x^{z}f(\tau^{\star}(0,y))1_{\{\tau^{\star}(0,y)< S(y)\}}dy=\int_0^{T}f(s)~d D^{\star}_s(0,z) -\int_0^{T}f(s)~d D^{\star}_s(0,x),\label{integration optimal risk case 3 for f}
\end{equation}
and
\begin{align}
&\int_x^{z} g_x(T,y+\mu T +\sigma W_T)\mathbbm{1}_{\left\{\tau^{\star}(0,y)=T< S(y)\right\}}~dy \nonumber \\
&=\left[g(T,X^{D^{\star}}_T(z))-g(T,X^{D^{\star}}_T(x))\right].
\label{integration optimal risk case 3 for g}
\end{align}
Recall the process $(M_s)_{s \geq 0}$ of \eqref{def of maximum of RBM} such that
\[
M_s=\max_{0\leq \theta \leq s}(-\mu \theta - \sigma W_{\theta}), \quad s \geq 0,
\]
and (cf.\ \eqref{S and M inverse 1})
\[
\{ M_s \geq x \} = \{ S(x) \leq s \}  \quad \forall s \geq 0.
\]
For all $y \in [x,z)$ and $s \in [0,\tau^{\star}(0,y)]$ we have
\begin{equation}\label{L in case 3}
I^{\star}_s(0,y)= (M_s-y)^+ = \begin{cases}
0, & 0\leq t \leq S(y) \\
M_s-y, & S(y)\leq s \leq \tau^{\star}(0,y),
\end{cases}
\end{equation}
and
\begin{equation}\label{X in case 3}
X^{D^{\star}}_s(y)=\begin{cases}
y+\mu s+ \sigma W_s, & 0\leq s \leq S(y) \\
\mu s+ \sigma W_s+M_s, & S(y)\leq s \leq \tau^{\star}(0,y),
\end{cases}=(y \vee M_s)+\mu s + \sigma W_s.
\end{equation}
Also, it follows by \eqref{X in case 3} and \eqref{L in case 3} that for all $y \in [x,z)$
\begin{equation}\label{2d}
X^{D^{\star}}_s(y)=X^{D^{\star}}_s(z) ~\forall s \geq S(z).
\end{equation}
Moreover, recall that

\begin{equation}\label{2a}
S(x)\leq S(y) \leq S(z),
\end{equation}

\begin{equation}\label{2b}
\tau^{\star}(0,y)>S(y),
\end{equation}


With these observations at hand we can now show \eqref{integration optimal risk case 3 for m}-\eqref{integration optimal risk case 3 for g}.

By \eqref{2d} we have that $dI^{\star}_s(0,x)=dI^{\star}_s(0,z)$ for all $s \geq S(z)$. Further, we have that $I^{\star}_s(0,z)=0 $ for all $s \leq S(z)$. Therefore, by \eqref{2a} $I^{\star}_s(0,z)=I^{\star}_s(0,x)=0$ for $s\leq S(x)$, and the right-hand side of \eqref{integration optimal risk case 3 for m} rewrites as

\begin{align}
&\int_0^{T}m(s)~dI^{\star}_s(0,x)-\int_0^{T}m(s)~dI^{\star}_s(0,z)
=\int_{S(x)}^{S(z)}m(s)~[dI^{\star}_s(0,x)-dI^{\star}_s(0,z)] \nonumber \\
&=\int_{S(x)}^{S(z)}m(s)~dI^{\star}_s(0,x) =\int_{S(x)}^{S(z)}m(s)~dM_s. \label{Step 3 integration m 1}
\end{align}
Here we have used \eqref{L in case 3} with $y=x$.

On the other hand, for the left-hand side of \eqref{integration optimal risk case 3 for m}, we use the change of variable formula of Section 4 in Chapter 0 of Revuz and Yor \cite{revuz2013}. This leads to 
\begin{equation}
\begin{aligned}
\int_x^z m(S(y)) \mathbbm{1}_{\{\tau^{\star}(0,y)\geq S(y)\}}~dy=\int_x^z m(S(y)) ~dy =\int_{S(x)}^{S(z)} m(s)~dM_s,
\end{aligned} \label{Step 3 integration m 2}
\end{equation}
where we use \eqref{2b}, the fact that $\{z\}$ is a Lebesgue zero set, and that $M$ is the right-continuous inverse of $S$ (see \eqref{S and M inverse 1}).
Combining \eqref{Step 3 integration m 1} and \eqref{Step 3 integration m 2} proves \eqref{integration optimal risk case 3 for m}.

Equation \eqref{integration optimal risk case 3 for f} follows by observing that \eqref{2d}--\eqref{2a} imply that the processes $D^{\star}(0,z)$ and $D^{\star}(0,x)$ coincide, and the left-hand side equals $0$ by definition. Notice that for such an argument particular care has to be put when considering $z$ of \eqref{z in Case 3} as a starting point for the drifted Brownian motion. In particular, if the realization of the Brownian motion is such that $\tau^{\star}(0,z) < S(z)$, then by definition of $z$, the drifted Brownian motion only touches the boundary at time $\tau^{\star}(0,z)$, but does not cross it. Hence, we still have $D_s^{\star}(0,z)=0$ for all $s \leq S(z)$, which implies \eqref{2d} and therefore still $D_s^{\star}(0,z)=D_s^{\star}(0,x)$. In turn, this gives again that \eqref{integration optimal risk case 3 for f} holds also for such a particular realization of the Brownian motion.

Finally, to prove equation \eqref{integration optimal risk case 3 for g} remember that $x \in \left\{y \in \mathbb{R}_+:\tau^{\star}(0,y)>S(y)\right\}$. By definition of $z$ we obtain $\tau^{\star}(0,y)\geq S(y)$ for all $y \in [x,z)$ and the left-hand side of \eqref{integration optimal risk case 3 for g} equals zero. By \eqref{2d} the processes $X^{D^{\star}}_s(z)=X^{D^{\star}}_s(x)$ coincides for all $s \geq S(z)$, and $S(z)\leq T$ a.s.\ by Lemma \ref{Lemma S(z) smaller T} in the Appendix. Therefore, the right-hand side of \eqref{integration optimal risk case 3 for g} equals zero as well.

\vspace{0.25cm}

\textbf{Step 4.}\,
For $x \in \{y\in \mathbb{R}_+: \tau^{\star}(0,y)<S(y)\}$, \eqref{integration of u with M} follows by the results of Step 1. If, instead, $x \in \{y\in \mathbb{R}_+: \tau^{\star}(0,y)>S(y)\}$, then we can integrate $u$ separately in the intervals $[x,z]$ and $[z,b(0)]$. When integrating $u$ in the interval $[x,z]$ we use the results of Step 3. On the other hand, integrating $u$ over $[z,b(0)]$ we have to distinguish two cases. Now, if $z$ belongs to $\{y\in \mathbb{R}_+: \tau^{\star}(0,y)<S(y)\}$, then we can still apply the results of Step 1 to conclude. If $z$ belongs to $\left\{y \in \mathbb{R}_+:\tau^{\star}(0,y)>S(y),\ \tau^{\star}(0,q)<S(q) ~\forall q \in (y,b(0))\right\}$, we can employ the results of Step 2 to obtain the claim. Thus, in any case, \eqref{integration of u with M} holds. 
\end{proof}

We now prove that the two functions $N$ and $M$ of \eqref{definition of N} and \eqref{def of M}, respectively, are such that $N=M$. To accomplish that we preliminary notice that by their definitions and strong Markov property, one has that the processes

\begin{equation}\label{N - int dl is martingale}
N(t+s \wedge \tau^{\star}(t,x), R_{s \wedge \tau^{\star}(t,x)}(x)) -
\int_0^{s \wedge \tau^{\star}(t,x)} m(t+\theta)~dI^0_{\theta}(x), ~~0\leq s \leq T-t,
\end{equation}
and
\begin{equation}\label{M - int dl is martingale}
M(t+s \wedge \tau^{\star}(t,x), R_{s \wedge \tau^{\star}(t,x)}(x)) -
\int_0^{s \wedge \tau^{\star}(t,x)} m(t+\theta)~dI^{\star}_{\theta}(t,x), ~~0\leq s \leq T-t,
\end{equation}
are $\mathbb{F}$-martingales for any $(t,x) \in [0,T] \times \mathbb{R}_+$. Moreover, by \eqref{integration of u with N} one has $N(t,x)=N(t,b(t))-\int_x^{b(t)}u(t,y)~dy$ and, due to \eqref{integration of u with M}, $M(t,x)=M(t,b(t))-\int_x^{b(t)}u(t,y)~dy$.
Hence,
\begin{equation}
\Psi(t):= M(t,x)-N(t,x),\quad t \in [0,T],
\end{equation}
is independent of the $x$ variable. We now prove that one actually has $\Psi=0$ and therefore $N=M$.

\begin{theorem}\label{N=M}
It holds $\Psi(t)=0$ for all $t \in [0,T]$. Therefore, $N=M$ on $[0,T]\times \mathbb{R}_+$. 
\end{theorem}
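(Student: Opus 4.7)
The plan is to apply optional sampling to the martingales in \eqref{N - int dl is martingale}--\eqref{M - int dl is martingale}, use the $x$-independence of $\Psi$ already encoded in \eqref{integration of u with N}--\eqref{integration of u with M} to obtain a scalar functional equation on $[0,T]$, and conclude by a maximum-principle argument.

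First I would evaluate the two martingales at $s=T-t$, yielding
\begin{align*}
N(t,x)&=\mathbb{E}\Big[N(t+\tau^{\star}(t,x),R_{\tau^{\star}(t,x)}(x))-\int_0^{\tau^{\star}(t,x)}m(t+\theta)\,dI^0_\theta(x)\Big],\\
M(t,x)&=\mathbb{E}\Big[M(t+\tau^{\star}(t,x),R_{\tau^{\star}(t,x)}(x))-\int_0^{\tau^{\star}(t,x)}m(t+\theta)\,dI^{\star}_\theta(t,x)\Big].
\end{align*}
The key pathwise remark is that on $[0,\tau^{\star}(t,x)]$ no dividends are paid, so by \eqref{system for I and D} the capital-injection process $I^{\star}_\theta(t,x)$ coincides with the uncontrolled reflector $I^0_\theta(x)$; subtracting the two identities therefore eliminates the $dI$-integrals. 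On $\{\tau^{\star}(t,x)<T-t\}$ the definition of the boundary gives $R_{\tau^{\star}(t,x)}(x)=b(t+\tau^{\star}(t,x))$, so $(M-N)$ at that point reduces to $\Psi(t+\tau^{\star}(t,x))$ by $x$-independence of $\Psi$; on the complementary event $\{\tau^{\star}(t,x)=T-t\}$ the terminal conditions $M(T,\cdot)=g(T,\cdot)=N(T,\cdot)$ make the integrand vanish, which equals $\Psi(T)$. Combining, I obtain the scalar functional equation
\begin{equation*}
\Psi(t)=\mathbb{E}\big[\Psi(t+\tau^{\star}(t,x))\big],\qquad x\in[0,b(t)].
\end{equation*}

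I would then conclude by a maximum-principle argument. Continuity of $f$, $m$, $g$, $b$ combined with dominated convergence gives continuity and boundedness of $\Psi$ on $[0,T]$, and classical first-passage estimates for reflected drifted Brownian motion show that the law of $\tau^{\star}(t,0)$ has a strictly positive density on $(0,T-t)$ (with an atom at $T-t$). Choosing $x=0$ and letting $t_\star$ be any minimizer of $\Psi$ on $[0,T]$, the inequality $\Psi(\cdot)\geq\Psi(t_\star)$ together with the functional equation forces $\Psi(t_\star+\tau^{\star}(t_\star,0))=\Psi(t_\star)$ almost surely; continuity and the density property then imply $\Psi\equiv\Psi(t_\star)$ on $[t_\star,T]$, whence $\Psi(t_\star)=\Psi(T)=0$ and thus $\Psi\geq 0$ on $[0,T]$. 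The symmetric argument at a maximizer gives $\Psi\leq 0$, so $\Psi\equiv 0$ and therefore $N\equiv M$ on $[0,T]\times\mathbb{R}_+$.

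The main obstacle I anticipate is the careful pathwise identification $I^{\star}_\theta(t,x)=I^0_\theta(x)$ on $[0,\tau^{\star}(t,x)]$, which hinges on the interplay between the absorbed process entering the definition of $\tau^{\star}$ and the reflected process driving $I^{\star}$, and requires a short but delicate argument based directly on \eqref{system for I and D}; the remaining ingredients—continuity and boundedness of $\Psi$, the full-support property of $\tau^{\star}(t,0)$, and the measurability/integrability issues needed to apply optional sampling and dominated convergence—are technical but standard.
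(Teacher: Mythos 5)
Your architecture (optional sampling plus the $x$-independence of $\Psi$, then a scalar argument in $t$) is in the spirit of the paper's proof, but the two steps on which your derivation pivots both fail, and for the same underlying reason: you conflate the \emph{absorbed} process $A$ appearing in the definition \eqref{optimal stopping time} of $\tau^{\star}$ with the \emph{reflected} process $R$ that drives $N$ and the controlled process that drives $M$.

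First, the pathwise identity $I^{\star}_{\theta}(t,x)=I^{0}_{\theta}(x)$ on $[0,\tau^{\star}(t,x)]$ is false in general. On the event where the drifted Brownian motion reaches $0$ before reaching the boundary, $A$ is absorbed (cf.\ \eqref{absorbed process}) and hence $\tau^{\star}(t,x)=T-t$; if the reflected path subsequently reaches $b(t+\cdot)$ at some $\theta_0<T-t$, then dividends are paid on $(\theta_0,T-t]\subset[0,\tau^{\star}(t,x)]$, and thereafter $I^{\star}(t,x)$ need not coincide with $I^{0}(x)$ (by \eqref{system for I and D}, $D^{\star}\geq 0$ pushes the controlled path down, so it returns to zero more often and $I^{\star}\geq I^{0}$, with strict inequality on a set of positive probability). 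Consequently the $dI$-integrals do not cancel when you subtract the two identities; what you actually get is $\Psi(t)=\mathbb{E}[\Psi(t+\tau^{\star}(t,x))]-\mathbb{E}\big[\int_0^{\tau^{\star}(t,x)}m(t+\theta)\,(dI^{\star}_{\theta}(t,x)-dI^{0}_{\theta}(x))\big]$, and showing that the correction term vanishes is essentially equivalent to the theorem itself. (For the same reason, the optional-sampling identities you start from are not justified all the way up to $\tau^{\star}$: the processes in \eqref{N - int dl is martingale}--\eqref{M - int dl is martingale} are only genuinely martingales as long as the path has reached neither $0$ nor the boundary, since otherwise a running term involving $(R_{\theta}(x)-b(t+\theta))^{+}f'(t+\theta)$ survives.) Second, your concluding step takes $x=0$; but $S(0)=0$, so $A_s(0)=\Delta$ for all $s\geq 0$ and $\tau^{\star}(t,0)=T-t$ deterministically --- there is no ``strictly positive density on $(0,T-t)$'', and the maximum-principle argument has nothing to bite on.

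The paper avoids both problems by localizing: it works on a rectangle $\mathcal{R}=(t_0-\varepsilon,t_0+\varepsilon)\times(x_1,x_2)$ with $x_1>0$ and $cl(\mathcal{R})\subset\mathcal{C}$, and stops at the exit time $\widehat{\tau}\leq\tau^{\star}$ from $\mathcal{R}$. Before $\widehat{\tau}$ the path reaches neither $0$ nor the boundary, so both $dI$-integrals vanish and optional sampling is legitimate; Feynman--Kac then identifies $N-M$ with the smooth solution of the parabolic problem $(P)$, so $\Psi=N-M$ solves $\Psi_t+\mathcal{L}\Psi=0$ in $\mathcal{C}$, the $x$-independence kills the $\mathcal{L}$-term, and $\Psi'\equiv 0$ together with $\Psi(T)=0$ gives $\Psi\equiv 0$. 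If you wish to keep a purely probabilistic route, you must likewise stop the processes strictly before they can reach either $0$ or $b$; that is exactly the localization you are missing.
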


\begin{proof} 
Since $(N-M)$ is independent of $x$, it suffices to show that $(N-M)(t,x)=0$ at some $x$ for any $t\leq T$. To accomplish that we show $\Psi^{\prime}(t)=0$ for any $t<T$, since by \eqref{integration of u with N} and \eqref{integration of u with M} we already know that
\[\Psi(T)=N(T,x)-M(T,x)=g(T,x)-g(T,x)=0.\]

Then take $ 0 < x_1 < x_2, t_0 \in [0,T)$ and $\varepsilon>0$ such that $t_0+\varepsilon <T$ given and fixed, consider the rectangular domain $\mathcal{R}:=(t_0-\varepsilon,t_0+\varepsilon)\times (x_1,x_2)$ such that $cl(\mathcal{R}) \subset \mathcal{C}$ (where $\mathcal{C}$ has been defined in \eqref{continuation region}). Also, denote by $\partial_0 \mathcal{R}:= \partial \mathcal{R} \backslash \left( \{t_0-\varepsilon\}\times (x_1,x_2)\right)$. Then consider the problem 
\begin{equation*}
(P)\, 
\begin{cases}
h_t(t,x) = \mathcal{L}h(t,x), \quad (t,x) \in \mathcal{R},\\
h(t,x)=(N-M) (t,x), \quad (t,x) \in \partial_0 \mathcal{R},
\end{cases}
\end{equation*}
where $\mathcal{L}$ is the second-order differential operator that acting on $\varphi \in C^{1,2}([0,T]\times \mathbb{R})$ gives
\[ (\mathcal{L}\varphi) (t,x)= \mu \frac{\partial \varphi}{\partial x}(t,x) +\frac{1}{2}\sigma^2 \frac{\partial^2 \varphi}{\partial x^2}(t,x), ~(t,x) \in [0,T] \times \mathbb{R}. \]

By reversing time, $t \mapsto T-t$, Problem (P) corresponds to a classical initial value problem with uniformly elliptic operator (notice that $\sigma^2>0$) and parabolic boundary $\partial_0 \mathcal{R}$. Since $N-M$ is continuous, and all the coefficients in the first equation of $(P)$ are smooth (actually constant), by classical theory of partial differential equations of parabolic type (see, e.g., Chapter V in the book by Lieberman \cite{lieberman1996second}) problem $(P)$ admits a unique solution $h$ that is continuous, with continuous derivatives $h_t,h_x,h_{xx}$. Moreover, by the Feynman-Kac's formula, such a solution admits the representation

\[h(t,x)=\mathbb{E}[(N-M)(t+\widehat{\tau}(t,x),Z_{\widehat{\tau}(t,x)}(x))], \]
where 
\[\widehat{\tau}(t,x) := \inf \{ s \in [0,T-t):(t+s,Z_s(x)) \in \partial_0 \mathcal{R} \} \wedge (T-t), \]
and $Z_s(x)=x+\mu s + \sigma W_s, s \geq 0$.
Notice that we have $\widehat{\tau}(t,x) \leq \tau^{\star}(t,x)$ a.s., since $cl(\mathcal{R}) \subset \mathcal{C}$. Also, the integral terms in \eqref{N - int dl is martingale} and \eqref{M - int dl is martingale} are equal since $dI^0_{\theta}(x)=dI_{\theta}^{\star}(t,x)=0$ for any $\theta \leq \widehat{\tau}(t,x)\leq \tau^{\star}(t,x)$. Hence by the martingale property of \eqref{N - int dl is martingale} and \eqref{M - int dl is martingale} we have 
\begin{equation}
h(t,x)=(N-M)(t,x)\, \text{ in } \mathcal{R},
\end{equation}
and, by arbitrariness of $\mathcal{R}$, 
\[\Psi(t)=(N-M)(t,x)=h(t,x)\, \text{ in } \mathcal{C}.\]
Therefore, since $\Psi=N-M$ is independent of $x$, continuous in $t$ and solves the first equation of $(P)$ in $\mathcal{C}$, we obtain
$\Psi^{\prime}(t)=0$ for any $t < T$. Hence $\Psi(t)=0$ for any $t \leq T$ since $\Psi(T)=0$, and thus $N(t,x)-M(t,x)=0$ for any $t \leq T$ and for any $x \in (0,\infty)$.

\end{proof}

In the following we show that the function $N$ is an upper bound for the value function $V$ of \eqref{value function of ocp}. We first prove the following result.

\begin{theorem}\label{Theorem for N}
For any $(t,x) \in \mathbb{R}_+ \times [0,T]$ the process 
\begin{equation}\label{N is supermartingale}
\widetilde{N}_s:=N(t+s, R_s(x)) -
\int_0^{s} m(t+u)~dI^0_{u}(x), \quad 0\leq s \leq T-t,
\end{equation} 
is an $\mathbb{F}$-supermartingale.
\end{theorem}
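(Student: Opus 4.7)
The plan is to establish the supermartingale property via a flow identity for $N$ obtained from the strong Markov property of the reflected drifted Brownian motion $R$, combined with the sign information on $f'$ on the stopping region supplied by Corollary \ref{corollary for D}-(ii).

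Fix $(t,x) \in [0,T] \times \mathbb{R}_+$ and $0 \leq s_1 \leq s_2 \leq T - t$. Since, conditionally on $\mathcal{F}_{s_1}$, uniqueness of the Skorokhod reflection identifies the shifted pair $\big(R_{s_1+v}(x),\, I^0_{s_1+v}(x) - I^0_{s_1}(x)\big)_{v \geq 0}$ with a freshly started reflected drifted Brownian motion issued from $R_{s_1}(x)$ and its local time at zero, splitting the integrals in \eqref{definition of N} at $s_1$ and invoking the strong Markov property on the tail portion yields the flow identity
\begin{align*}
N(t,x) = \mathbb{E}\bigg[ &-\int_0^{s_1} \big(R_u(x) - b(t+u)\big)^+ f'(t+u)\,du \\
&- \int_0^{s_1} m(t+u)\,dI^0_u(x) + N(t+s_1, R_{s_1}(x)) \bigg].
\end{align*}

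Next, I would observe that the first integrand is pointwise non-positive. On $\{R_u(x) > b(t+u)\}$ the point $(t+u, R_u(x))$ lies in the stopping region $\mathcal{S}$ defined in \eqref{stopping region}, so Corollary \ref{corollary for D}-(ii) forces $f'(t+u) \leq 0$; on the complement the positive part vanishes. Combining this with the flow identity and the definition of $\widetilde{N}_{s_1}$, I would conclude
\[ \mathbb{E}[\widetilde{N}_{s_1}] = N(t,x) + \mathbb{E}\bigg[\int_0^{s_1} \big(R_u(x) - b(t+u)\big)^+ f'(t+u)\,du\bigg] \leq N(t,x) = \widetilde{N}_0. \]

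To upgrade this to a conditional inequality, I would rerun the argument starting from $s_1$. By the strong Markov property of $R$ at $s_1$, after isolating the $\mathcal{F}_{s_1}$-measurable term $-\int_0^{s_1} m(t+u)\,dI^0_u(x)$, the conditional expectation $\mathbb{E}[\widetilde{N}_{s_2}\mid \mathcal{F}_{s_1}]$ reduces to that very term plus the previous first-moment expression evaluated at the base point $(t+s_1, R_{s_1}(x))$ over horizon $s_2 - s_1$. Applying the already established first-moment inequality at this base point bounds the latter by $N(t+s_1, R_{s_1}(x))$, so $\mathbb{E}[\widetilde{N}_{s_2}\mid \mathcal{F}_{s_1}] \leq \widetilde{N}_{s_1}$, which is the desired supermartingale property.

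The main obstacle will be the careful justification of the flow identity in the presence of the singular increments $dI^0_u(x)$, which requires regular versions of conditional laws for the reflected process, together with integrability of $\int_0^{T-t} m(t+u)\,dI^0_u(x)$ and of $g(T, R_{T-t}(x))$; these follow from boundedness of $m$ on $[0,T]$, the standard $L^p$-estimates for the running maximum of drifted Brownian motion (which controls $I^0_{T-t}(x)$), and the continuity and nonnegativity of $g$ assumed in Assumption \ref{basic assumptions}.
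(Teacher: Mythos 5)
Your proof is correct and follows essentially the same route as the paper: both rest on the strong Markov representation of $N$ along the reflected process (yielding $\mathbb{E}[\widetilde{N}_\rho]=N(t,x)+\mathbb{E}[\int_0^\rho f'(t+u)(R_u(x)-b(t+u))^+du]$) together with Corollary \ref{corollary for D}-(ii), which gives $f'\leq 0$ on the stopping region. The only cosmetic difference is that the paper verifies the supermartingale property by checking monotonicity of expectations along bounded stopping times (Problem 3.26 in \cite{karatzas1991brownian}), whereas you verify the conditional inequality at deterministic times directly via the Markov property; both are valid.
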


\begin{proof}
It is enough to show that $\mathbb{E}\big[\widetilde{N}_\theta\big] \leq \mathbb{E}[\widetilde{N}_\tau]$ for all bounded $\mathbb{F}$-stopping times $\theta,\tau$ such that $\theta \geq \tau$ (see Karatzas and Shreve \cite{karatzas1991brownian}, Chapter 1, Problem 3.26).

By the strong Markov property and the definition of $N$ \eqref{definition of N}, we get that for any bounded $\mathbb{F}$-stopping time $\rho$ one has
\begin{align*}
\mathbb{E}[\widetilde{N}_\rho ]
&=  \mathbb{E}\left[N(t+\rho,R_{\rho}(x)) -
\int_0^{\rho} m(t+s)~dI^0_{s}(x)\right] \\
&= \mathbb{E} \left[ -\int_{\rho}^{T-t} f^{\prime} (t+s)[R_s(x)-b(t+s)]^+ ds \right.- \left.\int_{0}^{T-t} m(t+s)~dI^0_{s}(x) 
+  g(R_{T-t}(x))\right] \\
&=N(t,x)+\mathbb{E}\left[\int_0^{\rho} f^{\prime} (t+s)\big(R_s(x)-b(t+s)\big)^+ ds \right] =: N(t,x)+\Delta_{\rho},
\end{align*}
for any $(t,x)\in [0,T] \times \mathbb{R}_+$. Hence, taking $\theta, \tau$ such that $T-t\geq \theta \geq \tau$ we get from the latter that $\mathbb{E}[\widetilde{N}_\theta ]=N(t,x)+\Delta_{\theta} \leq N(t,x)+\Delta_{\tau}=\mathbb{E}[\widetilde{N}_\tau ]$, where the inequality is due to the fact that $f^{\prime} \leq 0$ on $\mathcal{S}$ (cf.\ Corollary \ref{corollary for D}-(ii)). This proves the claimed supermartingale property.
\end{proof}

To proceed further, we need the following properties of the function $N$ of \eqref{definition of N}. Its proof is relegated to the Appendix.

\begin{lemma}\label{Lemma for Ito for N}
The function $N \in C^{1,2}([0,T)\times (0,\infty)) \cap C^0([0,T]\times \mathbb{R}_+)$.
\end{lemma}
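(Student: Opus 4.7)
The plan is to identify $N$ as the unique classical solution of an inhomogeneous linear parabolic Neumann problem on the half-strip $[0,T) \times [0,\infty)$, and then transfer the PDE regularity back to $N$ via a stochastic representation. Combining a formal It\^o expansion of $N(t+s, R_s(x))$ (using the Skorokhod decomposition $dR_s(x) = \mu\, ds + \sigma\, dW_s + dI^0_s(x)$ and the fact that $I^0$ only increases on $\{R=0\}$) with the identity $\mathbb{E}[\widetilde N_\rho] = N(t,x) + \mathbb{E}[\int_0^\rho f'(t+s)(R_s(x) - b(t+s))^+ ds]$ derived in the proof of Theorem \ref{Theorem for N} forces $N$ to solve
\begin{equation*}
\begin{cases}
N_t + \mu N_x + \tfrac{1}{2}\sigma^2 N_{xx} = f'(t)(x-b(t))^+, & (t,x) \in [0,T) \times (0,\infty), \\
N_x(t,0) = m(t), & t \in [0,T),\\
N(T,x) = g(T,x), & x>0,
\end{cases}
\end{equation*}
which will be the target system.

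The continuity half of the statement is the easier part. Using the explicit pathwise representations $I^0_s(x) = (M_s - x)^+$ and $R_s(x) = (x \vee M_s) + \mu s + \sigma W_s$ (with $M$ the supremum process of \eqref{def of maximum of RBM}), each summand in \eqref{definition of N} is a continuous functional of $(t,x) \in [0,T] \times \mathbb{R}_+$ pathwise, and admits an integrable majorant uniform on compact $(t,x)$-sets thanks to the continuity of $f', m, g, b$ and the $L^2$-bound on $M_{T-t}$. Dominated convergence then yields continuity of $N$ on all of $[0,T] \times \mathbb{R}_+$, including the terminal slice where $N(T,x) = g(T,x)$.

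For the $C^{1,2}$ regularity I would produce, by classical linear parabolic theory, a solution $v$ of the boundary value problem above in $C^{1,2}([0,T) \times (0,\infty)) \cap C([0,T] \times \mathbb{R}_+)$, and then identify $v \equiv N$. Since the forcing $F(t,x) := f'(t)(x-b(t))^+$ is jointly continuous and Lipschitz in $x$, while $m \in C^1([0,T])$ and $g(T,\cdot) \in C^1((0,\infty))$ by Assumption \ref{basic assumptions}, an approximation argument (mollify $b$ and $f'$, solve on bounded subdomains with auxiliary conditions at infinity, and pass to the limit using interior $W^{2,1,p}_{\mathrm{loc}}$ estimates and Sobolev embedding) delivers such a $v$. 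Applying It\^o's formula to $v(t+s, R_s(x))$ and exploiting the Neumann condition to convert $v_x(t+s,0)\,dI^0_s$ into $m(t+s)\,dI^0_s$, and the PDE to convert the drift into $F(t+s, R_s(x))\,ds$, then taking expectations (the stochastic integral being a true martingale by standard localisation), one recovers precisely \eqref{definition of N}; hence $v = N$ and the claimed regularity follows.

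The principal obstacle is the classical solvability of this Neumann problem: $F$ is only Lipschitz in $x$ (not $C^1$) and $b$ is merely continuous in $t$ (not H\"older), so Schauder estimates do not apply directly across the moving interface $\{x = b(t)\}$. The clean route is the mollification scheme sketched above, coupled with $L^p$ parabolic theory (Calder\'on-Zygmund) to obtain uniform bounds on $v^n_t, v^n_{xx}$ in $L^p_{\mathrm{loc}}$ and then Sobolev embedding to inherit continuity of the second-order derivatives in the limit.
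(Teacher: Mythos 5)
Your strategy---characterizing $N$ as the solution of a parabolic Neumann problem with forcing $f'(t)\,(x-b(t))^{+}$ and identifying it with \eqref{definition of N} via It\^o's formula---is a legitimate alternative in outline, but the step that is supposed to deliver the $C^{1,2}$ regularity fails as written. Uniform $W^{2,1}_{p,\mathrm{loc}}$ bounds on the mollified solutions $v^n$ plus parabolic Sobolev embedding give convergence in $C^{1+\alpha,(1+\alpha)/2}_{\mathrm{loc}}$, i.e.\ continuity of $v$ and $v_x$ in the limit; they do \emph{not} give continuity of $v_t$ and $v_{xx}$. Since the forcing is only continuous (not Dini or H\"older) in $t$ across the moving interface $\{x=b(t)\}$ --- $b$ is merely continuous under Assumption \ref{Assumption on regions and stopping time} --- and mere continuity of the right-hand side is known to be insufficient for classical second derivatives, this is exactly the point that cannot be waved through. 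It could be repaired by invoking interior Schauder estimates that require H\"older continuity of the forcing only in the space variable (where $F$ is Lipschitz, uniformly in $t$), but that is a different tool from the one you name. A second, smaller issue: the It\^o verification $v\equiv N$ needs $v_x$ continuous up to the lateral boundary $x=0$ in order to convert $v_x(t+s,0)\,dI^0_s$ into $m(t+s)\,dI^0_s$; this is more boundary regularity than your construction provides (and more than the lemma itself asserts).

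The paper's proof is far lighter and sidesteps the PDE entirely. The $x$-regularity is free: differentiating \eqref{integration of u with N} in $x$ gives $N_x(t,\cdot)=u(t,\cdot)$, and Corollary \ref{smooth fit for u} (itself proved by writing $u$ against explicit Gaussian densities) gives $u(t,\cdot)\in C^1((0,\infty))$, hence $N(t,\cdot)\in C^2((0,\infty))$. For the $t$-regularity and the continuity on $[0,T]\times\mathbb{R}_+$, the paper writes each term of \eqref{definition of N} against the explicit transition density of the reflected drifted Brownian motion and the law of the running supremum, integrates the $dI^0$ term by parts using $I^0_s(x)=\big(\sigma\xi_s-x\big)\vee 0$, and differentiates under the integral sign. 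Your continuity argument via pathwise representations and dominated convergence is fine and close in spirit to this; for the differentiability, you should exploit the already-established identity $N_x=u$ rather than rebuilding everything from parabolic theory.
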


Thanks to Lemma \ref{Lemma for Ito for N}, an application of It\^{o}'s formula allows us to obtain the following (unique) Doob-Meyer decomposition of the $\mathbb{F}$-supermartingale $\widetilde{N}$ (cf.\ \eqref{N is supermartingale}).

\begin{corollary}
The $\mathbb{F}$-supermartingale $\widetilde{N}$ of \eqref{N is supermartingale} is such that for all $(t,x) \in [0,T] \times \mathbb{R}_+$ and $s \in [0,T-t]$
\begin{equation}\label{Doob Meyer for N}
N(t+s, R_s(x)) -
\int_0^{s} m(t+\theta)~dI^0_{\theta}(x)=N(t,x)+ \sigma \int_0^s u(t+\theta, R_{\theta}(x))~dW_\theta + \Pi_s(t,x),
\end{equation} 
where $\Pi_{\cdot}(t,x)$ is a continuous, nonincreasing and $\mathbb{F}$-adapted process.
\end{corollary}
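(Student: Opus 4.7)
The plan is to derive the decomposition from an application of It\^o's formula to $N(t+s,R_s(x))$, after which the nonincreasing character of the drift part will follow from the supermartingale property of Theorem \ref{Theorem for N} together with uniqueness of the Doob-Meyer decomposition. By Lemma \ref{Lemma for Ito for N}, $N \in C^{1,2}([0,T)\times (0,\infty))\cap C^{0}([0,T]\times \mathbb{R}_+)$, and $R_s(x)=x+\mu s +\sigma W_s + I^0_s(x)$ is a continuous semimartingale whose bounded-variation component $I^0$ gives no contribution to the quadratic variation. Hence It\^o's formula yields
\begin{align*}
N(t+s,R_s(x)) = \,& N(t,x) + \int_0^s \Big( \partial_t N + \mu \partial_x N + \tfrac{1}{2}\sigma^2 \partial_{xx} N \Big)(t+\theta, R_\theta(x))\, d\theta \\
& + \sigma \int_0^s \partial_x N (t+\theta, R_\theta(x))\, dW_\theta + \int_0^s \partial_x N (t+\theta, R_\theta(x))\, dI^0_\theta(x).
\end{align*}

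Next I would handle the integral against $dI^0$. Differentiating identity \eqref{integration of u with N} in $x$ gives $\partial_x N(t,x)=u(t,x)$ for every $(t,x)\in [0,T]\times \mathbb{R}_+$, and the process $I^0(x)$ is a.s.\ flat off $\{\theta\geq 0:\, R_\theta(x)=0\}$. Furthermore, reading the stopping problem \eqref{Stopping problem} at $x=0$, where $S(0)=0$ a.s., yields $u(t,0)=m(t)$ for every $t\in [0,T]$. Combining these three facts gives
$$\int_0^s \partial_x N(t+\theta, R_\theta(x))\, dI^0_\theta(x) \,=\, \int_0^s u(t+\theta,0)\, dI^0_\theta(x) \,=\, \int_0^s m(t+\theta)\, dI^0_\theta(x).$$
Transposing this term to the left-hand side of the It\^o expansion above delivers the claimed identity with martingale integrand $u(t+\theta, R_\theta(x))$ and
$$A_s(t,x):=\int_0^s \Big( \partial_t N + \mu \partial_x N + \tfrac{1}{2}\sigma^2 \partial_{xx} N \Big)(t+\theta, R_\theta(x))\, d\theta,$$
which is manifestly continuous and $\mathbb{F}$-adapted.

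It remains to establish that $A(t,x)$ is nonincreasing. By Theorem \ref{Theorem for N}, the process $\widetilde{N}$ of \eqref{N is supermartingale} is a supermartingale, while the identity just obtained exhibits it as the sum of the constant $N(t,x)$, the continuous local martingale $\sigma \int_0^\cdot u(t+\theta,R_\theta(x))\, dW_\theta$, and the continuous, adapted, finite-variation process $A(t,x)$. Uniqueness of the Doob-Meyer decomposition of a continuous supermartingale (applied on each set $[0,\tau_n]$ where $\tau_n\uparrow T-t$ is a localizing sequence for the stochastic integral, and then passed to the limit) forces the finite-variation part to be nonincreasing, which is precisely the claim.

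The main obstacle is the careful handling of the reflection term: one must both identify $\partial_x N$ with $u$ on the entire state space (up to the boundary $x=0$) and exploit the boundary condition $u(t,0)=m(t)$ in order to match the $dI^0$ contribution produced by It\^o's formula with the compensator subtracted in the definition of $\widetilde{N}$. Once this reduction is in place, the nonincreasing property is a direct consequence of Doob-Meyer uniqueness combined with the supermartingale property already proved in Theorem \ref{Theorem for N}.
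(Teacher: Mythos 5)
Your proof is correct, but it reverses the logical order of the paper's argument, and the comparison is instructive. The paper proceeds abstract-to-concrete: it first invokes the Doob--Meyer decomposition of the supermartingale $\widetilde{N}$ (which delivers the existence of the nonincreasing part $A$ immediately), then applies the martingale representation theorem to write the martingale part as $\int_0^\cdot \phi_\theta\,dW_\theta$, and only at the end uses It\^o's lemma to identify $\phi_\theta=\sigma u(t+\theta,R_\theta(x))$. You go concrete-to-abstract: you compute the full It\^o expansion of $N(t+s,R_s(x))$ up front, which forces you to deal explicitly with the reflection term $\int_0^s \partial_x N\,dI^0_\theta(x)$ --- and you handle it correctly, via $\partial_x N=u$ (differentiating \eqref{integration of u with N}), the flatness of $I^0(x)$ off $\{R_\theta(x)=0\}$, and the boundary identity $u(t,0)=m(t)$ coming from $S(0)=0$; this shows the It\^o reflection term cancels exactly against the compensator $\int_0^s m(t+\theta)\,dI^0_\theta(x)$ in $\widetilde N$. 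You then recover the nonincreasing property of $A$ from Theorem \ref{Theorem for N} together with uniqueness of the canonical (Doob--Meyer) decomposition, after localization. What your route buys is an explicit formula $A_s(t,x)=\int_0^s(\partial_t N+\mu\partial_xN+\tfrac12\sigma^2\partial_{xx}N)(t+\theta,R_\theta(x))\,d\theta$ and a transparent accounting of where the $dI^0$ term goes; what it costs is that you must apply It\^o's formula to $N$ along a path that visits the reflecting boundary $x=0$, whereas Lemma \ref{Lemma for Ito for N} only asserts $C^{1,2}$ regularity on the open set $(0,\infty)$. This is a minor point rather than a gap --- the explicit density representations in the Appendix show that $\partial_xN=u$ extends continuously to $x=0$ with value $m(t)$, and a standard approximation (or stopping before hitting $0$ and passing to the limit) makes the application of It\^o's formula rigorous --- but it is worth flagging, since the paper's ordering avoids it by obtaining the decomposition abstractly first and using It\^o only to identify the Brownian integrand, where the set $\{\theta: R_\theta(x)=0\}$ is Lebesgue-null and hence harmless.
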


\begin{proof}
By the Doob-Meyer decomposition, the $\mathbb{F}$-supermartingale in \eqref{N is supermartingale} can be (uniquely) written as the sum of an $\mathbb{F}$-martingale and a continuous, $\mathbb{F}$-adapted nonincreasing process $(\Pi_s)_{s\geq0}$. Applying the martingale representation theorem to the martingale part of $\widetilde{N}$, yields the decomposition
\begin{equation}
\widetilde{N}_s= N(t,x) + \int_0^s \phi_{\theta} ~dW_{\theta} + \Pi_s(t,x),
\end{equation}
for some $\phi \in L^2(\Omega \times [0,T], \mathbb{P}\otimes dt)$. Finally, an application of It\^{o}'s lemma shows that $\phi_{\theta}=\sigma u(t+\theta,R_{\theta}(x))$ a.s. 
\end{proof}

\begin{theorem}\label{theorem for Q}
For any process $D\in \mathcal{D}(t,x)$ and any $(t,x) \in [0,T] \times \mathbb{R}_+$, the process
\begin{equation}\label{def of Q}
Q_s(D;t,x) := \int_{[0,s]}f(t+\theta)~dD_{\theta} -
\int_0^{s} m(t+\theta)~dI^{D}_{\theta} + N(t+s,X^D_{s}(x)),
\end{equation}  
$s\in [0,T-t]$, is such that
\begin{equation}\label{Q smaller N}
\mathbb{E}\left[Q_s(D;t,x)\right] \leq N(t,x), \quad \text{ for any } s \in [0,T-t].
\end{equation}
\end{theorem}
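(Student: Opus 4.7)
The plan is to apply It\^o's formula to $N(t+s, X^D_s(x))$ and to exploit three facts: (a) $\partial_x N = u$ on $[0,T) \times (0,\infty)$; (b) the variational inequality $\partial_t N + \cL N \leq 0$ on $[0,T) \times (0,\infty)$, where $\cL\varphi := \mu\,\partial_x \varphi + \tfrac{1}{2}\sigma^2\,\partial_{xx}\varphi$; and (c) the pointwise identity $u(t,0) = m(t)$ together with the bound $u(t,x) \geq f(t)$ on $[0,T] \times \R_+$. Fact (a) is immediate upon differentiating the identity $N(t,x) = N(t,b(t)) - \int_x^{b(t)} u(t,y)\,dy$ coming from \eqref{integration of u with N}. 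For (c), $u \geq f$ follows from $\tau \equiv 0$ being admissible in \eqref{Stopping problem}, while $S(0) = 0$ a.s.\ forces $u(t,0) = m(t)$. To derive (b), I would apply It\^o's formula to $N(t+s, R_s(x))$ (licit by Lemma \ref{Lemma for Ito for N}) and match its drift with the nonincreasing part of the Doob--Meyer decomposition \eqref{Doob Meyer for N}; the reflection contribution $\int_0^s [\partial_x N(t+\theta,0) - m(t+\theta)]\,dI^0_\theta(x)$ vanishes by (a) and (c), so $A_s(t,x) = \int_0^s[\partial_t N + \cL N](t+\theta, R_\theta(x))\,d\theta$ is nonincreasing; since $R_\theta(x)$ can visit any point of $(0,\infty)$ as $(x,s)$ vary, continuity of $\partial_t N + \cL N$ forces the pointwise inequality everywhere.

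With (a)--(c) at hand, apply It\^o's formula for c\`adl\`ag semimartingales to $s \mapsto N(t+s, X^D_s(x))$. Since $I^D$ is continuous (Proposition 2 in \cite{Chaleyat80}), the jumps of $X^D$ come only from $D$ with $\Delta X^D_\theta = -\Delta D_\theta$; moreover $dI^D$ is supported on $\{X^D = 0\}$, so $\partial_x N(t+\theta, X^D_\theta)\,dI^D_\theta = m(t+\theta)\,dI^D_\theta$ by (a) and (c). One then obtains
\begin{align*}
N(t+s, X^D_s) &= N(t,x) + \int_0^s [\partial_t N + \cL N](t+\theta, X^D_{\theta-})\,d\theta + \sigma \int_0^s u(t+\theta, X^D_{\theta-})\,dW_\theta \\
&\quad - \int_0^s u(t+\theta, X^D_{\theta-})\,dD^c_\theta + \int_0^s m(t+\theta)\,dI^D_\theta \\
&\quad - \sum_{0 < \theta \leq s} \int_0^{\Delta D_\theta} u(t+\theta, X^D_{\theta-} - z)\,dz.
\end{align*}
After localizing the stochastic integral to a true martingale, taking expectations and using (b) for the drift and $u \geq f$ for the two $D$-terms yields
\begin{equation*}
\mathbb{E}[N(t+s, X^D_s)] \leq N(t,x) - \mathbb{E}\!\left[\int_{[0,s]} f(t+\theta)\,dD_\theta\right] + \mathbb{E}\!\left[\int_0^s m(t+\theta)\,dI^D_\theta\right],
\end{equation*}
which is exactly \eqref{Q smaller N} upon rearranging the definition of $Q_s(D;t,x)$.

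The main technical obstacle is the localization procedure: one must pass to the limit in the expectations after localizing, which requires dominated- or monotone-convergence control on the integrands against $dD_\theta$, $dI^D_\theta$ and $d\theta$. The nonnegativity of $f$, $m$, $u$, and $N$ (the latter inherited from that of $g$ and from the structure of \eqref{definition of N}) makes a monotone-convergence argument go through for the Stieltjes integrals and the jump sum; equally, care is required when recombining the continuous part $dD^c$ and the jump contributions into the single Stieltjes integral $\int_{[0,s]} f(t+\theta)\,dD_\theta$, since the jump term $\int_0^{\Delta D_\theta} u(t+\theta, X^D_{\theta-}-z)\,dz$ is only bounded below by $f(t+\theta)\Delta D_\theta$ thanks to $u\geq f$, and this bound must be combined with the continuous integral without introducing sign errors.
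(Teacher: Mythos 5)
Your proposal is correct in substance but follows a genuinely different route from the paper. You run the classical verification scheme: establish the pointwise variational inequality $\partial_t N + \mu\,\partial_x N + \tfrac{1}{2}\sigma^2\partial_{xx}N \le 0$, the Neumann identity $\partial_x N(t,0)=u(t,0)=m(t)$, and the gradient bound $\partial_x N = u \ge f$, then apply It\^o's formula for c\`adl\`ag semimartingales directly to $N(t+s,X^D_s)$ and estimate the drift, the reflection term, $dD^c$ and the jump sum separately. The paper instead never touches It\^o's formula on the controlled process: for $D\equiv 0$ the claim is exactly the supermartingale property of Theorem \ref{Theorem for N}; for bounded-rate absolutely continuous $D$ it performs a Girsanov change of measure under which $X^D$ becomes the \emph{uncontrolled} reflected drifted Brownian motion, so that the Doob--Meyer decomposition \eqref{Doob Meyer for N} can be reused verbatim and the control enters only through the extra term $\int_0^s\big(f(t+\theta)-u(t+\theta,R_\theta(x))\big)z_\theta\,d\theta$, nonincreasing because $u\ge f$; general $D$ is then reached by monotone approximation with absolutely continuous controls (\cite{Karoui1988}, Lemmata 5.4, 5.5 and Proposition 5.6) plus monotone and dominated convergence. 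The paper's route buys economy of hypotheses --- it needs only $u\ge f$ and the already-proved supermartingale structure, not the pointwise PDE inequality, the Neumann condition, or an It\^o formula with jumps --- while your route buys directness (no measure change, no approximation of controls) at the price of the technicalities you yourself flag: justifying It\^o's formula for a function that is $C^{1,2}$ only on $(0,\infty)$ along a process reflected at $0$, and the localization/uniform-integrability step when $g_x$ (hence $u$ and $N$) may be unbounded. Note also that your inequality (b) can be obtained without the ``the process visits every point'' argument: the proof of Theorem \ref{Theorem for N} identifies, via uniqueness of the Doob--Meyer decomposition, $A_s(t,x)=\int_0^s f^{\prime}(t+\theta)\big(R_\theta(x)-b(t+\theta)\big)^+d\theta$, which is nonpositive by Corollary \ref{corollary for D}-(ii) and yields $\partial_t N+\mu\,\partial_x N+\tfrac12\sigma^2\partial_{xx}N = f^{\prime}(t)(x-b(t))^+\le 0$ pointwise.
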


\begin{proof}
The proof is organized in 3 steps. 
\vspace{0.25cm}

\textbf{Step 1.}\, For $D\equiv0$, the proof is given by Theorem \ref{Theorem for N}.
\vspace{0,25cm}

\textbf{Step 2.}\, Let $D_s:=\int_0^s z_u ~du$, $s \geq 0$, where $z$ is a bounded, nonnegative, $\mathbb{F}$-progressively measurable process. To show \eqref{Q smaller N} we use Girsanov's Theorem and we rewrite the state process $X^{D}_s(x)=x +\mu s +\sigma W_s +D_s-I^{D}_s$ as a new drifted Brownian motion reflected at the origin.
We therefore introduce the exponential martingale
\[Z_s=\exp \left( \int_0^s \frac{z_u}{\sigma} ~dW_u - \frac{1}{2 \sigma^2} \int_0^s z_u^2~du \right), \quad s \geq 0, \]
and we obtain that under the measure $\widehat{\mathbb{P}}=Z_{T}\mathbb{P}$, the process
\[\widehat{W}_s:=W_s -\frac{1}{\sigma} \int_0^s z_u du, \quad s \geq 0,\]
is an $\mathbb{F}$- Brownian motion.

We can now rewrite the process $Q$ of \eqref{def of Q} under $\widehat{\mathbb{P}}$ as
\begin{equation}\label{Q under P hat}
Q_s(D;t,x)= \int_{[0,s]}f(t+\theta)~dD_{\theta} -
\int_0^{s} m(t+\theta)~d\widehat{I}^D_{\theta} + N(t+s,\widehat{R}_{s}(x)),
\end{equation}
for any $ s \in [0,T-t]$, where under $\widehat{\mathbb{P}}$
\[\widehat{X}^{D}_s(x)=x+\mu s + \sigma \widehat{W}_s + \widehat{I}^{D}_s=:\widehat{R}_s(x). \] 
Here $\widehat{I}_{\cdot}^{D}$ is flat off $\{ s\geq 0: \widehat{R}_s(x)=0 \}$ and reflects the drifted Brownian motion at the origin. By employing \eqref{Doob Meyer for N}, equation \eqref{Q under P hat} reads as 
\begin{align}
Q_s(D;t,x)
&=N(t,x)+\sigma \int_0^s u(t+u,\widehat{R}_u(x)) d\widehat{W}_u +\widehat{\Pi}_s(t,x), \quad s \in [0,T-t], \label{Q with martingale representation}
\end{align}
where we have set
\begin{equation}
\widehat{\Pi}_s(t,x):=\Pi_s(t,x)+\int_0^s \bigg(f(t+\theta)-u(t+\theta,R_{\theta}(x))\bigg)z_{\theta} d\theta, \quad s \in [0,T-t].
\end{equation}
Since $\widehat{\Pi}$ is nonincreasing due to the fact that $u \geq f$ and $\Pi_{\cdot}(t,x)$ is nonincreasing, we can take expectations in \eqref{Q with martingale representation} so to obtain
\[\mathbb{E}\left[Q_s(D;t,x) \right] \leq N(t,x), \quad \forall s \in [0,T-t].\]
\vspace{0,25cm}

\textbf{Step 3.}\, Since any arbitrary $D\in \mathcal{D}(t,x)$ can be approximated by an increasing sequence $(D^n)_{n \in \mathbb{N}}$ of absolutely continuous processes as the ones considered in Step 2 (see El Karoui and Karatzas \cite{Karoui1988}, Lemmata 5.4, 5.5 and Proposition 5.6), we have for all $n  \in \mathbb{N}$ 
\[ \mathbb{E}\left[Q_s(D^n;t,x)\right] \leq N(t,x). \]
Applying monotone and dominated convergence theorem, this property holds for $Q(D;t,x)$ as well, for any $D\in \mathcal{D}(t,x)$.
\end{proof}

By Theorem \ref{theorem for Q} and the definition of $Q$ as in \eqref{def of Q} we immediately obtain
\begin{equation}\label{V smaller N}
V(t,x)=\sup_{D\in \mathcal{D}(t,x)}\mathcal{J}(D;t,x)=\sup_{D\in \mathcal{D}(t,x)}\mathbb{E}\left[Q_{T-t}(D;t,x)\right]\leq N(t,x).
\end{equation}
Moreover, by definition \eqref{def of M} one has
\begin{equation}\label{M smaller V}
M(t,x) = \mathcal{J}(D^{\star}(t,x);t,x) \leq V(t,x).
\end{equation}

With all these results at hand, we can now finally prove Theorem \ref{main theorem}.

\begin{proof}[\textbf{Proof of Theorem \ref{main theorem}}]
\vspace{0.25cm}

By combining \eqref{V smaller N}, \eqref{M smaller V}, and Theorem \ref{N=M} we obtain the series of inequalities
\[ N(t,x) \geq V(t,x) \geq M(t,x)=N(t,x) \]
which proves the claim that $V=M$, and the optimality of $D^{\star}$. It just remains to prove \eqref{eq:V such that one can compute it}. To accomplish that we adapt and expand arguments as those used by El Karoui and Karatzas in the proof of Corollary 4.2 in \cite{EKK1989}.

Observe that optimality of $D^{\star}$ implies that for all $x > b(t)$
\begin{equation}\label{Representation of V eq1}
V(t,b(t))+f(t)(x-b(t))=V(t,x).
\end{equation}
Using \eqref{definition of N} and the fact that $V=N$ as proved above, we then find from \eqref{Representation of V eq1}
\begin{align*}
V(t,b(t))
&=V(t,x)-f(t)(x-b(t)) \\
&=\mathbb{E}\bigg[-\int_0^{T-t}f^{\prime}(t+s)(R_s(x)-b(t+s))^+ ~ds - \int_0^{T-t}m(t+s)~dI_s^0(x) \\
&+g(T,R_{T-t}(x))-f(t)(x-b(t))\bigg] \\
&=\mathbb{E}\bigg[-\int_0^{T-t}f^{\prime}(t+s)\Big[(R_s(x)-b(t+s))^+-(x-b(t))\Big]~ds - \int_0^{T-t}m(t+s)~dI_s^0(x) \\
&+g(T,R_{T-t}(x))-f(T)(x-b(t))\bigg].
\end{align*}

Recall \eqref{RBM without boundary}, and observe that under the condition $b(T)<\infty$ we can write
\begin{align*}
& \mathbb{E}\Big[g(T,R_{T-t}(x))\Big] = g(T,b(T)) + \mathbb{E}\bigg[\bigg(\int_{b(T)}^{R_{T-t}(x)} g_x(T,y) dy \bigg)\mathbbm{1}_{\{R_{T-t}(x) > b(T)\}} \nonumber \\
& - \bigg(\int_{R_{T-t}(x)}^{b(T)} g_x(T,y) dy \bigg)\mathbbm{1}_{\{R_{T-t}(x) \leq b(T)\}}\bigg] = g(T,b(T))  \nonumber \\
& + \mathbb{E}\bigg[f(T)\big(R_{T-t}(x)-b(T)\big)\mathbbm{1}_{\{R_{T-t}(x) > b(T)\}} - \bigg(\int_{R_{T-t}(x)}^{b(T)} g_x(T,y) dy \bigg)\mathbbm{1}_{\{R_{T-t}(x) \leq b(T)\}}\bigg],
\end{align*}
where the last equality follows from Remark \ref{rem:gxfT}. Therefore, we obtain that 
\begin{align*}
V(t,b(t))
&=\mathbb{E}\bigg[-\int_0^{T-t}f^{\prime}(t+s)\Big[(R_{s}(x)-b(t+s))^+-(x-b(t))\Big]~ds - \int_0^{T-t}m(t+s)~dI_s^0(x) \\
&+g(T,b(T))+f(T)\big(R_{T-t}(x)-b(T)\big)\mathbbm{1}_{\{R_{T-t}(x) > b(T)\}}- f(T)\big(x-b(t)\big) \nonumber \\
& - \bigg(\int_{R_{T-t}(x)}^{b(T)} g_x(T,y) dy \bigg)\mathbbm{1}_{\{R_{T-t}(x) \leq b(T)\}}\bigg].
\end{align*}
Notice now that $I_s^0(x) \to 0$, $R_s(x) \rightarrow \infty$, and $(R_s(x)-b(t+s))^+ - (x-b(t)) \rightarrow \mu s + \sigma W_s -b(t+s)+b(t)$ a.s.\ for any $s\geq0$ when $x \uparrow \infty$ (cf.\ \eqref{RBM without boundary}). Then, letting $x \to \infty$ in the last expression for $V(t,b(t))$, and invoking the monotone and dominated convergence theorems, we find (after evaluating the expectations and rearranging terms)
\begin{align*}
V(t,b(t))
&=\mathbb{E}\bigg[-\int_0^{T-t}f^{\prime}(t+s)\Big(\mu s +\sigma W_s -b(t+s)+b(t)\Big)~ds  \\
&+g(T,b(T))+f(T)\left(\mu(T-t)+\sigma W_{T-t} -b(T)+b(t)\right)\bigg] \\
&=-\mu \int_0^{T-t}f^{\prime}(t+s)s ~ds + \int_0^{T-t}f^{\prime}(t+s)b(t+s) ~ds \\
&+g(T,b(T))+f(T)\mu(T-t)+ f(t)b(t) -f(T)b(T). 
\end{align*}
\end{proof}

\begin{remark}
As a byproduct of the fact that $V=N$ and of Lemma \ref{Lemma for Ito for N}, we have that $V \in C^{1,2}([0,T)\times (0,\infty)) \cap C^0([0,T]\times \mathbb{R}_+)$. Moreover, from \eqref{eq:main} and \eqref{Stopping problem} we have that $V$ satisfies the Neumann boundary condition $V_x(t,0)=m(t)$ for all $t \in [0,T]$.
\end{remark}
\begin{remark}
The pathwise approach followed in this section seems to suggest that some of the intermediate results needed to prove Theorem \ref{main theorem} remain valid also in a more general setting in which profits and costs in \eqref{value function of ocp} are discounted at a stochastic rate. We leave the analysis of this interesting problem for future work. 
\end{remark}


\section{Verifying Assumption \ref{Assumption on regions and stopping time}:\\ a Case Study with Discounted Constant Marginal Profits and Costs}
\label{sec:casestudy}

In this section we consider the optimal dividend problem with capital injections
\begin{align}
\label{ex:1}
\widehat{V}(t,x)
&:= \sup_{D \in \mathcal{D}(t,x)} \mathbb{E}\left[ \int_0^{T-t}\eta e^{-rs}~dD_s - \int_0^{T-t}\kappa e^{-rs}~dI^{D}_s + \eta e^{-r(T-t)} X_{T-t}^D(x) \right] \\
&=e^{rt} V(t,x), \nonumber
\end{align}
where we have defined
\begin{equation}
\label{control problem in the example}
V(t,x):= \sup_{D \in \mathcal{D}(t,x)} \mathbb{E}\left[\int_0^{T-t}\eta e^{-r(t+s)}~dD_s - \int_0^{T-t}\kappa e^{-r(t+s)}~dI^{D}_s + \eta e^{-rT} X_{T-t}^D(x) \right]. 
\end{equation}
It is clear from \eqref{control problem in the example} and \eqref{functional of the optimal control problem} that such a problem can be accommodated in our general setting \eqref{value function of ocp} by taking (cf.\ Assumption \ref{basic assumptions})
\begin{equation}
\label{ex:data}
f(t)=\eta e^{-rt},\quad m(t)=\kappa e^{-rt},\quad g(t,x)=\eta e^{-rt}x,
\end{equation}
for some $\kappa > \eta$ (see also Remark \ref{rem:discountedform}).

In $\widehat{V}$ of \eqref{ex:1} the coefficient $\kappa$ can be seen as a constant proportional administration cost for capital injections. On the other hand, if we immagine that transaction costs or taxes have to be paid on dividends, the coefficient $\eta$ measures a constant net proportion of leakages from the surplus received by the shareholders.

\begin{remark}
\label{rem:dividendproblem}
Problem \eqref{ex:1} is perhaps the most common formulation of the optimal dividend problem with capital injections (see, e.g., Kulenko and Schmidli \cite{KulenkoSchmidli}, Lokka and Zervos \cite{LOKKA2008954}, Zhu and Yang \cite{Yang} and references therein). However, to the best of our knowledge, no previous work has considered such a problem in the case of a finite time horizon, whereas problem \eqref{ex:1} has been extensively studied when $T=+\infty$ (see, e.g., Ferrari \cite{Ferrari17} and references therein). In particular, it has been shown, e.g., in \cite{Ferrari17} that in the case $T=+\infty$ the optimal dividend strategy is triggered by a boundary $b_{\infty}>0$ that can be characterized as the solution to a nonlinear algebraic equation (see Proposition 3.2 in \cite{Ferrari17}). In Proposition 3.6 of \cite{Ferrari17} such a trigger value is also shown to be the optimal stopping boundary of problem \eqref{OST example} below (when the optimization is performed over all the $\mathbb{F}$-stopping times).
\end{remark}

Thanks to Theorem \ref{main theorem} we know that, whenever Assumption \ref{Assumption on regions and stopping time} is satisfied, the optimal control $D^{\star}$ for problem \eqref{control problem in the example} is triggered by the optimal stopping boundary $b$ of the optimal stopping problem

\begin{align}
u(t,x) & = \sup_{\tau \in \Lambda(T-t)}\mathbb{E}\Big[e^{-r \tau}\eta \mathbbm{1}_{\{ \tau < S(x) \}} + e^{-r S(x)}\kappa \mathbbm{1}_{\{ \tau \geq S(x) \}} \Big] \nonumber \\
& =\sup_{\tau \in \Lambda(T-t)}\mathbb{E}\Big[e^{-r \tau}\eta \mathbbm{1}_{\left\{A_{\tau}(x)>0  \right\}} + e^{-r S(x)}\kappa \mathbbm{1}_{\left\{A_{\tau}(x)\leq 0  \right\}} \Big]. \label{OST example}
\end{align}

In the following we study optimal stopping problem \eqref{OST example} and verify the requirements of Assumption \ref{Assumption on regions and stopping time}.

\smallskip

Moreover, by taking the sub-optimal stopping time $\tau=0$ in \eqref{OST example} clearly gives $u(t,x)\geq \eta$ for $(t,x)\in [0,T] \times (0,\infty)$. Therefore, we can define the continuation and the stopping region of problem \eqref{OST example} as
$$\mathcal{C}:=\{(t,x) \in [0,T) \times (0, \infty): u(t,x)> \eta \}, \quad  \mathcal{S}:=\{(t,x) \in [0,T] \times (0, \infty): u(t,x)= \eta \}.$$ 
Also, notice that we have $u(t,x) \leq \kappa$ for $(t,x)\in [0,T] \times \mathbb{R}_+$ since $\eta < \kappa$.
 
Since the reward process $\phi_t:=e^{-r t}\eta \mathbbm{1}_{\{ t < S(x) \}} + e^{-r S(x)}\kappa \mathbbm{1}_{\{ t \geq S(x) \}}$ is upper semicontinuous in expectation along stopping times (thanks to the fact that $\eta<\kappa$), Theorem 2.9 in Kobylanski and Quenez \cite{Kobylanski} ensures that the first time the value process (i.e.\ the Snell envelope of the reward process) equals the reward process is optimal. In our Markovian setting we thus have that the stopping time 
\begin{equation}
\label{OST-casestudy}
\tau^{\star}(t,x):=\inf\{s\in[0,T-t):\, (t+s,A_s(x)) \in \mathcal{S}\}\wedge (T-t), \quad (t,x) \in [0,T] \times \mathbb{R}_+,
\end{equation}
is optimal. Further, defining $Z_s(x):=x + \mu s + \sigma W_s$, $s \geq 0$, the process
\begin{equation} 
\label{martingale in example}
e^{-r( s \wedge \tau^{\star}(t,x) \wedge S(x) )} u(t+( s \wedge \tau^{\star}(t,x) \wedge S(x) ),Z_{( s \wedge \tau^{\star}(t,x) \wedge S(x) )}(x)),~~ s \in [0,T-t],
\end{equation}
is an $\mathbb{F}$-martingale (cf.\ Proposition 1.6 and Remark 1.7 in Kobylanski and Quenez \cite{Kobylanski}).

The next proposition proves some preliminary properties of $u$.

\begin{proposition}
\label{properties of u in example}
The value function $u$ of \eqref{OST example} satisfies the following:
\begin{itemize}
\item[(i)] 
$u(T,x)=\eta$ for any $x>0$ and $u(t,0)=\kappa$ for any $t \in [0,T]$;
\item[(ii)]
$ t \mapsto u(t,x)$ is nonincreasing for any $x>0$;
\item[(iii)]
$x \mapsto u(t,x)$ is nonincreasing for any $t \in [0,T]$.
\end{itemize}
\end{proposition}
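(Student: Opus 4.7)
The plan is to prove each part by direct computation or pathwise coupling, exploiting the simple structure of the reward in \eqref{OST example}.

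For part (i), the two boundary values follow by specializing the value function. At $t=T$, the admissible set $\Lambda(0)$ reduces to the single stopping time $\tau\equiv0$; since $A_0(x)=x>0$ for $x>0$, the reward is $e^{0}\eta=\eta$, giving $u(T,x)=\eta$. At $x=0$, one has $S(0)=0$ almost surely, so every $\tau\in\Lambda(T-t)$ satisfies $\tau\geq S(0)$, the indicator $\mathbbm{1}_{\{\tau<S(0)\}}$ vanishes, and the reward reduces to $e^{-rS(0)}\kappa=\kappa$ almost surely, yielding $u(t,0)=\kappa$.

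For part (ii), observe that the integrand inside the expectation in \eqref{OST example} does not depend on $t$: it involves only $\tau$, $S(x)$ and the Brownian motion. The only $t$-dependence enters through the constraint $\tau\in\Lambda(T-t)$. For $t_1\leq t_2$ we have $\Lambda(T-t_2)\subseteq\Lambda(T-t_1)$, and so taking the supremum over the larger set at $t_1$ yields $u(t_1,x)\geq u(t_2,x)$.

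For part (iii), I would use a pathwise coupling argument on a single probability space. Fix $0<x_1<x_2$ and realize both processes with the same Brownian motion $W$, so that $x_i+\mu s+\sigma W_s$ differ by the constant $x_2-x_1>0$; hence $S(x_1)\leq S(x_2)$ almost surely. For an arbitrary $\tau\in\Lambda(T-t)$, partition $\Omega$ into the three events $\{\tau<S(x_1)\}$, $\{S(x_1)\leq\tau<S(x_2)\}$, and $\{\tau\geq S(x_2)\}$. On the first event, both rewards coincide and equal $e^{-r\tau}\eta$. On the third event, the rewards are $e^{-rS(x_1)}\kappa$ and $e^{-rS(x_2)}\kappa$, and the former dominates because $S(x_1)\leq S(x_2)$. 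On the middle event, the reward from $x_1$ equals $e^{-rS(x_1)}\kappa$ while the reward from $x_2$ equals $e^{-r\tau}\eta$; using $\tau\geq S(x_1)$ together with $\kappa>\eta$ gives $e^{-rS(x_1)}\kappa\geq e^{-r\tau}\kappa>e^{-r\tau}\eta$. Hence, pathwise, the reward from $x_1$ dominates that from $x_2$ for every admissible $\tau$; taking expectations and then the supremum over $\tau\in\Lambda(T-t)$ delivers $u(t,x_1)\geq u(t,x_2)$.

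The only part requiring any real argument is (iii); the possible obstacle is a reader's skepticism about pathwise coupling, but it is fully justified because the stopping time $\tau$ is measurable with respect to the natural filtration of $W$, which is the same for both initial conditions. The inequality $\kappa>\eta$ (Assumption \ref{basic assumptions}(ii) specialized to the case study) is essential in the middle case of the trichotomy.
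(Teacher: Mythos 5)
Your proposal is correct and follows essentially the same route as the paper: parts (i) and (ii) are handled identically, and for part (iii) your pathwise trichotomy on $\{\tau<S(x_1)\}$, $\{S(x_1)\leq\tau<S(x_2)\}$, $\{\tau\geq S(x_2)\}$ is exactly the case decomposition the paper uses (phrased there as bounding $u(t,x_2)-u(t,x_1)$ by the supremum of the expected difference of rewards), with the same use of $S(x_1)\leq S(x_2)$ and $\eta<\kappa$.
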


\begin{proof}
We prove each item separately.
\vspace{0.25cm}
\newline
(i) The first property easily follows from definition \eqref{OST example}.
\vspace{0.15cm}
\newline
(ii) The second property is due to the fact that $\Lambda(T-\cdot)$ shrinks and the expected value on the right-hand side of \eqref{OST example} is independent of $t \in [0,T]$.
\vspace{0.15cm}
\newline
(iii) Fix $t \in [0,T]$, $x_2>x_1 \geq 0$ and notice that $S(x_2) > S(x_1)$. Then, from \eqref{OST example} we can write
\begin{align}
&u(t,x_2)-u(t,x_1) \nonumber \\ 
&\leq \sup_{\tau \in \Lambda(T-t)} \mathbb{E}\bigg[ e^{-r\tau} \eta \mathbbm{1}_{\{ \tau < S(x_2) \}} -e^{-r\tau} \eta \mathbbm{1}_{\{ \tau < S(x_1) \}} 
+e^{-rS(x_2)} \kappa \mathbbm{1}_{\{ \tau \geq S(x_2) \}} -e^{-rS(x_1)} \kappa \mathbbm{1}_{\{ \tau \geq S(x_1) \}} \bigg] \nonumber \\
&= \sup_{\tau \in \Lambda(T-t)} \mathbb{E}\bigg[ \mathbbm{1}_{\{ S(x_1) \leq \tau < S(x_2) \}}  \left(e^{-r\tau} \eta-e^{-rS(x_1)} \kappa\right) +\left(e^{-rS(x_2)}-e^{-rS(x_1)}\right) \kappa \mathbbm{1}_{\{ \tau \geq S(x_2) \}} \bigg] \nonumber \\
&\leq \sup_{\tau \in \Lambda(T-t)} \mathbb{E}\bigg[ e^{-rS(x_1)} (\eta-\kappa) \mathbbm{1}_{\{ S(x_1) \leq \tau < S(x_2) \}} + \left(e^{-rS(x_2)}-e^{-rS(x_1)}\right) \kappa \mathbbm{1}_{\{ \tau \geq S(x_2) \}} \bigg] \nonumber \leq 0,
\end{align}
where we have used that $\eta < \kappa$ in the last step.
\end{proof}

Since $x \mapsto u(t,x)$ is nonincreasing for each $t \in [0,T]$, setting 
\begin{equation}
\label{Boundary}
b(t):=\inf\{x >0: u(t,x)\leq\eta\},\quad t \in [0,T], 
\end{equation}
it is clear that 
\begin{equation}
\mathcal{C}=\left\{ (t,x) \in [0,T) \times [0, \infty): 0 < x < b(t) \right\}, \quad \mathcal{S}=\left\{ (t,x) \in [0,T] \times [0, \infty): x \geq b(t) \right\}.
\end{equation}
Moreover, the optimal stopping time of \eqref{OST-casestudy} reads 
\begin{equation}
\label{OST-casestudy-2}
\tau^{\star}(t,x):=\inf\{s\in[0,T-t):\, A_s(x) \geq b(t+s)\}\wedge (T-t).
\end{equation}

In the following we will refer to $b$ as to the \emph{free boundary}. The next theorem proves preliminary properties of $b$.

\begin{proposition}
\label{main theorem example-0}
The free boundary $b$ is such that
\begin{itemize}
\item[(i)]
$t \mapsto b(t)$ is nonincreasing;
\item[(ii)] One has $b(t)>0$ for all $t\in [0,T)$. Moreover, there exists $b_{\infty} > 0$ such that $b(t) \leq b_{\infty}$ for any $t\in [0,T]$.
\end{itemize} 
\end{proposition}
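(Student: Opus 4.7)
The plan is to derive both statements from the monotonicity properties of $u$ already established in Proposition \ref{properties of u in example} together with the trivial sandwich $\eta \leq u(t,x) \leq \kappa$ on $[0,T] \times (0,\infty)$, and to borrow the upper bound from the known infinite-horizon version of problem \eqref{OST example}.

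For (i), the key observation will be that the stopping region is forward-invariant in time. If $(t_1, x) \in \mathcal{S}$, i.e., $u(t_1, x) = \eta$, then for any $t_2 \in [t_1, T]$ Proposition \ref{properties of u in example}-(ii) yields $u(t_2, x) \leq u(t_1, x) = \eta$, whereas the admissible choice $\tau = 0 \in \Lambda(T - t_2)$ gives $u(t_2, x) \geq \eta$. Hence $u(t_2, x) = \eta$, so $x \geq b(t_2)$; taking the infimum over $x \geq b(t_1)$ in this inclusion then yields $b(t_2) \leq b(t_1)$.

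For the positivity in (ii), the approach is to prove $\lim_{x \to 0^+} u(t, x) = \kappa$ for every $t \in [0, T)$. Inserting the admissible stopping time $\tau = T - t$ into \eqref{OST example} gives the lower bound
\begin{equation*}
u(t,x) \geq \mathbb{E}\bigl[\kappa e^{-rS(x)} \mathbbm{1}_{\{S(x) \leq T-t\}} + \eta e^{-r(T-t)} \mathbbm{1}_{\{S(x) > T-t\}}\bigr],
\end{equation*}
and the plan is to show $S(x) \to 0$ a.s.\ as $x \to 0^+$, so that bounded convergence drives the right-hand side to $\kappa > \eta$. This vanishing of $S(x)$ is the one mildly delicate point; it follows from path-continuity of $\mu s + \sigma W_s$ together with the oscillation of Brownian motion near the origin, which ensures that $\mu s + \sigma W_s$ takes strictly negative values in every right-neighborhood of $0$, so that the hitting time of the level $-x$ lies within any prescribed time window once $x$ is small enough. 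Once this is in hand, $u(t,\cdot) > \eta$ on a right-neighborhood of $0$, and $b(t) > 0$ follows from the definition \eqref{Boundary}.

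Finally, for the upper bound in (ii), the plan is to compare with the infinite-horizon counterpart of \eqref{OST example}, whose value $u_\infty$ equals $\eta$ on $[b_\infty, \infty)$ for the threshold $b_\infty > 0$ of Proposition 3.6 of \cite{Ferrari17} (see also Remark \ref{rem:dividendproblem}). Since every $\tau \in \Lambda(T-t)$ is in particular an $\mathbb{F}$-stopping time with finite values, it is admissible for the infinite-horizon problem, giving $u(t,x) \leq u_\infty(x)$ on $[0,T] \times \mathbb{R}_+$. For $x \geq b_\infty$ the sandwich $\eta \leq u(t,x) \leq u_\infty(x) = \eta$ then forces $u(t,x) = \eta$, so $x \geq b(t)$; choosing $x = b_\infty$ yields $b(t) \leq b_\infty$ and concludes.
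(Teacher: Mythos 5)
Your proposal is correct and follows essentially the same route as the paper: part (i) from the fact that $t\mapsto u(t,x)$ is nonincreasing (Proposition \ref{properties of u in example}-(ii)) together with $u\geq\eta$, and the upper bound $b\leq b_\infty$ by comparison with the infinite-horizon problem of \cite{Ferrari17}. The one point where you diverge is the strict positivity of $b$: the paper disposes of it in one line by observing $u(t,0)=\kappa>\eta$, which strictly speaking requires continuity of $u(t,\cdot)$ at $x=0$ (not covered by Proposition \ref{prop:lsc}, which is stated on $(0,\infty)$) to conclude that $u(t,\cdot)>\eta$ on a right-neighborhood of the origin. Your argument --- inserting $\tau=T-t$, noting $S(x)\downarrow 0$ a.s.\ as $x\downarrow 0$ by the oscillation of the Brownian path, and passing to the limit by bounded convergence to get $\lim_{x\to 0^+}u(t,x)=\kappa$ --- supplies exactly this missing step and is the more complete of the two.
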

\begin{proof}
We prove each item separately.
\vspace{0.15cm}

(i) The claimed monotonicity of $b$ immediately follows from (ii) of Proposition \ref{properties of u in example}.
\vspace{0.15cm}

(ii) To show that $b(t)>0$ for any $t\in [0,T)$ it is enough to observe that $u(t,0) = \kappa > \eta$ for all $t\in [0,T)$. 

To prove $b(t) < \infty$ notice that $u(t,x)\leq u_{\infty}(x)$ for all $(t,x) \in [0,T] \times \mathbb{R}_+$, where 
\[u_{\infty}(x):=\sup_{\tau \geq 0} \mathbb{E}\left[ \eta e^{-r \tau} \mathbbm{1}_{\{ \tau < S(x) \}} + \kappa e^{-r S(x)} \mathbbm{1}_{\{ \tau \geq S(x) \}}  \right]. \] Hence, setting $b_{\infty}:=\inf\{ x > 0: u_{\infty}(x)=\eta\}$ (which exists finite, e.g., by Proposition 3.2 in Ferrari \cite{Ferrari17}; see also Remark \ref{rem:dividendproblem} above), we have $b(t) \leq b_{\infty}$ for all $t \in [0,T]$.
\vspace{0.15cm}
\end{proof}

The proof of the next proposition is quite lenghty, and it is therefore postponed in the Appendix in order to simplify the exposition.
\begin{proposition}
\label{prop:lsc}
The function $(t,x) \mapsto u(t,x)$ is lower semicontinuous on $[0,T)\times (0,\infty)$.
\end{proposition}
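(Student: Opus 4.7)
The plan is to prove lower semicontinuity at an arbitrary $(t_0,x_0)\in[0,T)\times(0,\infty)$ by using the optimal stopping time $\tau^\star:=\tau^\star(t_0,x_0)$ from \eqref{OST-casestudy} (whose optimality at $(t_0,x_0)$ is ensured by Theorem~2.9 in \cite{Kobylanski}) as a suboptimal test strategy at nearby points. For $(t,x)$ in a neighborhood of $(t_0,x_0)$, the truncated stopping time $\widetilde\tau_t:=\tau^\star\wedge(T-t)$ lies in $\Lambda(T-t)$, and inserting it into \eqref{OST example} yields
\begin{equation*}
u(t,x)\;\geq\;\mathbb{E}\!\left[e^{-r\widetilde\tau_t}\eta\,\mathbbm{1}_{\{\widetilde\tau_t<S(x)\}}\,+\,e^{-rS(x)}\kappa\,\mathbbm{1}_{\{\widetilde\tau_t\geq S(x)\}}\right]\;=:\;J(t,x).
\end{equation*}
The goal is then to show $J(t,x)\to u(t_0,x_0)$ as $(t,x)\to(t_0,x_0)$, which gives $\liminf u(t,x)\geq u(t_0,x_0)$.

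The convergence rests on two ingredients. First, I would establish that $\mathbb{P}$-a.s.\ the map $x\mapsto S(x)$ is continuous on $(0,\infty)$: setting $Z_s(x):=x+\mu s+\sigma W_s$, monotonicity is immediate, and continuity from the right at $x_0$ follows from the strong Markov property together with the a.s.\ fact that a standard Brownian motion visits strictly negative values in every right neighborhood of time $0$, so $Z_{\cdot}(x_0)$ actually crosses $0$ at $S(x_0)$; continuity from the left is even easier, using path continuity of $Z_{\cdot}(x)$ and the identity $\sigma W_{S(x_n)}=-x_n-\mu S(x_n)$ in the limit $x_n\uparrow x_0$. Second, $\widetilde\tau_t\to\tau^\star$ by construction as $t\to t_0$.

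With these two facts in hand, I would perform a pathwise case analysis on the disjoint events $\{\tau^\star<T-t_0\}$, $\{\tau^\star=T-t_0<S(x_0)\}$ and $\{\tau^\star=T-t_0\geq S(x_0)\}$, exploiting that $\tau^\star<T-t_0$ forces $\tau^\star<S(x_0)$ strictly (a boundary hit must precede absorption because $A_s(x_0)=\Delta<0<b(t_0+s)$ for $s\geq S(x_0)$). On each of these events the integrand in $J(t,x)$ converges $\mathbb{P}$-a.s.\ to the integrand defining $u(t_0,x_0)$, while the only exceptional set $\{S(x_0)=T-t_0\}$ is $\mathbb{P}$-null because $S(x_0)$ is absolutely continuously distributed. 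A final dominated convergence argument (the integrands are bounded by $\kappa$) then delivers $J(t,x)\to u(t_0,x_0)$. I expect the main obstacle to lie in carrying out this case analysis cleanly around the boundary event $\{\tau^\star=T-t_0\}$, where one must separately track what happens depending on whether absorption at zero has already occurred; once the cases are disentangled, the remainder is routine bounded convergence.
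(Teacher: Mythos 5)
Your proposal is correct and follows essentially the same route as the paper: both insert the truncated optimal stopping time $\tau^\star(t_0,x_0)\wedge(T-t)$ as a suboptimal rule at the nearby point, and both reduce the convergence of the resulting payoff to the a.s.\ continuity of $x\mapsto S(x)$, the negligibility of $\{S(x_0)=T-t_0\}$ (via the density of $S(x_0)$) and of $\{\tau^\star=S(x_0)\}$ (via strict positivity of the boundary), and bounded convergence with bound $\kappa$. The only difference is presentational — the paper writes one long chain of indicator estimates before passing to the $\liminf$, whereas you organize the same content as a pathwise case analysis — so no further comment is needed.
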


The lower semicontinuity of $u$ implies that the martingale of \eqref{martingale in example} has right-continuous sample paths, and that the stopping region is closed. The latter fact in turn plays an important role when proving continuity of the free boundary, as it is shown in the next proposition.
\begin{proposition}
\label{main theorem example}
The free boundary $b$ is such that $t \mapsto b(t)$ is continuous on $[0,T)$. Moreover, $b(T):=\lim_{t \uparrow T} b(t)=0$.
\end{proposition}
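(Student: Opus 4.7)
The plan is three-fold: derive right-continuity on $[0,T)$ from closedness of the stopping region, rule out a left jump of $b$ at any interior $t_0$ by a contradiction built on the parabolic equation satisfied by $u$ in $\mathcal{C}$, and then obtain $b(T-)=0$ from a short direct computation on the continuation reward for vanishing horizon.

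Right-continuity will follow from Proposition \ref{prop:lsc}: lower semicontinuity of $u$ on $[0,T)\times(0,\infty)$, together with $u\geq\eta$, forces $\mathcal{S}$ to be closed there. Fixing $t\in[0,T)$ and $t_n\downarrow t$, Proposition \ref{main theorem example-0}(i) yields $b(t_n)\leq b(t)$; setting $b(t+):=\lim_n b(t_n)$, the inclusion $(t_n,b(t_n))\in\mathcal{S}$ passes to the closure to give $(t,b(t+))\in\mathcal{S}$, hence $b(t+)\geq b(t)$ and therefore $b(t+)=b(t)$.

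For left-continuity I argue by contradiction: suppose $b(t_0-)>b(t_0)$ at some $t_0\in(0,T)$, choose $b(t_0)<x_1<x_2<b(t_0-)$, and use monotonicity of $b$ to find $h_0>0$ such that the rectangle $R:=(t_0-h_0,t_0)\times(x_1,x_2)$ lies in $\mathcal{C}$. Inside $\mathcal{C}$, standard parabolic regularity (or direct differentiation of \eqref{second representation of u}) yields $u\in C^{1,2}(\mathcal{C})$ satisfying
\[
\partial_t u + \mu\partial_x u + \tfrac{1}{2}\sigma^2\partial_{xx} u - r u = 0\quad\text{in }R.
\]
I will pick $\phi\in C_c^{\infty}((x_1,x_2))$ with $\phi\geq 0$, $\phi\not\equiv 0$, set $\Psi(t):=\int_{x_1}^{x_2}u(t,x)\phi(x)\,dx$, and integrate by parts in $x$ (boundary terms vanish) to obtain, for $t\in(t_0-h_0,t_0)$,
\[
\Psi'(t)=\int_{x_1}^{x_2}\bigl[r\phi(x)+\mu\phi'(x)-\tfrac{1}{2}\sigma^2\phi''(x)\bigr]u(t,x)\,dx.
\]
The crucial step is to show $u(t,x)\to\eta$ as $t\uparrow t_0$, uniformly in $x\in[x_1,x_2]$. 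The dynamic programming principle applied at the stopping time $h\wedge S(x)$, together with $u(t_0,\cdot)\equiv\eta$ on $[b(t_0),b(t_0-)]$ and $u\leq\kappa$, will yield
\[
\eta\leq u(t_0-h,x)\leq\eta+(\kappa-\eta)\bigl(\mathbb{P}[S(x)\leq h]+\mathbb{P}[A_h(x)<b(t_0)]\bigr),
\]
and both probabilities on the right go to zero as $h\downarrow 0$, uniformly for $x\in[x_1,x_2]$, because $x_1>b(t_0)$ and $x_1>0$. Dominated convergence then produces $\lim_{t\uparrow t_0}\Psi'(t)=r\eta\int\phi>0$, so $\Psi(t_0)-\Psi(t_0-h)>0$ for all sufficiently small $h$; this contradicts $\Psi(t_0-h)\geq\eta\int\phi=\Psi(t_0)$, which is immediate from $u\geq\eta$.

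To prove $b(T-)=0$ I exploit that \eqref{OST example} depends on $t$ only through the horizon, so $u(t,x)=\sup_{\tau\in[0,T-t]}g(\tau,x)$ with $g(\tau,x):=\mathbb{E}[\eta e^{-r\tau}\mathbbm{1}_{\{\tau<S(x)\}}+\kappa e^{-rS(x)}\mathbbm{1}_{\{\tau\geq S(x)\}}]$. Writing $f_{S(x)}$ for the density of $S(x)$ on $(0,\infty)$ and differentiating gives $\partial_\tau g(\tau,x)=e^{-r\tau}\bigl[-r\eta\,\mathbb{P}[S(x)>\tau]+(\kappa-\eta)f_{S(x)}(\tau)\bigr]$; for any $x>0$ one has $f_{S(x)}(0+)=0$, so $\partial_\tau g(0+,x)=-r\eta<0$. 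By continuity there exists $\tau_0(x)>0$ such that $g(\tau,x)<\eta=g(0,x)$ for $\tau\in(0,\tau_0(x)]$; whenever $T-t\leq\tau_0(x)$, the supremum defining $u(t,x)$ is therefore attained at $\tau=0$, giving $u(t,x)=\eta$ and $b(t)\leq x$. Letting first $t\uparrow T$ and then $x\downarrow 0$ yields $b(T-)=0$. The main obstacle is the uniform limit in the left-continuity step: it is the only place where lower semicontinuity of $u$ alone does not suffice, and the Markov bound above is what supplies the extra information needed to close the argument.
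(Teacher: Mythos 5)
Your right-continuity argument is the paper's own, and your left-continuity argument, while different in style, is sound: where the paper works pointwise at $x=\tfrac{b(t-)+b(t)}{2}$ and bounds $u(t-\eps,x)\leq e^{-r\eps}\eta+\kappa\,\mathbb{P}(\tau_\eps<\eps)=\eta(1-r\eps)+o(\eps)$ via the martingale property of \eqref{martingale in example}, you integrate against a test function and extract the same $-r\eta$ decay from the PDE; both exploit that discounting makes waiting strictly suboptimal once the payoff is pinned at $\eta$ on $[x_1,x_2]$ at time $t_0$. Your version needs $u\in C^{1,2}(\mathcal{C})$ solving $\partial_t u+\mu u_x+\tfrac12\sigma^2u_{xx}-ru=0$, which in turn requires continuity of $u$ on $\mathcal{C}$ (the paper only establishes lower semicontinuity in Proposition \ref{prop:lsc} plus $C^1$ regularity in $x$); the paper's pointwise argument avoids this entirely and is the more economical route. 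Your uniform bound $u(t_0-h,x)\leq\eta+(\kappa-\eta)(\mathbb{P}[S(x)\leq h]+\mathbb{P}[A_h(x)<b(t_0)])$ is correct and does close the argument.

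The genuine gap is in the terminal limit $b(T-)=0$. You write $u(t,x)=\sup_{\tau\in[0,T-t]}g(\tau,x)$ and then treat $\tau$ as a \emph{real number}: you differentiate $g$ in $\tau$, use $\partial_\tau g(0+,x)=-r\eta<0$, and conclude that the supremum is attained at $\tau=0$. But the supremum in \eqref{OST example} runs over all stopping times in $\Lambda(T-t)$, not over deterministic times, and showing that no deterministic time beats $\eta$ does not show $u(t,x)=\eta$ (if it did, every optimal stopping problem would reduce to a calculus exercise). To repair the step you must beat an arbitrary stopping time: writing $\mathbb{E}[\eta e^{-r\tau}\mathbbm{1}_{\{\tau<S(x)\}}+\kappa e^{-rS(x)}\mathbbm{1}_{\{\tau\geq S(x)\}}]-\eta\leq -\eta\,\mathbb{E}[(1-e^{-r\tau})\mathbbm{1}_{\{\tau<S(x)\}}]+(\kappa-\eta)\mathbb{P}[S(x)\leq T-t]$, one must show the first (negative) term dominates. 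One way: assume $\beta:=b(T-)>0$, take $x=\beta/2$, and use that the optimal $\tau^{\star}$ then satisfies $\tau^{\star}\geq\sigma_\beta\wedge(T-t)$ with $\sigma_\beta:=\inf\{s:Z_s(x)\geq\beta\}$, whence $\mathbb{E}[e^{-r\tau^{\star}}]\leq e^{-r(T-t)}+r(T-t)\,\mathbb{P}[\sigma_\beta<T-t]=1-r(T-t)+o(T-t)$, while $\mathbb{P}[S(x)\leq T-t]=o(T-t)$ for fixed $x>0$; this forces $u(t,x)<\eta$ for $t$ near $T$, a contradiction. Your deterministic computation identifies the right competition ($-r\eta$ versus the mass of $S(x)$ near $0$) but does not by itself yield the conclusion.
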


\begin{proof}
We prove the two properties separately.
\vspace{0.25cm}

Here we show that $b$ is continuous, and this proof is divided in two parts. We start with the right-continuity.
Note that, by lower semicontinuity of $u$ (cf.\ Proposition \ref{prop:lsc}), the stopping region $\mathcal{S}$ is closed. Then fix an arbitrary point $t\in [0,T)$, take any sequence $(t_n)_{n\geq 1}$ such that $t_n \downarrow t$, and notice that $(t_n,b(t_n)) \in \mathcal{S}$, by definition. Setting $b(t+):= \lim_{t_n \downarrow t} b(t_n)$ (which exists due to Proposition \ref{main theorem example-0}-(i)), we have $(t_n,b(t_n)) \to (t,b(t+))$, and since $\mathcal{S}$ is closed $(t,b(t+)) \in \mathcal{S}$. Therefore, it holds $b(t+) \geq b(t)$ by definition \eqref{Boundary} of $b$. However, $b(\cdot)$ is nonincreasing, and therefore $b(t) = b(t+)$. 

Next we show left-continuity for all $t \in (0,T)$ and for this we adapt to our setting ideas as those in the proof of Proposition 4.2\ in De Angelis and Ekstr\"om \cite{DeAngelis}. Suppose that $b$ makes a jump at some $t \in (0,T)$. By Proposition \ref{main theorem example-0}-(i) we have $ \lim_{t_n \uparrow t} b(t_n):=b(t-) \geq b(t)$. We employ a contradiction scheme to show $b(t-) = b(t)$, and we assume $b(t-) > b(t)$. Let $x:=\frac{b(t-)+b(t)}{2}$, recall $Z_s(x)=x + \mu s + \sigma W_s$, $s \geq 0$, and define
\[ \tau_{\varepsilon} := \inf\{ s \geq 0: Z_s(x) \notin (b(t-),b(t)) \} \wedge \varepsilon \]
for $\varepsilon \in (0,t)$. Then noticing that $\tau_{\varepsilon} < \tau^{\star}(t-\varepsilon,x) \wedge S(x)$, by the martingale property of \eqref{martingale in example} we can write
\begin{align*}
u(t-\varepsilon,x)
&=\mathbb{E}\left[ e^{-r \tau_{\varepsilon}} u(t-\varepsilon+\tau_{\varepsilon},Z_{\tau_{\varepsilon}}(x)) \right] \\
&=\mathbb{E}\left[ e^{-r \varepsilon} u(t,Z_{\varepsilon}(x)) \mathbbm{1}_{\{ \tau_{\varepsilon} = \varepsilon \}} +  e^{-r \tau_{\varepsilon}} u(t-\varepsilon+\tau_{\varepsilon},Z_{\tau_{\varepsilon}}(x)) \mathbbm{1}_{\{ \tau_{\varepsilon} < \varepsilon \}} \right] \nonumber \\
&\leq \mathbb{E} \left[ e^{-r \varepsilon} \eta \mathbbm{1}_{\{ \tau_{\varepsilon} = \varepsilon \}} +  e^{-r \tau_{\varepsilon}} \kappa \mathbbm{1}_{\{ \tau_{\varepsilon} < \varepsilon \}} \right] \\
&\leq e^{-r \varepsilon} \eta + \kappa \mathbb{P}\left(\tau_{\varepsilon} < \varepsilon \right),
\end{align*}
where the last step follows from the fact that $u \leq \kappa$, and that $Z_{\tau_{\varepsilon}}(x) \geq b(t)$ on the set $\{\tau_{\varepsilon} = \varepsilon \}$. Since $e^{-r \varepsilon} \eta + \kappa \mathbb{P}(\tau_{\varepsilon} < \varepsilon)=\eta (1-r\varepsilon) + \kappa o(\varepsilon)$ as $\varepsilon \downarrow 0$, we have found a contradiction to $u(t,x)\geq \eta$. Therefore, $b(t-) = b(t)$ and $b$ is continuous on $[0,T)$.
\vspace{0.15cm}

To prove the claimed limit, notice that if $b(T):=\lim_{t \uparrow T} b(t) >0$, then any point $(T,x)$ with $ x \in (0,b(T))$ belongs to $\mathcal{C}$. However, we know that $(T,x) \in \mathcal{S}$ for all $x>0$, and we thus reach a contradiction.
\end{proof}

Thanks to the previous results all the requirements of Assumption \ref{Assumption on regions and stopping time} are satisfied for problem \eqref{OST example}. Hence Theorem \ref{main theorem} holds, and one has that $V$ of \eqref{control problem in the example} and $u$ of \eqref{OST example} are such that $V_x=u$ on $[0,T]\times \mathbb{R}_+$. In particular, by \eqref{ex:1} and Theorem \ref{main theorem} we can write
$$\widehat{V}(t,x)= \widehat{V}(t,b(t)) - e^{rt}\int_x^{b(t)}u(t,y)\,dy,$$
where by \eqref{eq:V such that one can compute it}, \eqref{ex:data}, and the fact that $b(T)=0$ we have 
$$\widehat{V}(t,b(t)) = \eta b(t) + \frac{\mu \eta}{r} \big(1 - e^{-r(T-t)}\big) - r\eta \int_t^T e^{-r(u-t)}b(u) du.$$
Moreover, the optimal dividend distributions' policy $D^{\star}$ given through \eqref{system for I and D} is triggered by the free boundary $b$ whose properties have been derived in Theorem \ref{main theorem example}.


\subsection{A Comparative Statics Analysis.}
\label{subsec:CS}

We conclude by providing the monotonicity of the free boundary with respect to some of the problem's parameters. In the following, for any given and fixed $t\in[0,T]$, we write $b(t;\cdot)$ in order to stress the dependence of the free boundary point $b(t)$ with respect to a given parameter. Similarly, we write $u(t,x;\cdot)$ when we need to consider the dependence of $u(t,x)$, $(t,x) \in [0,T] \times \mathbb{R}_+$, with respect to a given problem's parameter.
 
\begin{proposition}
\label{prop:compstat}
Let $t\in[0,T]$ be given and fixed. It holds that
\begin{itemize}
\item[(i)] $\kappa \mapsto b(t;\kappa)$ is nondecreasing;
\item[(ii)] $\eta \mapsto b(t;\eta)$ is nonincreasing;
\item[(iii)] $r \mapsto b(t;r)$ is nonincreasing;
\item[(iv)] $\mu \mapsto b(t;\mu)$ is nonincreasing.
\end{itemize}
\end{proposition}
\begin{proof}
Recalling that
$$u(t,x) = \sup_{\tau \in \Lambda(T-t)}\mathbb{E}\Big[e^{-r \tau}\eta \mathbbm{1}_{\{ \tau < S(x) \}} + e^{-r S(x)}\kappa \mathbbm{1}_{\{ \tau \geq S(x) \}} \Big], \quad (t,x) \in [0,T] \times \mathbb{R}_+,$$
one can easily show that 
\begin{itemize}
\item[(1)] $\kappa \mapsto u(t,x;\kappa)$ is nondecreasing,
\item[(2)] $\eta \mapsto u(t,x;\eta)-\eta=\sup_{\tau \in \Lambda(T-t)}\mathbb{E}\Big[\eta \big(e^{-r \tau}\mathbbm{1}_{\{ \tau < S(x) \}}-1\big) + e^{-r S(x)}\kappa \mathbbm{1}_{\{ \tau \geq S(x) \}} \Big]$ is nonincreasing,
\item[(3)] $r \mapsto u(t,x;r)$ is nonincreasing.
\end{itemize}

Moreover, let $\mu_2 > \mu_1$ and denote by $S(x;\mu_2)$ (resp.\ $S(x;\mu_1)$) the hitting time of the origin of the drifted Brownian Motion with drift $\mu_2$ (resp.\ $\mu_1$). Since $S(x;\mu_2)\geq S(x;\mu_1)$ a.s.\ we obtain 
\begin{align*}
&u(t,x;\mu_2)-u(t,x;\mu_1)
\leq\sup_{\tau \in \Lambda (T-t)} \mathbb{E}\Big[e^{-r \tau}\eta \left(\mathbbm{1}_{\{ \tau < S(x;\mu_2) \}}-\mathbbm{1}_{\{ \tau < S(x;\mu_1) \}}\right) \nonumber   \\
&\hspace{2cm} + \kappa \left(e^{-r S(x;\mu_2)} \mathbbm{1}_{\{ \tau \geq S(x,\mu_2) \}}-e^{-r S(x;\mu_1)} \mathbbm{1}_{\{ \tau \geq S(x;\mu_1) \}}\right) \Big] \nonumber \\
&\leq\sup_{\tau \in \Lambda (T-t)} \mathbb{E}\Big[e^{-r \tau}\eta \mathbbm{1}_{\{ S(x,\mu_1) \leq \tau < S(x;\mu_2) \}}  - \kappa e^{-r S(x;\mu_1)} \mathbbm{1}_{\{ S(x,\mu_2)> \tau \geq S(x,\mu_1) \}} \nonumber \\
& \hspace{2cm} + \kappa \mathbbm{1}_{\{ \tau \geq S(x;\mu_2) \}} \left(e^{-r S(x;\mu_2)}-e^{-r S(x;\mu_1)}\right) \Big] \nonumber \\
&=\sup_{\tau \in \Lambda (T-t)} \mathbb{E}\Big[  \mathbbm{1}_{\{ S(x,\mu_1) \leq \tau < S(x;\mu_2) \}} \left( e^{-r \tau}\eta-e^{-r S(x;\mu_1)} \kappa \right) + \mathbbm{1}_{\{ \tau \geq S(x;\mu_2) \}} \left(e^{-r S(x;\mu_2)}-e^{-r S(x;\mu_1)}\right)\Big] \nonumber \\
& \leq 0. \nonumber
\end{align*}

Given the previous monotonicity properties of $u$, we can now prove items (i)-(iv).

\begin{itemize}
\item[(i)]
Taking $\kappa_2 > \kappa_1$ and using (1) and \eqref{Boundary} we have
\[ b(t;\kappa_2):=\inf\{ x >0:u(t,x;\kappa_2) \leq \eta\}\geq \inf\{ x >0:u(t,x;\kappa_1) \leq \eta\}=b(t;\kappa_1). \]
\item[(ii)]
Taking $\eta_2 > \eta_1$ and using (2) and \eqref{Boundary} we have
\[ b(t;\eta_2):=\inf\{ x >0:u(t,x;\eta_2)-\eta_2 \leq 0\}\leq \inf\{ x >0:u(t,x;\eta_1)-\eta_1 \leq 0\}=b(t;\eta_1). \]
\item[(iii)]
Taking $r_2 > r_1$ and using (3) and \eqref{Boundary} we have
\[ b(t;r_2):=\inf\{ x >0:u(t,x;r_2) \leq \eta\}\leq \inf\{ x >0:u(t,x;r_1) \leq \eta\}=b(t;r_1). \]
\item[(iv)] 
Taking $\mu_2 > \mu_1$ and that $u(t,x;\mu_2)-u(t,x;\mu_1) \leq 0$ and \eqref{Boundary} we have
\[ b(t;\mu_2):=\inf\{ x >0:u(t,x;\mu_2) \leq \eta\}\leq \inf\{ x >0:u(t,x;\mu_1) \leq \eta\}=b(t;\mu_1). \]
\end{itemize}
\end{proof}

The last proposition allows us to draw some economic implications. Increasing the parameters $\eta$, $r$, and $\mu$, leads, at each time $t$, to an earlier dividends' distribution. This result is quite intuitive since an higher interest rate $r$ lowers future profits due to discounting, an higher $\eta$ increases the marginal value of dividends, and an higher $\mu$ increases the surplus' trend and lowers the probability of bankruptcy, hence of capital injections. On the other hand, an increase of $\kappa$ postpones the dividends' distribution since capital injections become more expensive, and the fund's manager thus acts in a more cautious way. 

Proving the monotonicity of the free boundary with respect to the surplus' volatility $\sigma$ seems not to be feasible by following the arguments of the proof of Proposition \ref{prop:compstat}. One should then rely on a careful numerical analysis of the dynamic programming equation associated to the optimal dividend problem, and we believe that such a study falls outside the scopes of this work. However, we conjecture that an increase of $\sigma$ should postpone the dividends' distribution. Indeed, the larger $\sigma$ is, the higher becomes the risk of the need of costly capital injections. As a consequence, the fund's manager wants to wait longer before distributing an additional unit of dividends. Such a monotonicity of the free boundary with respect to $\sigma$ has been recently proved by Ferrari in Proposition 4.1 of \cite{Ferrari17} in the case of a stationary optimal dividend problem with capital injections.


\section*{Acknowledgments}
\noindent Financial support by the German Research Foundation (DFG) through the Collaborative Research Centre 1283 ``Taming uncertainty and profiting from randomness and low regularity in analysis, stochastics and their applications'' is gratefully acknowledged. 

Both the authors thank Miryana Grigorova and Hanspeter Schmidli for fruitful discussions and comments. Part of this work has been finalized while the first author was visiting the Department of Mathematics of the University of Padova thanks to the program ``Visiting Scientists 2018''. Giorgio Ferrari acknowledges the Department of Mathematics of the University of Padova for the hospitality. We also wish to thank three anonymous referees for their careful reading and inspiring comments.


\newpage

\appendix

\section{Appendix}

\subsection{Proof of Corollary \ref{smooth fit for u}}

Notice that from \eqref{second representation of u} we can write for any $x>0$ and $t \in [0,T]$

\begin{align}
u(t,x)
&=\mathbb{E}\bigg[ \int_0^{T-t} -f^{\prime}(t+\theta)\mathbbm{1}_{\{x + \mu \theta + \sigma W_{\theta} \geq b(t+\theta)\}} \mathbbm{1}_{\{\theta < S(x)\}} ~d\theta  \nonumber \\
&+ m(t+S(x))\mathbbm{1}_{\{S(x) \leq T-t\}}+g_x(T,A_{T-t}(x)) \bigg] \nonumber \\
&= \int_0^{T-t} -f^{\prime}(t+\theta) \mathbb{P}\big(x + \mu \theta + \sigma W_{\theta} \geq b(t+\theta), S(x) > \theta \big)~d\theta    \label{u in terms of probabilities}\\
&+ \mathbb{E}\big[ m(t+S(x))\mathbbm{1}_{\{S(x) \leq T-t\}}\big] + \mathbb{E}\big[ g_x(T,A_{T-t}(x)) \big], \nonumber
\end{align}
where Fubini's theorem and the fact that $f^{\prime}$ is deterministic has been used for the integral term above.

We now investigate the three summands separately.
By using Proposition 3.2.1.1 in Jeanblanc et al.\ \cite{jeanblanc2009mathematical}, and recalling that the stopping boundary $b$ is strictly positive by Assumption \ref{Assumption on regions and stopping time}, we have
\begin{align}
&\mathbb{P}\bigg(x + \mu \theta + \sigma W_{\theta} \geq b(t+\theta), S(x) > \theta \bigg) \nonumber \\
&=\mathbb{P}\bigg(x + \mu \theta + \sigma W_{\theta} \geq b(t+\theta), \inf_{s \leq \theta} (x + \mu s + \sigma W_{s}) > 0 \bigg) \nonumber \\
&=\mathbb{P}\bigg( \frac{\mu}{\sigma} \theta +  W_{\theta} \geq \frac{b(t+\theta)-x}{\sigma}, \inf_{s \leq \theta} \left( \frac{\mu}{\sigma} s + W_{s} \right) > -\frac{x}{\sigma} \bigg) \label{first density for u} \\
&= \mathcal{N}\bigg( \frac{\frac{x-b(t+\theta)}{\sigma}+\frac{\mu}{\sigma}\theta}{\sqrt{\theta}}\bigg) - e^{-2\frac{\mu x}{\sigma^2} } \mathcal{N}\bigg( \frac{-\frac{b(t+\theta)+x}{\sigma}+\frac{\mu}{\sigma}\theta}{\sqrt{\theta}}\bigg). \nonumber
\end{align}
Here $\mathcal{N}(\,\cdot\,)$ denotes the cumulative distribution function of a standard Gaussian random variable. Note that the last term in \eqref{first density for u} is continuously differentiable with respect to $x$ for any $\theta >0$.

For the second summand in the last expression on the right-hand side of \eqref{u in terms of probabilities} we first rewrite $S(x)$, for $x \geq 0$, as
\begin{align}
S(x)
&= \inf\{s \geq 0: x+ \mu s + \sigma W_s =0\} = \inf\{s \geq 0:  \frac{\mu}{\sigma} s + W_s =-\frac{x}{\sigma}\} \nonumber \\
&\overset{\mathcal{L}}{=} \inf\{s \geq 0:  -\frac{\mu}{\sigma} s + \widehat{W}_s =\frac{x}{\sigma}\} \label{rewritten S(x)}.
\end{align}
where $\widehat{W}$ is a standard Brownian motion. Hence equation $(3.2.3)$ in Jeanblanc et al.\ \cite{jeanblanc2009mathematical} applies and allows us to write the probability density of $S(x)$ as
\begin{equation}
\rho_{S(x)}(u):=\frac{d\mathbb{P}(S(x) \in du)}{du}= \frac{x}{\sigma\sqrt{2 \pi u^3}}e^{-\frac{(\frac{x}{\sigma}+\frac{\mu}{\sigma}u)^2}{2u}}, \quad u \geq 0. \label{density of S(x)}
\end{equation}

For the third summand we notice that the absorbed process $A_{T-t}(x)$ of \eqref{absorbed process} is the drifted Brownian motion started in $x$ and killed at the origin. Denote by $\rho_{A}(t,x,y)$ its transition density of moving from $x$ to $y$ in $t$ units of time. Then, by employing the result of Borodin and Salminen \cite{boroding2002handbook}, Section 15 in Appendix 1 (suitably adjusted to our case with $\sigma \neq 1$), we obtain
\begin{align}
\label{density of the absorbed process}
\rho_{A}\left(T-t,x,y\right)
&:=\frac{d\mathbb{P}(A_{T-t}(x) \in dy)}{dy}
=\frac{1}{\sqrt{2 \pi (T-t)\sigma^2}} \exp\left( -\left(\frac{\mu(x-y)}{\sigma^2}\right)-\frac{\mu^2}{2 \sigma^2}(T-t) \right) \nonumber \\
&\times \left( \exp\left( -\frac{(x-y)^2}{2\sigma^2 (T-t)} \right)- \exp\left( -\frac{(x+y)^2}{2\sigma^2 (T-t)} \right) \right). 
\end{align}

Feeding \eqref{first density for u}, \eqref{density of S(x)} and \eqref{density of the absorbed process} back into \eqref{u in terms of probabilities} we obtain
\begin{align}
u(t,x)
&=\int_0^{T-t} -f^{\prime}(t+\theta) \bigg[ \mathcal{N}\bigg( \frac{\frac{x-b(t+\theta)}{\sigma}+\frac{\mu}{\sigma}\theta}{\sqrt{\theta}}\bigg) - e^{-2\frac{\mu x}{\sigma^2} } \mathcal{N}\bigg( \frac{-\frac{b(t+\theta)+x}{\sigma}+\frac{\mu}{\sigma}\theta}{\sqrt{\theta}}\bigg) \bigg]~d\theta \nonumber \\
& + \int_0^{T-t} m(t+u)\rho_{S(x)}(u)~du \label{u in form of densities} +\int_0^{\infty} g_x(T,y) \rho_{A}\left(T-t,x,y\right) ~dy, 
\end{align}
and it is easy to see by the dominated convergence theorem that $x \mapsto u(t,x)$ is continuously differentiable on $(0,\infty)$ for any $t<T$.

\subsection{Proof of Lemma \ref{Lemma for Ito for N}}

By \eqref{integration of u with N} and Corollary \ref{smooth fit for u} the function $N$ of \eqref{definition of N} is twice-continuously differentiable with respect to $x$ on $(0,\infty)$.
To show that $N$ is also continuously differentiable with respect to $t$ on $[0,T)$ we express the expected value on the right-hand side of \eqref{definition of N} as an integral with respect to the probability densities of the involved processes. We thus start computing the transition density of the reflected Brownian motion $R$ of \eqref{RBM 2}, which we call $\rho_{R}$.
By Appendix 1, Chapter 14, in Borodin and Salminen \cite{boroding2002handbook} (easily adapted to our case with $\sigma \neq 1$) we have
\begin{align}
\rho_{R}(u,x,y)
&:=\frac{d\mathbb{P}(R_u(x) \in dy)}{dy}
=\frac{1}{\sqrt{2 \pi u \sigma^2}} \exp\left( -\frac{\mu}{\sigma}\left(\frac{x-y}{\sigma}\right)-\frac{\mu^2}{2 \sigma^2} u \right) \times \nonumber \\
&\left( \exp\left( -\frac{(x-y)^2}{2\sigma^2 u} \right)- \exp\left( -\frac{(x+y)^2}{2\sigma^2 u} \right) \right) - \frac{\mu}{2 \sigma} \text{Erfc}\left(\frac{x+y+\mu u}{\sqrt{2\sigma ^2 u}}\right),
\end{align}
where Erfc$(x):=\int_{-\infty}^x \frac{1}{\sqrt{2\pi}} e^{-\frac{y^2}{2}}~dy$ for $x \in \mathbb{R}$.
Hence, by using Fubini's Theorem, \eqref{definition of N} reads as

\begin{align}
N(t,x)
&=\mathbb{E}\bigg[ -\int_0^{T-t} \left(R_{s}(x)-b(t+s)\right)^+ f^{\prime}(t+s)~ds - \int_0^{T-t} m(t+s)~dI^0_{s}(x) \nonumber \\
&+ g(T,R_{T-t}(x)) \bigg]= -\int_t^{T} \mathbb{E}\Big[\left(R_{u-t}(x)-b(u)\right)^+\Big] f^{\prime}(u)~du \nonumber \\
&- \mathbb{E}\bigg[\int_0^{T-t} m(t+s)~dI^0_{s}(x)\bigg] + \mathbb{E}\Big[g(T,R_{T-t}(x)) \Big] \nonumber \\
&= -\int_t^{T} \bigg(\int_0^{\infty} \left(y-b(u)\right)^+ \rho_{R}(u-t,x,y)~dy\bigg) f^{\prime}(u)~du - \mathbb{E}\bigg[\int_t^{T} m(u)~dI^0_{u-t}(x)\bigg] \label{N rewritten}  \\
&+ \int_0^{\infty}g(T,y)\rho_{R}(T-t,x,y) ~dy.  \nonumber
\end{align}

Recalling that $m$ is continuously differentiable by Assumption \ref{basic assumptions} and using an integration by parts, we can write
\begin{align}
\mathbb{E}\bigg[\int_t^{T} m(u)~dI^0_{u-t}(x)\bigg]
&=\mathbb{E}\bigg[m(T) I^0_{T-t}(x) -\int_t^{T} I^0_{u-t}(x) m^{\prime}(u) ~du\bigg] \nonumber \\
&=m(T) \mathbb{E}\big[I^0_{T-t}(x)\big] -\int_t^{T} \mathbb{E}\big[I^0_{u-t}(x)\big] m^{\prime}(u) ~du \nonumber \\
&=m(T) \mathbb{E}\big[0 \vee (\sigma \xi_{T-t}-x)\big] -\int_t^{T} \mathbb{E}\big[0 \vee (\sigma \xi_{u-t}-x)\big] m^{\prime}(u) ~du \nonumber, 
\end{align}
where we have used that $I^0_{s}(x)=0 \vee (\sigma \xi_s-x)$ with $\xi_s:=\sup_{\theta \leq s}( - \frac{\mu}{\sigma} \theta - W_{\theta})$.
Since (cf.\ Chapter 3.2.2 in Jeanblanc et al.\ \cite{jeanblanc2009mathematical})
\begin{equation}
\mathbb{P} \left( \xi_s \leq z \right)= \mathcal{N}\left( \frac{z-\frac{\mu}{\sigma}s}{\sqrt{s}} \right)-\exp \left( 2 \frac{\mu}{\sigma} z \right) \mathcal{N} \left( \frac{-z-\frac{\mu}{\sigma}s}{\sqrt{s}} \right),  \label{Law of running supremum}
\end{equation} 
we get
\begin{align}
\mathbb{E}\bigg[0 \vee (\sigma \xi_{u-t}-x)\bigg]
&=\int_{\frac{x}{\sigma}}^{\infty}(\sigma z -x)\rho_{\xi}(u-t,z)~dz,
\end{align}
where we have defined $\rho_\xi(s,z):= \frac{d\mathbb{P}(\xi_s \leq z)}{dz}$.
Because $\rho_{\xi}(\cdot,z)$ and $\rho_{R}(\cdot,x,y)$ are continuously differentiable on $(0,T]$, it follows that $N(t,x)$ as in \eqref{N rewritten} is continuously differentiable with respect to $t$, for any $t<T$. The continuity of $N$ on $[0,T] \times \mathbb{R}_+$ also follows from the previous equations.

\subsection{Proof of Proposition \ref{prop:lsc}}

Let $(t,x) \in [0,T)\times (0,\infty)$ be given and fixed, and take any sequence $(t_n,x_n) \subset [0,T)\times (0,\infty)$ such that $(t_n,x_n) \to (t,x)$. Then, let $\tau^{\star}:=\tau^{\star}(t,x)$ be the optimal stopping time for $u(t,x)$ of \eqref{OST-casestudy-2}. From \eqref{OST example} and the fact that $\tau^{\star}\leq T-t$ a.s.\ we then find 
{\allowdisplaybreaks
\begin{align}
u(t,x)-u(t_n,x_n)
&\leq \mathbb{E}\left[ \eta e^{-r\tau^{\star}} \mathbbm{1}_{\{ \tau^{\star} < S(x) \}} + \kappa e^{-rS(x)} \mathbbm{1}_{\{ \tau^{\star} \geq S(x) \}} \right. \nonumber \\
&\left. -\eta e^{-r (\tau^{\star}\wedge (T-t_n))} \mathbbm{1}_{\{ \tau^{\star}\wedge (T-t_n) < S(x_n) \}} - \kappa e^{-rS(x_n)} \mathbbm{1}_{\{ \tau^{\star}\wedge (T-t_n) \geq S(x_n) \}} \right]  \nonumber \\
&= \mathbb{E} \left[ \mathbbm{1}_{\{ \tau^{\star} \leq T-t_n \}} \bigg\{ \eta e^{-r \tau^{\star}}\left(\mathbbm{1}_{\{ \tau^{\star} \geq S(x_n) \}}- \mathbbm{1}_{\{ \tau^{\star} \geq  S(x) \}} \right) \right. \nonumber \\
&+\left. \kappa \left( e^{-rS(x)} \mathbbm{1}_{\{ \tau^{\star} \geq S(x)\}}- e^{-rS(x_n)} \mathbbm{1}_{\{ \tau^{\star} \geq S(x_n) \}} \right) \bigg\}  \right] \nonumber \\
&+\mathbb{E} \left[ \mathbbm{1}_{\{ \tau^{\star} > T-t_n \}} \bigg\{ \eta e^{-r \tau^{\star}} \mathbbm{1}_{\{ \tau^{\star} < S(x) \}}- \eta e^{-r (T-t_n)} \mathbbm{1}_{\{ T-t_n <  S(x_n) \}}  \right. \nonumber \\
&+\left. \kappa \left( e^{-rS(x)} \mathbbm{1}_{\{ \tau^{\star} \geq S(x)\}}- e^{-rS(x_n)} \mathbbm{1}_{\{ T-t_n \geq S(x_n) \}} \right) \bigg\} \right] \nonumber \\
&\leq \mathbb{E} \left[ \mathbbm{1}_{\{ \tau^{\star} \leq T-t_n \}} \bigg\{ \eta e^{-r \tau^{\star}}\mathbbm{1}_{\{ S(x_n) \leq \tau^{\star} < S(x) \}} \right. \nonumber \\
&+\left. \kappa \left( \left|e^{-rS(x)} - e^{-rS(x_n)}\right| \mathbbm{1}_{\{ \tau^{\star} \geq S(x_n) \vee S(x_n) \}} + e^{-rS(x)} \mathbbm{1}_{\{S(x_n) > \tau^{\star} \geq S(x)\}} \right) \bigg\}  \right] \nonumber \\
&+\mathbb{E} \left[ \mathbbm{1}_{\{ \tau^{\star} > T-t_n \}} \bigg\{ \eta e^{-r (T-t_n)} \left(\mathbbm{1}_{\{ T-t_n < S(x) \}}-  \mathbbm{1}_{\{ T-t_n <  S(x_n) \}} \right) \right. \nonumber \\
&+ \kappa \mathbbm{1}_{\{ T-t>S(x) \}} \left( e^{-rS(x)} \mathbbm{1}_{\{ \tau^{\star} \geq S(x)\}}- e^{-rS(x_n)} \mathbbm{1}_{\{ T-t_n \geq S(x_n) \}} \right)\nonumber \\
&+ \kappa \mathbbm{1}_{\{ T-t=S(x) \}} \left( e^{-rS(x)} \mathbbm{1}_{\{ \tau^{\star} \geq S(x)\}}- e^{-rS(x_n)} \mathbbm{1}_{\{ T-t_n \geq S(x_n) \}} \right) \nonumber \\
&\left.+ \kappa \mathbbm{1}_{\{ T-t<S(x) \}} \left( e^{-rS(x)} \mathbbm{1}_{\{ \tau^{\star} \geq S(x)\}}- e^{-rS(x_n)} \mathbbm{1}_{\{ T-t_n \geq S(x_n) \}} \right)\bigg\} \right] \nonumber \\
&\leq \mathbb{E} \left[   \eta e^{-r \tau^{\star}}\mathbbm{1}_{\{ S(x_n) \leq \tau^{\star} < S(x) \}} \right. \nonumber +\left. \kappa \left( \left|e^{-rS(x)} - e^{-rS(x_n)}\right| + \mathbbm{1}_{\{S(x_n) > \tau^{\star} \geq S(x)\}} \right)   \right] \nonumber \\
&+\mathbb{E} \left[ \mathbbm{1}_{\{ \tau^{\star} > T-t_n \}} \bigg\{ \eta e^{-r (T-t_n)} \mathbbm{1}_{\{ S(x_n) \leq T-t_n < S(x) \}} \right. \nonumber \\
&+ \kappa \mathbbm{1}_{\{ T-t>S(x) \}} \left( e^{-rS(x)} \mathbbm{1}_{\{ T-t \geq S(x)\}}- e^{-rS(x_n)} \mathbbm{1}_{\{ T-t_n \geq S(x_n) \}} \right)\nonumber \\
&+ \kappa \mathbbm{1}_{\{ T-t=S(x) \}} \left.+ \kappa \mathbbm{1}_{\{ T-t<S(x) \}}  \mathbbm{1}_{\{ \tau^{\star} \geq S(x)\}}\bigg\} \right]. \nonumber 
\end{align} 
Rearranging terms and taking limit inferior as $n \uparrow \infty$ on both sides one obtains
\begin{align}
\underline{\lim}_{n\rightarrow\infty} u(t_n,x_n) 
&\geq u(t,x) - \overline{\lim}_{n\rightarrow\infty} \mathbb{E} \bigg[  \eta e^{-r \tau^{\star}}\mathbbm{1}_{\{ S(x_n) \leq \tau^{\star} < S(x) \}}  \nonumber \\
&+ \kappa \left( \left|e^{-rS(x)} - e^{-rS(x_n)}\right|  +  \mathbbm{1}_{\{S(x_n) > \tau^{\star} \geq S(x)\}} \right)  \bigg] \nonumber \\
&- \overline{\lim}_{n\rightarrow\infty} \mathbb{E} \bigg[ \mathbbm{1}_{\{ \tau^{\star} > T-t_n \}} \bigg\{ \eta e^{-r (T-t_n)} \mathbbm{1}_{\{ S(x_n) \leq T-t_n < S(x) \}}  \nonumber \\
&+ \kappa \mathbbm{1}_{\{ T-t>S(x) \}} \left( e^{-rS(x)} \mathbbm{1}_{\{ T-t \geq S(x)\}}- e^{-rS(x_n)} \mathbbm{1}_{\{ T-t_n \geq S(x_n) \}} \right)\nonumber \\
&+ \kappa \mathbbm{1}_{\{ T-t=S(x) \}} + \kappa \mathbbm{1}_{\{S(x) \leq \tau^{\star} \leq T-t <S(x) \}} \bigg\}  \bigg]  \nonumber \\
&\geq u(t,x) -  \mathbb{E}\left[ \kappa \mathbbm{1}_{\{S(x) = \tau^{\star} \}}  \right] - \mathbb{E} \bigg[   \eta e^{-r (T-t)} \mathbbm{1}_{\{ T-t = S(x) \}} + \kappa \mathbbm{1}_{\{ T-t=S(x) \}}  \bigg] \nonumber \\
&= u(t,x) - \kappa \mathbb{P}\left(\tau^{\star} = S(x)\right) - \left(\eta e^{-r(T-t)} + \kappa\right) \mathbb{P}\left(T-t=S(x)\right). \nonumber
\end{align}}
The last inequality follows by interchanging expectations and limits by the dominated convergence theorem, using that $S(x_n) \to S(x)$, carefully investigating the involved limits superior, and observing that $\{\tau^{\star} \geq T-t \}=\{ \tau^{\star} = T-t \}$ since $\tau^{\star} \in \Lambda(T-t)$. 

Using now that $\{T-t=S(x)\}$ is a $\mathbb{P}$-null set by \eqref{density of S(x)}, and the fact that $\mathbb{P}\left(\tau^{\star} = S(x)\right)=0$ since the free boundary is strictly positive on $[0,T)$, we then obtain 
\begin{equation}
\underline{\lim}_{n\rightarrow\infty} u(t_n,x_n) \geq u(t,x),
\end{equation}
which proves the claimed lower semicontinuity of $u$ on $[0,T) \times (0,\infty)$.

\subsection{Lemma \ref{Lemma S(z) smaller T}}

\begin{lemma}\label{Lemma S(z) smaller T}
Recall that (cf.\ \eqref{z in Case 3})
\[z=\inf\left\{y \in [0,b(0)]: \tau^{\star}(0,y)<S(y)\right\}.\]
Then it holds that
\begin{equation}
S(z)\leq T \quad a.s.
\end{equation}
\end{lemma}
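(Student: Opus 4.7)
The plan is to exploit the monotonicity of both $\tau^{\star}(0,\cdot)$ and $S(\cdot)$ as functions of the starting point (both are trivially monotone for fixed $\omega$), and to combine it with left-continuity of $y\mapsto S(y)$ at $z$. The key observation is that values of $y$ slightly below $z$ automatically satisfy $S(y)\le T$, because for such $y$ the hitting time of zero precedes the hitting time of the free boundary, which itself is truncated at $T$.

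The first step is to read off the inequality $\tau^{\star}(0,y)\ge S(y)$ for every $y\in[0,z)$ directly from the definition of $z$ as an infimum: indeed, such a $y$ does not belong to $\{q\in[0,b(0)]:\tau^{\star}(0,q)<S(q)\}$. Combining this with the standing bound $\tau^{\star}(0,y)\le T$ (which is built into \eqref{optimal stopping time} with $t=0$), one obtains $S(y)\le T$ for all $y\in[0,z)$.

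The second step is to justify that, almost surely, $y\mapsto S(y)$ is left-continuous. Since $y\mapsto S(y)$ is evidently non-decreasing, it suffices to show that the left-limit equals the value. Fix a Brownian path and $y_0>0$ with $S(y_0)<\infty$; for any $\varepsilon\in(0,S(y_0))$, the continuous function $s\mapsto y_0+\mu s+\sigma W_s$ is strictly positive on the compact interval $[0,S(y_0)-\varepsilon]$, hence attains a positive minimum $\delta>0$ there. For $y\in(y_0-\delta,y_0]$ the same process started at $y$ remains strictly positive on $[0,S(y_0)-\varepsilon]$, forcing $S(y)\ge S(y_0)-\varepsilon$. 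Sending $\varepsilon\downarrow 0$ yields $\lim_{y\uparrow y_0}S(y)=S(y_0)$.

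The third step is the conclusion: if $z=0$ then $S(z)=0\le T$ trivially; if $z>0$, then by step one $S(y)\le T$ for all $y\in[0,z)$, and by left-continuity from step two
\[
S(z)=\lim_{y\uparrow z}S(y)\le T\quad \text{a.s.}
\]
The only non-routine point is the left-continuity of $y\mapsto S(y)$ at $z$, and even this is a standard consequence of the continuity of the drifted Brownian motion's paths, so no genuine obstacle is expected.
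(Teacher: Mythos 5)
Your proof is correct and follows essentially the same route as the paper's: both arguments combine (a) the definitional fact that every $y<z$ satisfies $\tau^{\star}(0,y)\geq S(y)$, hence $S(y)\leq\tau^{\star}(0,y)\leq T$, with (b) the observation that the path $s\mapsto z+\mu s+\sigma W_s$ attains a strictly positive minimum on a compact time interval, so the starting point can be lowered slightly without losing positivity; the paper packages (b) as a contradiction at $\widehat{z}=z-\ell/2$, you package it as left-continuity of $y\mapsto S(y)$. One small loose end: your Step 2 proves left-continuity only at points with $S(y_0)<\infty$, and when $\mu>0$ you do not know a priori that $S(z)<\infty$, so invoking $S(z)=\lim_{y\uparrow z}S(y)$ at $z$ is not yet justified on the event $\{S(z)=\infty\}$. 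The fix is exactly your own minimum argument run on $[0,T]$ instead of $[0,S(y_0)-\varepsilon]$: if $S(z)>T$ (finite or not), the path from $z$ has a positive minimum $\delta$ on $[0,T]$, whence $S(y)>T$ for all $y\in(z-\delta,z]$, contradicting Step 1 --- which is precisely the paper's formulation.
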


\begin{proof}
In order to simplify exposition, in the following we shall stress the dependence on $\omega$ only when strictly necessary. Suppose that there exists a set $\Omega_0 \subset \Omega$ s.t.\ $\mathbb{P}(\Omega_0) >0$, and that for any $\omega \in \Omega_0$ we have $S(z)>T$. Then take $\omega_0 \in \Omega_0$, recall that $Z_s(x)=x + \mu s + \sigma W_s$ for any $x>0$ and $s\geq 0$, and notice that $\min_{0 \leq s \leq T} Z_s(z;\omega_0)=\ell:=\ell(\omega_0)>0$. Then, defining $\widehat{z}(\omega_o):=\widehat{z}=z - \frac{\ell}{2}$, one has
\begin{equation*}
\min_{0 \leq s \leq T}Z_s(\widehat{z};\omega_0)=\min_{0 \leq s \leq T}\left(z + \mu s +\sigma W_s(\omega_0)-\frac{\ell}{2}\right)=\ell-\frac{\ell}{2} =\frac{\ell}{2} >0.
\end{equation*}
Hence, $S(\widehat{z})>T \geq \tau^{\star}(0,\widehat{z})$, but this contradicts the definition of $z$ since $\widehat{z}<z$. Therefore we conclude that $S(z)\leq T$ a.s.
\end{proof}

\end{document}